\newcommand{\diagscale}{1}
\newcommand{\M}{\mathbb{M}}
\newcommand{\N}{\mathbb{N}}
\newcommand{\R}{\mathbb{R}}
\newcommand{\Z}[1][]{\ifthenelse{\equal{#1}{}}{\mathbb{Z}}
                                              {\mathbb{Z}/#1\mathbb{Z}}}
\newcommand{\C}{\mathbb{C}}
\newcommand{\K}{\mathbbm{k}}
\newcommand{\s}{\mathscr}
\newcommand{\wh}{\widehat}
\newcommand{\tensor}{\otimes}
\newcommand{\cwedge}{\curlywedge}
\newcommand{\cvee}{\curlyvee}
\renewcommand{\to}[1][]{\stackrel{#1}{\longrightarrow}}
\newcommand{\hto}[1][]{\stackrel{#1}{\longhookrightarrow}}
\newcommand{\hot}[1][]{\stackrel{#1}{\longhookleftarrow}}
\newcommand{\To}[1][]{\stackrel{#1}{\Longrightarrow}}
\newcommand{\ot}[1][]{\stackrel{#1}{\longleftarrow}}
\renewcommand{\mapsto}{\longmapsto}
\newcommand{\codom}{\text{codom }}
\newcommand{\dom}{\text{dom }}
\newcommand{\id}{\text{id}}
\newcommand{\br}[1]{\left( #1 \right)}
\newcommand{\curly}[1]{\left\{ #1 \right\}}
\newcommand{\set}[2][]{\ifthenelse{\equal{#1}{}}
                                  {\curly{#2}}
                                  {\curly{#1\ \textbf{:}\ #2}}}
\newcommand{\sbr}[1]{\left[ #1 \right]}
\newcommand{\bmat}[1]{\begin{bmatrix} #1 \end{bmatrix}}
\newcommand{\Aut}{\text{Aut}}
\newcommand{\ket}[1]{\left| #1 \right\rangle}
\newcommand{\GL}{\text{GL}}
\newcommand{\Lin}{\textrm{Lin}}
\newcommand{\Set}{\textbf{Set}}
\newcommand{\Cob}{\textbf{Cob}}
\newcommand{\CCob}{\mathbb{C}\textbf{ob}}
\newcommand{\Man}{\textbf{Man}}
\newcommand{\Vect}{\textbf{Vect}}
\newcommand{\FVect}{\textbf{FVect}}
\newcommand{\FFVect}{\mathbb{F}\textbf{Vect}}
\newcommand{\Thick}{\textbf{Thick}}
\newcommand{\CPS}[1][]{\text{CP}^*\ifthenelse{\equal{#1}{}}{}{\sbr{#1}}}
\newcommand{\Hom}{\text{Hom}}
\newcommand{\Ob}{\textbf{ob }}
\newcommand{\DThick}{2\mathbb{T}\mathbf{hick}}
\newcommand{\PTT}{\mathcal{PTT}}
\newcommand{\Diff}{\text{Diff}}
\newcommand{\Bun}{\textbf{Bun}}
\newcommand{\BBun}{\mathbb{B}\textbf{un}}
\newcommand{\Conn}{\mathfrak{Conn}}
\newcommand{\TG}{\mathfrak{TG}}
\newcommand{\Exp}[1]{\mathfrak{E}^{#1}}
\newcommand{\Cinf}{C^{\infty}}
\newcommand{\CConn}{\mathbb{C}\mathfrak{onn}}
\newcommand{\li}[1][]{\ifthenelse{\equal{#1}{}}{\item}{\item \label{#1}}}
\newenvironment{enmrt}{
  \enumerate[(i)]
  \setlength{\itemsep}{0pt}
}{
  \endenumerate
}
\newcommand{\pants}[1]{
\coordinate (center) at (#1);
\coordinate (ld) at ($(center) + (-2, -3)$);
\coordinate (lu) at ($(ld) + (0, 6)$);
\draw (ld) -- ($(ld) + (0, 2)$);
\draw (lu) -- ($(lu) + (0, -2)$);
\draw ($(ld) + (4, 2)$) -- ($(ld) + (4, 4)$);
\draw (ld)
   .. controls ($(ld) + (2, 0)$) and ($(ld) + (2, 2)$)
   .. ($(ld) + (4, 2)$);
\draw (lu)
   .. controls ($(lu) + (2, 0)$) and ($(lu) + (2, -2)$)
   .. ($(lu) + (4, -2)$);
\draw ($(ld) + (0, 4)$) arc(90:-90:1);
}
\newcommand{\copants}[1]{
\coordinate (copantscenter) at (#1);
\begin{scope}[rotate around={180:(copantscenter)}]
\pants{copantscenter}
\end{scope}
}
\newcommand{\idcob}[1]{
\coordinate (center) at (#1);
\coordinate (lu) at ($(center) + (-2, 1)$);
\coordinate (ld) at ($(center) + (-2, -1)$);
\coordinate (ru) at ($(center) + (2, 1)$);
\coordinate (rd) at ($(center) + (2, -1)$);
\draw (lu) -- (ru) -- (rd) -- (ld) -- cycle;
}
\newcommand{\idcobup}[1]{
\coordinate (center) at (#1);
\coordinate (ld) at ($(center) + (-2, -3)$);
\coordinate (ldu) at ($(ld) + (0, 2)$);
\draw (ld) -- (ldu);
\draw ($(ld) + (4, 2)$) -- ($(ld) + (4, 4)$);
\draw (ld)
   .. controls ($(ld) + (2, 0)$) and ($(ld) + (2, 2)$)
   .. ($(ld) + (4, 2)$);
\draw (ldu)
   .. controls ($(ldu) + (2, 0)$) and ($(ldu) + (2, 2)$)
   .. ($(ldu) + (4, 2)$);
}
\newcommand{\cupcob}[1]{
\coordinate (center) at (#1);
\coordinate (ru) at ($(center) + (2, 1)$);
\draw (ru) arc(90:270:1) -- cycle;
}
\newcommand{\capcob}[1]{
\coordinate (center) at (#1);
\coordinate (lu) at ($(center) + (-2, 1)$);
\draw (lu) arc(90:-90:1) -- cycle;
}
\newcommand{\midarrow}[3][0.5]{
\coordinate (s) at (#2);
\coordinate (t) at (#3);
\coordinate (midarrowmid) at ($(s) + #1*(t) - #1*(s)$);
\draw[->] (#2)          -- (midarrowmid);
\draw     (midarrowmid) -- (#3);
}
\newcommand{\arrowIn}{\tikz \draw[->] (-1pt, 0) -- (1pt, 0);}
\newcommand{\midarrowc}[5][0.5]{
\coordinate (s) at (#2);
\coordinate (c1) at (#3);
\coordinate (c2) at (#4);
\coordinate (t) at (#5);
\draw (s) .. controls (c1) and (c2) .. (t)
  node[sloped, pos=#1, allow upside down]{\arrowIn};
}
\newcommand{\vertinnersep}{1pt}
\newcommand{\colvert}[3]{
\node[circle, fill, inner sep=\vertinnersep, #1] at (#2) (#3) {};
}
\newcommand{\lblvert}[3]{
\node at (#1) (#2) {#3};
}
\newcommand{\ppants}[1]{
\coordinate (center) at (#1);
\coordinate (ld) at ($(center) + (-2, -3)$);
\coordinate (lu) at ($(ld) + (0, 6)$);
\draw (ld) -- ($(ld) + (0, 2)$);
\draw (lu) -- ($(lu) + (0, -2)$);
\draw ($(ld) + (4, 2)$) -- ($(ld) + (4, 4)$);
\draw (ld)
   .. controls ($(ld) + (2, 0)$) and ($(ld) + (2, 2)$)
   .. ($(ld) + (4, 2)$);
\draw (lu)
   .. controls ($(lu) + (2, 0)$) and ($(lu) + (2, -2)$)
   .. ($(lu) + (4, -2)$);
\draw ($(ld) + (0, 4)$) arc(90:-90:1);
\colvert{black}{$(lu) + (0, -1)$}{s1}
\colvert{black}{$(ld) + (0, 1)$}{s2}
\colvert{black}{$(lu) + (4, -3)$}{t}
\midarrowc{s2}{$(s2) + (2, 0)$}{$(s2) + (2, 2)$}{t}
\midarrowc{s1}{$(s1) + (2, 0)$}{$(s1) + (2, -2)$}{t}
}
\newcommand{\pcopants}[1]{
\coordinate (copantscenter) at (#1);
\begin{scope}[rotate around={180:(copantscenter)}]
\ppants{copantscenter}
\end{scope}
}
\newcommand{\pidcobup}[1]{
\coordinate (center) at (#1);
\coordinate (ld) at ($(center) + (-2, -3)$);
\coordinate (ldu) at ($(ld) + (0, 2)$);
\draw (ld) -- (ldu);
\draw ($(ld) + (4, 2)$) -- ($(ld) + (4, 4)$);
\draw (ld)
   .. controls ($(ld) + (2, 0)$) and ($(ld) + (2, 2)$)
   .. ($(ld) + (4, 2)$);
\draw (ldu)
   .. controls ($(ldu) + (2, 0)$) and ($(ldu) + (2, 2)$)
   .. ($(ldu) + (4, 2)$);
\colvert{black}{$(ld) + (0, 1)$}{s}
\colvert{black}{$(ld) + (4, 3)$}{t}
\midarrowc{s}{$(s) + (2, 0)$}{$(s) + (2, 2)$}{t}
}
\newtheorem{thm}{Theorem}
\numberwithin{thm}{section}
\newtheorem{lem}[thm]{Lemma}
\newtheorem{cor}[thm]{Corollary}
\theoremstyle{definition}
\newtheorem{defn}[thm]{Definition}
\newtheorem{rmk}[thm]{Remark}
\newtheorem{alg}[thm]{Algorithm}
\newtheorem{exm}[thm]{Example}
\newtheorem{qstn}[thm]{Question}
\begin{document}

\title[TQFTs and quantum computing]{TQFTs and quantum computing}
\author{Mahmud Azam and Steven Rayan}
\date{\today}

\address{Centre for Quantum Topology and Its Applications (quanTA) and Department of Mathematics \& Statistics, University of Saskatchewan, McLean Hall, 106 Wiggins Road, Saskatoon, SK, CANADA S7N 5E6}
\email{mahmud.azam@usask.ca, rayan@math.usask.ca}

\begin{abstract}
Quantum computing is captured in the formalism of the monoidal subcategory of
$\Vect_{\C}$ generated by $\C^2$ --- in particular, quantum circuits are diagrams
in $\Vect_{\C}$ --- while topological quantum field theories, in the sense of
Atiyah, are diagrams in $\Vect_{\C}$ indexed by cobordisms. We initiate a program
that formalizes this connection. In doing so, we equip cobordisms with
machinery for producing linear maps by parallel transport along curves under a
connection and then assemble these structures into a double category. Finite-dimensional complex vector spaces and linear maps between them are given a
suitable double categorical structure which we call $\FFVect_{\C}$. We
realize quantum circuits as images of cobordisms under monoidal double functors
from these modified cobordisms to $\FFVect_{\C}$, which are computed by taking
parallel transports of vectors and then combining the results in a pattern
encoded in the domain double category.
\end{abstract}

\maketitle

\tableofcontents


\section{Introduction}

Perhaps unsurprisingly, quantum field theories and quantum information enjoy natural points of intersection as two sides of modern quantum theory.  However, the essential purposes and formalisms inherent to these subjects are rather different, and many of these observed intersections are coincidental or speculative in nature.  Here, we capitalize on shared aspects of the categorical frameworks associated to the two theories in order make efforts to close the gaps between them.  To be precise, we consider topological quantum field theories (TQFTs) of thick tangled type.  On the quantum information side, we impose no restrictions.

The monoidal category $2\Thick$ of thick tangles is the monoidal category freely
generated by the composition of the following morphisms \cite{NonCommTQFT}:
\[\begin{tikzpicture}[scale=0.25]
\pants{0, 0}
\node at (0, -5) {\footnotesize Pair-of-pants};
\capcob{12, 0}
\node at (10.5, -5) {\footnotesize Cap};
\idcob{20, 0}
\node at (20, -5) {\footnotesize Cylinder};
\cupcob{28, 0}
\node at (29.5, -5) {\footnotesize Cup};
\copants{40, 0}
\node at (40, -5) {\footnotesize Co-pair-of-pants};
\end{tikzpicture}\]
Consider a planar open string topological quantum field theory
$F : 2\Thick \to \Vect_{\C}$ as defined in \cite{NonCommTQFT}. $F$ is
determined by the images of the above generating structures under $F$. One
possible to way to connect quantum information to topological quantum
field theories is to assume that the image of the interval $I$ under $F$
is some finite dimensional $C^*$--algebra of operators on some Hilbert space of
states. Coecke, Heunen, and Kissinger \cite{CatQChan} have shown that every such
algebra is, in fact, a dagger Frobenius algebra, with some additional structure,
so that this assumption on $F$ is valid. In fact, every finite dimensional
$C^*$--algebra arises as the image of the interval under such a planar open
string field theory.

It is well-known that finite dimensional $C^*$--algebras, up to
$*$--isomorphism, can be realized as finite direct sums of square matrix algebras $\bigoplus_i
\M_{n_i}$ where $\M_n$ is the set of $n \times n$ matrices with complex entries
equipped with the usual multiplication. For simplicity, we first assume that
$F(I)$ is the $C^*$--algebra $\M_n$. Then, we can consider quantum gates to be
elements of $\M_n$ and circuits to be composites (products) of these
elements. While $\M_n$ has all gates necessary for quantum computing for a finite quantum register, there is a major constraint on the elements of $\M_n$
that are in the image of $F$, as we describe below.

Elements $a \in \M_n$ can be seen as maps $\C \to \M_n : z \mapsto za$. The
elements that are accounted for by $F$ are images of thick tangles
$\varnothing \to I$. However, these ``element'' thick tangles are determined by
their genus, since in $2\Thick$ we identify morphisms up to diffeomorphism, so
that all element thick tangles must decompose as follows:
\[\begin{tikzpicture}[scale=0.25]
\cupcob{0, 0}
\copants{4, 0}
\pants{8, 0}
\copants{12, 0}
\pants{16, 0}
\draw[loosely dotted, thick] (18.5, 0) -- (21.5, 0);
\copants{24, 0}
\pants{28, 0}
\end{tikzpicture}\]
where we take the domain of the thick tangle to be on the left and the codomain,
on the right. Call this thick tangle $R$. Let $m : \M_n \tensor \M_n \to \M_n$
be the multiplication of $\M_n$.  Then the multiplicative unit is
$e : \C \to \M_n : z \mapsto zI_n$ and, by the definition of dagger Frobenius
algebra, the comultiplication of $\M_n$ is $m^{\dagger}$. Hence, under $F$, the
element thick tangle above yields a map $(mm^{\dagger})^k e$, where
the superscript $k$ denotes a $k$--fold composite for some non-negative integer
$k$. Now, $mm^{\dagger}$ is the map $a \mapsto na$
\cite[p. 5189 --- 5190]{CatQChan} so that
\[
  F(R)(z) = z \cdot F(R)(1) = z \cdot (mm^{\dagger})^ke(1) = zn^{k}I_n
\]

This shows that the only elements, in the usual sense of elements of $\M_n$,
that are accessible through $F$ are multiples of the identity matrix by powers
of $n$. It is then easy to see that for direct sums such as
$\bigoplus_{i = 1}^{q} \M_{n_i}$, the only accessible elements are
\[
  (n_1^k I_{n_1}, \dots, n_q^k I_{n_q})
\]
so that up to $*$--isomorphism, there are some major constraints on the quantum
gates that are accessible through planar open string field theories.
We note that this situation is brought about by the identification of thick
tangles up to diffeomorphisms. This motivates us to look for methods in a
setting where we drop this identification --- higher categories of thick tangles
(for instance, $\PTT$ as defined in \cite{NonSemiSimp}) (or cobordisms)
where gluing is associative and unital up to higher isomorphisms. We will
see that it suffices to consider double categories for obtaining a reasonable
method for formulating quantum information in terms of topological field
theories.

Recall that a monoidal double category consists of a $1$--category of objects
and a $1$--category of morphisms with source, target and unit functors, a notion
of horizontal composition of morphisms, and a monoidal structure on the object
and morphism categories. In addition, horizontal composition and monoidal
products need to be associative and unital up to isomorphism with several
coherence and compatibility properties \cite{SymMonBicat}. In this work,
however, by ``monoidal double category'' we will mean only the data of such a
structure. Nevertheless, wherever possible, we have commented on how the data of
our constructions inherit most of the necessary properties from the usual
categories of sets, manifolds, vector spaces, and so on. Thus, we will construct
several monoidal double categories in this work but they should be seen as
monoidal double categories in a somewhat relaxed sense --- they consist of all
the required data but satisfy the required axioms with a few possible
exceptions. We will treat monoidal double functors in the same loose sense.

\subsubsection*{Acknowledgements.}

The second-named author is partially supported by a Natural Sciences and Engineering Research Council of Canada (NSERC) Discovery Grant, a Canadian Tri-Agency New Frontiers in Research (Exploration Stream) Grant, and a Pacific Institute for the Mathematical Sciences (PIMS) Collaborative Research Group (CRG) Award. The first-named author was supported by an NSERC Undergraduate Student Research (USRA) Award and the funding of the second-named author. The code for figures \ref{fig:geomegraph} and \ref{fig:cobaddexm} were generated with the help of the Mathcha editor (mathcha.io).


\section{A Double Categorical Approach}

It is well known that taking $d$--dimensional manifolds without boundary as
objects, diffeomorphisms between them as vertical $1$--morphisms, $(d +
1)$--dimensional cobordisms between them as horizontal $1$--morphisms and
boundary preserving diffeomorphisms between cobordisms as $2$--morphisms yields
a fibrant monoidal double category $\Cob_{d + 1}$ under the disjoint union of
manifolds \cite{SymMonBicat}. Shulman gives a trifunctor $\mathcal{H}$ from the
tricategory of fibrant double categories to the tricategory of bicategories that
takes $\Cob_{d + 1}$ to a monoidal bicategory --- in fact, Shulman proves that
$\mathcal{H}$ takes any fibrant monoidal double category to a bicategory.

On the other hand, the monoidal category of thick tangles $2\Thick$ as defined
in \cite{NonCommTQFT} is a decategorification of a monoidal bicategory of thick
tangles $\PTT$ defined in \cite{NonSemiSimp}. Taking inspiration from this
situation, we assume that there is a fibrant monoidal double category
$\DThick$ which, under $\mathcal{H}$, yields $\PTT$. The structures defining
$\DThick$ are the ones analogous to $\Cob_{2}$:
\begin{itemize}
\setlength{\itemsep}{0pt}
\item objects are diffeomorphism classes of disjoint unions of the interval
$I = [0, 1]$\footnote{The justification for equating disjoint unions of the
interval up to diffeomorphism is that the monoidal product on the objects of
$\PTT$ (and $2\Thick$) is strict given that the objects are taken to be the
integers $n$ as opposed to $n$--fold disjoint unions $I^{\amalg n}$ with
different bracketings.}
\item vertical $1$--morphisms are only the identity morphisms
\item horizontal $1$--morphisms $I^{\amalg n} \to I^{\amalg m}$ are surfaces
with boundary $I^{\amalg n} \amalg I^{\amalg m}$ along with an embedding $d$
into $\R \times I$ such that $d^{-1}(\R \times \set{0}) = I^{\amalg n}$ and
$d^{-1}(\R \times \set{1}) = I^{\amalg m}$\footnote{There are finer details here
which will be unimportant for our purposes.}
\item $2$--morphisms are diffeomorphisms between cobordisms
(horizontal $1$--morphisms) that preserve the boundary
\end{itemize}
We then attempt to use this notion to concretely define the data of a monoidal
double functor from $\DThick$ to a suitable monoidal double category of complex
vector spaces that yields enough unitary linear transformations in the image to
facilitate quantum computing.

We define the object function $F_0$ of such a functor by assigning to each
disjoint union $X = I^{\amalg n}$ the $n$--th tensor power $F_0(X) = A^{\tensor
n}$ of some fixed algebra $A$ --- we may be specific enough to pick a consistent
bracketing pattern for $A^{\tensor n}$. It is easy to see that this assignment
is well-defined. Since the vertical $1$--morphisms are only identities, the
object category of $\DThick$ is discrete and, hence, the vertical $1$--morphism
function is the unique, obvious one: $F_0(\id_X) = \id_{F_0(X)}$.

Next, we consider the horizontal $1$--morphisms or the cobordisms
$Z : I^{\amalg n} \to I^{\amalg m}$, which are determined up to diffeomorphism
by their genus. For positive $m$ and $n$, we first consider some ``canonical''
genus $k$ cobordism $Z : I \to I$ where the holes are circles with centers along
a straight line from one boundary interval to another. This cobordism then
decomposes into a $k$--fold composition of $M * W : I \to I$ with itself, where
$M$ is the ``canonical'' pair-of-pants and $W$ is the ``canonical''
co-pair-of-pants. An example is shown below:

\[\begin{tikzpicture}[scale=0.25]
\idcobup{1, 0}
\pants{1, 4}
\pants{5, 2}
\copants{9, 2}
\pants{13, 2}
\copants{17, 2}
\pants{21, 2}
\copants{25, 2}
\end{tikzpicture}\]
In associating a linear map ``functorially'' to $Z$, it
suffices to associate linear maps to $M$ --- we can then associate a linear map
to $W$ by duality and get a linear map for $Z$ by composition.

For associating a linear map to $M$, we take a complex bundle
$\pi : E \to M$ with fibre $A$ along with a connection $\nabla$. Then, we choose
two paths: one from the mid-point of each in-boundary interval to the mid-point
of the out-boudnary interval. By parallel transport along each curve, we get two
linear maps $l_1, l_2 : A \to A$. We then have a linear map
$l : A \tensor A \to A$ given by $l(x \tensor y) = l_1(x)l_2(y)$ where the
product in the right is the algebra product in $A$. Then, the linear map
associated to $W$ is the conjugate transpose $l^{\dagger}$.
We then obtain a linear map associated to $Z$: the $k$--fold composite
$(ll^{\dagger})^k$. We set $F_1(Z) := (ll^{\dagger})^k$.

Now, consider a cobordism $Y : I^{\amalg n} \to I^{\amalg m}$ of genus $k$ such
that $Y = RZQ$, where $Q$ is a cobordism $I^{\amalg n} \to I$ formed in some
``canonical'' way by a gluing of pairs of pants and cylinders (``identity''
cobordisms) and $R$ is a cobordism $I \to I^{\amalg m}$ formed again in a
``canonical'' way from co-pairs of pants and cylinders. Of course, we associate
a linear map to the cylinder by another parallel transport along a curve between
the mid-points of its two boundaries. Then, again by composition, we get a
linear map $F_1(Y) : A^{\tensor n} \to A^{\tensor m}$.

This gives an assignment $F_1(Y)$ for a representative $Y$ of each
diffeomorphism class of cobordisms in $\DThick$. For the assignment of a linear
map to every cobordism in $\DThick$, we take the following approach. Let $Y'$ be
an arbitrary cobordism in the class of some $Y$ for which $F_1(Y)$ has been
defined as above. Then we pick a boundary preserving diffeomorphism $f \in
\Aut_{\Man}(Y)$ such that $f(Y) = Y'$\footnote{Of course, we take the underlying
topological space for each manifold in a diffeomorphism class to be the same.}.
Let $\set{\gamma_i}$ be the family of curves which along which parallel
transport gave us the linear maps $F_1(Y)$. Then, $\set{f \circ \gamma_i}$ is a
family of curves in $Y'$ which yield linear maps by parallel transport under a
connection $f\nabla$, to be made precise later. Combining these using the same
``pattern'' or ``expression'' of algebra multiplications, tensor products and
compositions as we had for $Y$, we can obtain a linear map $F_1(Y')$. Continuing
the example, we have:

\[\begin{tikzpicture}[scale=0.25]
\pidcobup{1, 0}
\ppants{1, 4}
\ppants{5, 2}
\pcopants{9, 2}
\ppants{13, 2}
\pcopants{17, 2}
\ppants{21, 2}
\pcopants{25, 2}
\end{tikzpicture}\]

We now turn our attention to the case when $m$ or $n$ is zero. We can treat the
case $n = 0$ and obtain the other case by duality. Let
$Y : \varnothing \to I^{\amalg m}$ with genus $k$. Then $Y$ decomposes as
$R * Z * Z'$ where $Z$ and $R$ are as before and $Z'$ is a genus zero cobordism
$\varnothing \to I$. We call cobordisms $\varnothing \to I$ elements and we call
elements of genus zero, atomic elements because they will not decompose into any
simpler structures. $Z'$ has a boundary preserving diffeomorphism
$f : Z'' \to Z'$ for some atomic element $Z''$ deemed ``canonical''. We
associate a linear map $a : \C \to A$ to $Z''$ as follows.  Take a loop $\gamma$
in $Z''$ on the mid-point of its only boundary interval and obtain an element
$a \in A$, or equivalently a linear map $a : \C \to A$ by parallel transport of
some fixed element $a_0 \in A$. We set $F_1(Z'') = a$ and get $F_1(Z')$ by
parallel transport of $a_0$ along $f\gamma$. $F_1(Y)$ is then obtained by
composition.

This completes the definition of an object function $F_1$ for the morphism
category of $\DThick$. Now, we turn our attention to $2$--morphisms --- boundary
preserving diffeomorphisms between cobordisms. Let $f : Y \to Y'$ be one such
diffeomorphism. We must assign to $f$ some object that functions as roughly a
``morphism of morphisms of vector spaces''. One natural choice is homotopy
classes of paths between linear functions in spaces of linear functions under
some suitable norm. In this case, we can take the following approach. There
exists a path $\psi$ in the diffeomorphism group $\Diff(Y)$ of $Y$ from $\id_Y$
to $f$ such that we can ``move'' the parallel transport machinery ``along''
$\psi$ --- that is, taking connections $\psi(t)\nabla$ and paths
$\set{\psi(t)\gamma_i}$, for $t \in [0, 1]$ --- to get linear maps for each
$\psi(t)(Y)$, which constitute a path in the space $\Hom(\dom F_1(Y), \codom
F_1(Y))$. Note that, implicit in this is the assumption that
$\psi(1)\nabla = f\nabla$ and $\set{\psi(1)\gamma_i} = \set{f\gamma_i}$ yield
the linear map $F_1(Y')$ by parallel transport, which need not hold in general.

The issue is that the $2$--morphisms are not guaranteed to ``preserve parallel
transport''. In order for this picture to make sense, we need to consider
$2$--morphisms that do this. Another way of viewing the scenario is that we are
dealing with cobordisms equipped with parallel transport machinery and these
should be the objects of our morphism category.


\section{Connections on Cobordisms}

For a vector bundle $\pi_E : E \to M$, we write $\Gamma(E)$ to denote the set of
smooth sections of the bundle --- we will not be using the sheaf structure unless
necessary. For $\K = \R$ when $E$ is a smooth bundle with a real vector space as
fibres, or $\K = \C$ when $E$ is a complex bundle, we recall that a connection
on $E$ is a $\K$--linear map
\[
  \nabla : \Gamma(E) \to \Gamma(E \tensor T^*M)
\]
such that for all $s \in \Gamma(E)$ and $r \in \Cinf(M, \K)$, the following
Leibniz property is satisfied:
\[
  \nabla(r \cdot s) = r \cdot \nabla(s) + s \tensor dr
\]

The right summand requires some clarification.
First, if $r : M \to \R$ is a smooth function, then the derivative of $r$ is a
map $dr : TM \to T\R$ such that $(r, dr)$ is a bundle morphism making the
following diagram in the category of manifolds commute:
\[\begin{tikzcd}
TM \ar[d, "\pi_{TM}" left] \ar[r, "dr" above] & T\R \ar[d, "\pi_{T\R}" right]\\
M \ar[r, "r" below] & \R
\end{tikzcd}\]
By the definition of bundle morphism, $dr$ is linear on fibres so that for each
$x \in M$, $dr$ restricts to a linear map $T_xM \to \R$ and these maps vary
smoothly with $x$, so that $dr$ is a section of $T^*M$. Hence, by an abuse of
notation, we can view $dr$ as the following map:
\[
  dr : M \to T^*M : x \mapsto dr|_{T_xM}
\]

Next, we are treating $s \tensor dr$ as a map $M \to E \tensor T^*M$ which is
not strictly a tensor product of two maps because its domain is not a tensor
product. By $s \tensor dr$, what we really mean is the map:
\begin{equation}\label{eqn:conn_tensor}
  x \mapsto s(x) \tensor dr|_{T_xM}
\end{equation}

We will now see that isomorphisms of vector bundles have an action on the
connections on these bundles, leading to a notion of morphism for connections
--- a first step in developing a double category of
``parallel transport machinery''.

\subsection{Gauge Transformations}

Suppose we have $\pi_E : E \to M$ and $\nabla$ as before as well as another
bundle $\pi_{E'} : E' \to M$ with a bundle isomorphism $f = (u, v) : E \to E'$
--- a pair of maps $u : M \to M'$ and $v : E \to E'$ with $v$ linear on each
fibre of $E$, making the following diagram commute:
\[\begin{tikzcd}
E \ar[d, "\pi_E" left] \ar[r, "v" above] & E' \ar[d, "\pi_{E'}" right]\\
M \ar[r, "u" below] & M'
\end{tikzcd}\]

Let $s : M' \to E' \in \Gamma(E')$. Then we have a section
$\wh{f}(s) \in \Gamma(E)$ defined by
\[
  \wh{f}(s) = v^{-1} \circ s \circ u
\]
Noting that $u : M \to M'$ is a diffeomorphism, it is easy to verify $du$
is also a bundle isomorphism, from the definition of differentials. Furthermore,
there is a bundle isomorphism $d^*u : T^*M \to T^*M'$ corresponding to $du$,
defined, for each $g : T_xM \to \R \in T^*_xM$, by the composite
\[
  (d^*u)(g) := T_{u(x)}M' \to[(du)^{-1}] T_xM \to[g] \R \in T^*_{u(x)}M'
\]

Denoting $\tilde{f}(x \tensor g) := v(x) \tensor (d^*u)(g)$, we then define:
\[\begin{array}{ccccc}
f \diamond \nabla
&:& \Gamma(E') &\to    & \Gamma(E' \tensor T^*M') \\
&:& s &\mapsto& \tilde{f} \circ \nabla(\wh{f}(s)) \circ u^{-1} \\
&&& = &
  \tilde{f} \circ \nabla(v^{-1} \circ s \circ u) \circ u^{-1}
\end{array}\]
We wish to show that $f \diamond \nabla$ is a connection.
Let $c \in \K$. Then, for a section $s \in \Gamma(E')$, have:
\begin{align*}
   & (f \diamond \nabla)(c \cdot s) \\
  =& \tilde{f}\nabla(v^{-1}(c \cdot s)u)u^{-1} \\
  =& \tilde{f}\nabla(c \cdot v^{-1}su)u{-1}
      && \text{fibre-wise linearity of } v^{-1} \\
  =& \tilde{f} (c \cdot  \nabla(v^{-1}su)u{-1})
      && \text{linearity of } \nabla \\
  =& c \cdot \tilde{f}\nabla(v^{-1}su)u^{-1}
      && \text{fibre-wise linearity of } \tilde{f} \\
  =& c \cdot (f \diamond \nabla)(s)
\end{align*}
We also observe that, for sections $s_1, s_2 \in \Gamma(E')$, we have
\begin{align*}
   & (f \diamond \nabla)(s_1 + s_2)\\
  =& \tilde{f}\nabla(v^{-1}(s_1 + s_2)u)u^{-1} \\
  =& \tilde{f}\nabla(v^{-1}(s_1u + s_2u))u^{-1}
    && \text{definition of pointwise addition} \\
  =& \tilde{f}\nabla(v^{-1}s_1u + v^{-1}s_2u)u^{-1}
    && \text{fibre-wise linearity of } v^{-1} \\
  =& \tilde{f}(\nabla(v^{-1}s_1u)
    + \nabla(v^{-1}s_2u))u^{-1}
    && \text{linearity of } \nabla \\
  =& \tilde{f}(\nabla(v^{-1}s_1u)u^{-1}
    + \nabla(v^{-1}s_2u)u^{-1})
    && \text{definition of pointwise addition} \\
  =& \tilde{f}\nabla(v^{-1}s_1u)u^{-1}
    + \tilde{f}\nabla(v^{-1}s_2u)u^{-1}
    && \text{fibre-wise linearity of } \tilde{f} \\
  =& (f \diamond \nabla)(s_1) + (f \diamond \nabla)(s_2)
\end{align*}
Thus, $f \diamond \nabla$ is $\K$--linear. Now, for $r \in \Cinf(M', \K)$, we
again observe:
\begin{align*}
   & (f \diamond \nabla)(r \cdot s) \\
  =& \tilde{f}\nabla(v^{-1}(r \cdot s)u)u^{-1} \\
  =& \tilde{f}\nabla(v^{-1}(ru \cdot su))u^{-1}
    && \text{pointwise multiplication} \\
  =& \tilde{f}\nabla(ru \cdot v^{-1}su)u^{-1}
    && \text{fibre-wise linearity of } v^{-1} \\
  =& \tilde{f}(ru \cdot \nabla(v^{-1}su) + (v^{-1}su) \tensor d(ru))u^{-1}
    && \text{Leibniz property} \\
  =& \tilde{f}(ru \cdot \nabla(v^{-1}su))u^{-1}
      + \tilde{f}((v^{-1}su) \tensor d(ru))u^{-1}
    && \text{distribute over $+$ as before} \\
  =& \tilde{f}(ruu^{-1} \cdot \nabla(v^{-1}su)u^{-1})
      + \tilde{f}((v^{-1}su) \tensor d(ru))u^{-1}
    && \text{pointwise multiplication} \\
  =& r \cdot \tilde{f}\nabla(v^{-1}su)u^{-1}
      + \tilde{f}((v^{-1}su) \tensor d(ru))u^{-1}
    && \text{fibre-wise linearity of } \tilde{f} \\
  =& r \cdot (f \diamond \nabla)(s)
      + \tilde{f}((v^{-1}su) \tensor d(ru))u^{-1} \\
  =& r \cdot (f \diamond \nabla)(s)
      + (v \tensor (d^*u))(v^{-1}su \tensor d(ru))u^{-1} \\
  =& r \cdot (f \diamond \nabla)(s)
      + (v \tensor (d^*u))(v^{-1}suu^{-1} \tensor d(ru)u^{-1})
    && \text{by definition \eqref{eqn:conn_tensor}} \\
  =& r \cdot (f \diamond \nabla)(s)
      + (v \tensor (d^*u))(v^{-1}s \tensor d(ru)u^{-1}) \\
  =& r \cdot (f \diamond \nabla)(s)
      + s \tensor (d^*u)d(ru)u^{-1}
    && \text{$\tensor$ for sections}
\end{align*}

It now suffices to show that $(d^*u)d(ru)u^{-1} = dr$. Pointwise, we have:
\[
  ((d^*u) \circ d(ru) \circ u^{-1})(x) = d(ru)|_{T_{u^{-1}(x)}M} \circ (du)^{-1}
\]
We then observe a useful property of the derivative
operator $d-$. If $a : L \to M$ and $b : M \to N$ are smooth maps, then it is
easy to verify, from the definition of the differential, that
\[
  d(b \circ a) = db \circ da
\]
Then, we observe that
\[
  d(ru)|_{T_{u^{-1}(x)}M} \circ (du)^{-1}
  = dr|_{T_xM'} \circ du|_{T_{u^{-1}(x)}M} \circ (du)^{-1}
  = dr|_{T_xM'}
  = dr(x)
\]
as required. This completes the proof of the following theorem.

\begin{thm}
Let $\pi : E \to M$, $\pi' : E' \to M'$ be bundles with a bundle isomorphism
$f = (u, v) : E \to E'$. If $\nabla$ is a connection on $\pi$, then
$f \diamond \nabla$ is a connection on $\pi'$.
\end{thm}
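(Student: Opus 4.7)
The plan is to verify directly that $f \diamond \nabla$ satisfies the two defining properties of a connection: $\K$-linearity and the Leibniz rule. Since $f \diamond \nabla$ is defined as a composite $s \mapsto \tilde{f} \circ \nabla(\wh{f}(s)) \circ u^{-1}$, I would first check that this map indeed sends smooth sections of $E'$ to smooth sections of $E' \tensor T^*M'$; this follows because $u$ is a diffeomorphism (so $u^{-1}$ is smooth), $v$ is a fibrewise-linear diffeomorphism of total spaces, $du$ restricts to a linear isomorphism on each tangent fibre, and $\nabla$ already produces smooth sections.

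For $\K$-linearity I would handle scalar multiplication and additivity separately. In each case, the strategy is the same: pull the scalar or the sum through the outermost $\tilde{f}$ and the innermost $v^{-1}$ using fibrewise linearity of these bundle maps, and through $u$, $u^{-1}$ using the fact that pointwise addition and scalar multiplication of sections are defined pointwise. With the scalar or sum relocated into the argument of $\nabla$, the $\K$-linearity of $\nabla$ finishes each calculation. This is the easy half.

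The Leibniz rule is the main obstacle, both because it involves more bookkeeping and because the right summand $s \tensor dr$ in the definition of a connection is only ``morally'' a tensor of maps, so one must be careful with the interpretation in (\ref{eqn:conn_tensor}). My plan is to expand $(f \diamond \nabla)(r \cdot s)$ step by step, using fibrewise linearity of $v^{-1}$ to move $r \circ u$ outside, then applying the Leibniz rule for $\nabla$ to the section $(r \circ u) \cdot (v^{-1} \circ s \circ u)$. Distributing $\tilde{f}(-) \circ u^{-1}$ over the resulting sum gives one term that is manifestly $r \cdot (f \diamond \nabla)(s)$ (again by fibrewise linearity) and one term of the form $\tilde{f}\bigl((v^{-1} s u) \tensor d(ru)\bigr) \circ u^{-1}$, which under the definition of $\tilde{f} = v \tensor d^*u$ and the pointwise formula for $\tensor$ of sections reduces to $s \tensor \bigl((d^*u) \circ d(r \circ u) \circ u^{-1}\bigr)$.

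The final step, and the only place the argument uses more than formal manipulation, is the identity $(d^*u) \circ d(r \circ u) \circ u^{-1} = dr$. For this I would invoke the chain rule $d(r \circ u) = dr \circ du$, which at a point $x \in M'$ gives $d(r \circ u)|_{T_{u^{-1}(x)}M} \circ (du)^{-1} = dr|_{T_xM'} \circ du|_{T_{u^{-1}(x)}M} \circ (du)^{-1} = dr|_{T_xM'}$, matching $dr(x)$ under the convention $dr : M' \to T^*M'$. Putting the two summands together then yields the Leibniz identity $(f \diamond \nabla)(r \cdot s) = r \cdot (f \diamond \nabla)(s) + s \tensor dr$, completing the proof.
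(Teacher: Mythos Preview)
Your proposal is correct and follows essentially the same approach as the paper: verify $\K$-linearity via fibrewise linearity of $v^{-1}$ and $\tilde{f}$, then establish the Leibniz rule by expanding $(f \diamond \nabla)(r \cdot s)$, applying the Leibniz property of $\nabla$ to $(ru)\cdot(v^{-1}su)$, and reducing the residual term to $s \tensor dr$ via the chain-rule identity $(d^*u)\circ d(ru)\circ u^{-1} = dr$. The only minor addition you make is the preliminary remark that $f \diamond \nabla$ lands in smooth sections, which the paper leaves implicit.
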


The following theorem shows that the operation $- \diamond -$ commutes with
composition of diffeomorphisms. This result ultimately provides a notion of
morphism of connections from bundle isomorphisms.
\begin{thm}
Let $\pi_i : E_i \to M_i$ be bundles
for $i \in \set{1, 2, 3}$ along
bundle isomorphisms
$f_{j, j + 1} = (u_{j, j+ 1}, v_{j, j + 1}) : \pi_{j} \to \pi_{j + 1}$
for $j \in \set{1, 2}$. Then, if $\nabla$ is a connection on $\pi_1$, we have:
\[
  (f_{2, 3}f_{1, 2}) \diamond \nabla
  = f_{2, 3} \diamond (f_{1, 2} \diamond \nabla)
\]
\end{thm}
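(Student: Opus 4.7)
My plan is to unfold both sides of the claimed equation using the definition of $-\diamond-$ and then reduce the comparison to three elementary facts: contravariant functoriality of the inverse (for both $u$ and $v$ components), the chain rule for derivatives (which already appeared at the end of the previous proof in the form $d(b \circ a) = db \circ da$), and the fact that the tilde construction $f \mapsto \tilde f$ is itself functorial with respect to composition of bundle isomorphisms.

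Concretely, I would first fix a section $s \in \Gamma(E_3)$ and expand the left-hand side. Writing $u = u_{2,3} \circ u_{1,2}$ and $v = v_{2,3} \circ v_{1,2}$, and letting $\tilde g$ denote the map $(x \tensor \varphi) \mapsto v(x) \tensor (d^*u)(\varphi)$ associated with $g := f_{2,3} f_{1,2}$, I get
\[
  ((f_{2,3}f_{1,2}) \diamond \nabla)(s)
  = \tilde g \circ \nabla\big((v_{1,2}^{-1} \circ v_{2,3}^{-1}) \circ s \circ (u_{2,3} \circ u_{1,2})\big) \circ (u_{1,2}^{-1} \circ u_{2,3}^{-1}),
\]
using the standard identity $(v_{2,3} \circ v_{1,2})^{-1} = v_{1,2}^{-1} \circ v_{2,3}^{-1}$ and similarly for $u$. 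Expanding the right-hand side in two stages, first applying $f_{1,2} \diamond \nabla$ to the section $v_{2,3}^{-1} \circ s \circ u_{2,3}$ of $E_2$ and then applying $f_{2,3} \diamond -$, produces
\[
  (f_{2,3} \diamond (f_{1,2} \diamond \nabla))(s)
  = \tilde f_{2,3} \circ \tilde f_{1,2} \circ \nabla\big(v_{1,2}^{-1} \circ v_{2,3}^{-1} \circ s \circ u_{2,3} \circ u_{1,2}\big) \circ u_{1,2}^{-1} \circ u_{2,3}^{-1}.
\]
The two expressions now have identical middle and trailing factors, so the claim reduces to showing $\tilde g = \tilde f_{2,3} \circ \tilde f_{1,2}$.

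The only nontrivial step is verifying this last identity. Starting from $(x \tensor \varphi) \in E_1 \tensor T^*M_1$, I would compute
\[
  (\tilde f_{2,3} \circ \tilde f_{1,2})(x \tensor \varphi)
  = \tilde f_{2,3}\big(v_{1,2}(x) \tensor (d^*u_{1,2})(\varphi)\big)
  = v_{2,3}v_{1,2}(x) \tensor (d^*u_{2,3})(d^*u_{1,2})(\varphi).
\]
On the other hand, by definition $\tilde g(x \tensor \varphi) = v(x) \tensor (d^*u)(\varphi)$. The first tensor factor matches trivially. For the second, I invoke the chain rule $du = du_{2,3} \circ du_{1,2}$, whence $(du)^{-1} = (du_{1,2})^{-1} \circ (du_{2,3})^{-1}$, giving
\[
  (d^*u)(\varphi) = \varphi \circ (du)^{-1}
    = \varphi \circ (du_{1,2})^{-1} \circ (du_{2,3})^{-1}
    = (d^*u_{2,3})\big((d^*u_{1,2})(\varphi)\big),
\]
which is exactly what is required. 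Substituting back yields equality of the two sides.

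The main obstacle is purely notational: keeping track of which bundle each $\tilde f_{i,j}$ acts on and in which order the $d^*$ factors compose, since $d^*(-)$ is contravariant while $d(-)$ is covariant. Once one identifies $\tilde{(-)}$ as a (covariant) functor on bundle isomorphisms — a fact that is essentially built into the $\tensor$-structure $v \tensor d^*u$ together with the chain rule — the rest of the argument is a direct substitution, so no genuinely new calculation beyond what was done for the previous theorem is needed.
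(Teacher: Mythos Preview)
Your proof is correct and follows essentially the same approach as the paper: both expand the composite, observe that the inner $\nabla(\cdots)$ and the trailing $u^{-1}$ factors agree, and reduce the claim to the identity $d^*(u_{2,3}u_{1,2}) = d^*u_{2,3} \circ d^*u_{1,2}$, proved via the chain rule. The only difference is organizational---you isolate the functoriality $\tilde g = \tilde f_{2,3} \circ \tilde f_{1,2}$ as a separate lemma, whereas the paper performs the same factorization inline.
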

\begin{proof}
By expanding expressions, we obtain:
\begin{align*}
((f_{2, 3}f_{1, 2}) \diamond \nabla)(s)
=& (v_{2,3}v_{1,2} \tensor d^*(u_{2,3}u_{1,2}))
   \nabla((v_{2, 3}v_{1, 2})^{-1}s(u_{2, 3}u_{1,3}))
   (u_{2,3}u_{1,2})^{-1}
\end{align*}
We observe that for any suitable maps $p, q, w$, we have:
\[
  d^*(pq)(w) = w \circ (d(pq))^{-1}
  = w \circ (dq)^{-1} \circ (dp)^{-1}
  = d^*p(w \circ (dq)^{-1})
  = d^*p(d^*q(w))
\]
so that the first expression becomes:
\begin{align*}
((f_{2, 3}f_{1, 2}) \diamond \nabla)(s)
=& (v_{2,3}v_{1,2} \tensor d^*u_{2,3}d^*u_{1,2})
   \nabla((v_{2, 3}v_{1, 2})^{-1}s(u_{2, 3}u_{1,3}))
   (u_{2,3}u_{1,2})^{-1} \\
=& (v_{2,3} \tensor d^*u_{2,3})\sbr{(v_{1,2} \tensor d^*u_{1,2})
   \nabla(v_{1, 2}^{-1}(v_{2, 3}^{-1}su_{2, 3})u_{1,2})
   u_{1,2}^{-1}}u_{2,3}^{-1} \\
=& (v_{2,3} \tensor d^*u_{2,3})
   (f_{1,2} \diamond \nabla)(v_{2, 3}^{-1}su_{2, 3})
   u_{2,3}^{-1} \\
=& (f_{2,3} \diamond (f_{1,2} \diamond \nabla))(s)
\end{align*}
as required.
\end{proof}

From this point, we will write $f\nabla$ as opposed to $f \diamond \nabla$,
as long as the action is clear from context. Now, Let $\pi_i : E_i \to M_i$ be
bundles equipped with connections
$\nabla_i$ for $i \in \set{1, 2, 3, 4}$. Let
$f_{i, i + 1} = (u_{i, i + 1}, v_{i, i + 1}) : \pi_{i} \to \pi_{i + 1}$
be bundle isomorphisms for $i \in \set{1, 2, 3}$. Then, the composite
$f_{1, 3} := f_{2, 3}f_{1, 2} = (u_{2, 3}u_{1, 2}, v_{2, 3}v_{1, 2})
=: (u_{1, 3}, v_{1, 3})$ is clearly a bundle isomorphism. We similarly define
composites $f_{i, j}$ for each $i < j \in \set{1, 2, 3, 4}$. Now, suppose
$\nabla_{i + 1} = f_{i, i + 1}\nabla_i$ for each $i \in \set{1, 2, 3}$. Then, we
immediately have, from the previous theorem:
\[
  f_{i, k}\nabla_i = f_{j, k}f_{i, j}\nabla_i = f_{j, k}\nabla_j = \nabla_k
\]
for each $i < j < k$ in $\set{1, 2, 3, 4}$. In particular,
\[
  \nabla_4
    = f_{3, 4}(f_{2, 3}f_{1, 2}\nabla_1)
    = (f_{3,4}f_{2,3}f_{1,2}) \nabla_1
    = (f_{3, 4}f_{2, 3})f_{1, 2}\nabla_1
\]

We then observe the action of identity bundle morphisms. The identity bundle
morphism on $\pi_1$ is the pair $\id_{\pi_1} = (\id_{E_1}, \id_{M_1})$. Then,
\[
  \id_{\pi_1}\nabla_1(s)
    = (\id \tensor \id) \nabla_1(\id \circ s \circ \id) \id
    = \nabla_1(s)
\]
so that $\id_{\pi_1}\nabla = \nabla$. We finally observe that
$f(f^{-1}\nabla) = (ff^{-1})\nabla = \id\nabla = \nabla$ for any connection
$\nabla$ and any compatible bundle morphism $f$.

These observations motivate the following definition.
\begin{defn}[Gauge Transformation]
Let $\pi_1$ and $\pi_2$ be bundles equipped with connections $\nabla_1$ and
$\nabla_2$ respectively. Then, a bundle isomorphism
$f = (u, v) : \pi_1 \to \pi_2$ satisfying $f\nabla_1 = \nabla_2$ is called an
isomorphism, or simply morphism, of connections or a gauge transformation.
\end{defn}

From the work above, we have established the following results.
\begin{thm}
There exists a groupoid whose objects are connections and
whose morphisms are gauge transformations or isomorphisms of connections.
\end{thm}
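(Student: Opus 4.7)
The plan is to exhibit the claimed groupoid as a subcategory of the category $\Bun$ of bundles with bundle isomorphisms, whose objects are pairs $(\pi, \nabla)$ of a bundle equipped with a connection, and whose morphisms $(\pi_1, \nabla_1) \to (\pi_2, \nabla_2)$ are exactly those bundle isomorphisms $f : \pi_1 \to \pi_2$ satisfying $f \nabla_1 = \nabla_2$. Almost all the work has already been done in the paragraphs preceding the definition of gauge transformation; it remains to collect these observations and verify the category axioms together with invertibility.

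First I would check that identities are gauge transformations: the computation $\id_{\pi}\nabla(s) = (\id \tensor \id)\nabla(\id \circ s \circ \id)\id = \nabla(s)$ (displayed in the preceding discussion) shows $\id_{\pi} \nabla = \nabla$, so $\id_{\pi}$ serves as an identity morphism at $(\pi, \nabla)$. Next, I would verify closure under composition: if $f_{1,2} : (\pi_1, \nabla_1) \to (\pi_2, \nabla_2)$ and $f_{2, 3} : (\pi_2, \nabla_2) \to (\pi_3, \nabla_3)$ are gauge transformations, then by the second theorem above,
\[
  (f_{2, 3} f_{1, 2}) \nabla_1
  = f_{2, 3} (f_{1, 2} \nabla_1)
  = f_{2, 3} \nabla_2
  = \nabla_3,
\]
so the composite bundle isomorphism $f_{2, 3} f_{1, 2}$ is a gauge transformation $(\pi_1, \nabla_1) \to (\pi_3, \nabla_3)$. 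Associativity and the unit laws are inherited from $\Bun$ once closure is established.

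For invertibility, given a gauge transformation $f : (\pi_1, \nabla_1) \to (\pi_2, \nabla_2)$, its underlying bundle isomorphism has an inverse $f^{-1}$ in $\Bun$, and I would use the already-observed identity $f(f^{-1} \nabla_2) = (f f^{-1}) \nabla_2 = \id_{\pi_2} \nabla_2 = \nabla_2$. Since the action of $f$ on connections is a bijection on the set of connections on $\pi_2$ (with inverse given by the action of $f^{-1}$, by the same reasoning applied twice), applying this bijection to both sides yields $f^{-1} \nabla_2 = \nabla_1$, showing $f^{-1}$ is itself a gauge transformation $(\pi_2, \nabla_2) \to (\pi_1, \nabla_1)$.

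The only step that requires any care is the invertibility argument: one must be slightly careful that $f \diamond -$ actually acts as a bijection on the relevant sets of connections, but this is immediate from $(f^{-1}) \diamond (f \diamond \nabla) = (f^{-1} f) \diamond \nabla = \nabla$ applied in both directions, an instance of the composition theorem already proved. Everything else is bookkeeping on top of the structure of $\Bun$ as a category, and the result follows.
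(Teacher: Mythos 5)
Your proposal is correct and follows essentially the same path the paper intends: the theorem is stated there with no separate proof, being a direct consequence of the preceding observations (that $\id_{\pi}\nabla = \nabla$, that $(f_{2,3}f_{1,2})\diamond\nabla = f_{2,3}\diamond(f_{1,2}\diamond\nabla)$, and that $f(f^{-1}\nabla) = \nabla$), and you assemble exactly those. One small simplification: for invertibility there is no need to argue through bijectivity of $f\diamond -$; given $f\nabla_1 = \nabla_2$, just compute $f^{-1}\nabla_2 = f^{-1}(f\nabla_1) = (f^{-1}f)\nabla_1 = \id\nabla_1 = \nabla_1$ directly from the composition theorem.
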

\begin{defn}[Category of Connections]
We will call the category of the above theorem the category or groupoid of
connections. We will denote this category $\Conn$.
\end{defn}

We will see that connections on bundles on compact manifolds with boundary, when
specialized slightly, assemble into a double categorical structure compatible
with that of the cobordism double category formed by the underlying manifolds
over which we take the bundles. Before we proceed to this result, we will
require a notion of cobordism double category for bundles, which we develop
next.

\subsection{Bundle Cobordisms}\label{subsec:bund_cob}

Consider smooth bundles $\pi_1 : E_1 \to M_1$ and $\pi_2 : E_2 \to M_2$. We will
consider the coproduct or disjoint union of these bundles in the category of
manifolds. There exists a smooth map
$\pi_1 \amalg \pi_2 : E_1 \amalg E_2 \to M_1 \amalg M_2$ which we will give the
structure of a vector bundle as follows. For this, we additionally assume that
the fibres of $E_1$ and $E_2$ are the same vector space. Let
$U = U_1 \amalg U_2, V = V_1 \amalg V_2$ be open sets in $M_1 \amalg M_2$ with
$U_i, V_i \subset M_i$ open for $i \in \set{1, 2}$, and consider
$(U_1 \amalg U_2) \cap (V_1 \amalg V_2) = (U_1 \cap V_1) \amalg (U_2 \cap V_2)$.
We have a transition function $G_{U_1, V_1}$ on $U_1 \cap V_1$ from the bundle
$\pi_1$ and one $H_{U_2, V_2}$ on $U_2 \cap V_2$ from $\pi_2$. We define a
function $(G \amalg H)_{U, V} : U \cap V \to \GL_n(\C)$ piecewise, as
follows:
\[
  (G \amalg H)_{U, V}(x) := \begin{cases}
    G_{U_1, V_1}(x), & x \in U_1 \cap V_1 \subset M_1 \\
    H_{U_2, V_2}(x), & x \in U_2 \cap V_2 \subset M_2
  \end{cases}
\]
which is smooth since it is a disjoint union of smooth functions. Therefore,
\[
  G \amalg H := \set[(G \amalg H)_{U, V}]
                    {U, V \subset M_1 \amalg M_2 \text{ are open}}
\]
is a vector bundle structure on $\pi_1 \amalg \pi_2$. A section of
$E_1 \amalg E_2$ is a smooth map
\[
  s : M_1 \amalg M_2 \to E_1 \amalg E_2
\]
satisfying $(\pi_1 \amalg \pi_2)s = \id_{M_1 \amalg M_2}$. We note that this
guarantees that the $s$ must be of the form $s_1 \amalg s_2$ where $s_i$ is a
section of $E_i$, $i \in \set{1, 2}$.

Similarly, $TM_1 \amalg TM_2 \to M_1 \amalg M_2$ is a vector bundle when $M_1$
and $M_2$ have the same dimension, and we can take this to be the definition of
the tangent bundle $T(M_1 \amalg M_2)$ on $M_1 \amalg M_2$. Now, let
$\pi_3 : E_3 \to M_3$ be another bundle where all the $E_i$ have the same fibres
and all the $M_i$ are equidimensional.

We can pick a convention for disjoint unions of sets as follows:
\[
  A \amalg B = (A \times \set{0}) \cup (B \times \set{1})
\]
Under this convention,
\[
  E_1 \amalg (E_2 \amalg E_3)
    = \set[(x_1, 0)]{x_1 \in E_1}
      \cup \set[((x_2, 0), 1)]{x_2 \in E_2}
      \cup \set[((x_3, 1), 1)]{x_3 \in E_3}
\]
and
\[
  (E_1 \amalg E_2) \amalg E_3
    = \set[((x_1, 0), 0)]{x_1 \in E_1}
      \cup \set[((x_2, 1), 0)]{x_2 \in E_2}
      \cup \set[(x_3, 1)]{x_3 \in E_3}
\]
We have similar descriptions for the two distinct parenthesizations for
$M_1 \amalg M_2 \amalg M_3$. Now, the map
\[
  \alpha_{E_1, E_2, E_3} : E_1 \amalg (E_2 \amalg E_3)
                           \to (E_1 \amalg E_2) \amalg E_3
\]
defined by
\[
  (x_1, 0) \mapsto ((x_1, 0), 0),
  ((x_2, 0), 1) \mapsto ((x_2, 1), 0),
  ((x_3, 1), 1) \mapsto (x_3, 1)
\]
is easily seen to be bijective and fibre-preserving. Smoothness and naturality
in the subscripts follow from those of associators in $\Man$. We can make a
similar argument for similarly defined unitors $\rho_E$ and $\lambda_E$. We thus
have the following theorem.
\begin{thm}
The subcategory of the category of bundles consisting of bundles with base
spaces of a fixed dimension $d$ and total spaces with isomorphic fibres is
monoidal under the disjoint union of manifolds.
\end{thm}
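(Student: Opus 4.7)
The plan is a transport-of-structure argument from the symmetric monoidal category $\Man$ (under disjoint union) to the subcategory of bundles described. The construction of the disjoint union on objects is already given in the excerpt; I would organize the remaining verification as (i) bifunctoriality of $\amalg$, (ii) identification of the unit, (iii) construction of associator and unitors, and (iv) coherence.

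First, I would extend $\amalg$ to morphisms. Given bundle morphisms $f_i = (u_i, v_i) : \pi_i \to \pi_i'$ for $i \in \set{1, 2}$, I would set $f_1 \amalg f_2 := (u_1 \amalg u_2, v_1 \amalg v_2)$. Commutativity of the square defining a bundle morphism and fibre-wise linearity of $v_1 \amalg v_2$ both follow piecewise from the corresponding properties of $f_1$ and $f_2$, and smoothness follows from the universal property of coproducts in $\Man$. Functoriality (preservation of identities and composition) is immediate from the same universal property. The unit object I would take to be the empty bundle $\varnothing \to \varnothing$, which lies trivially in the subcategory of bundles over $d$--dimensional bases with a fixed fibre.

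For the associator, the candidate $\alpha_{E_1, E_2, E_3}$ is already defined on total spaces; I would define the base-space associator $\alpha_{M_1, M_2, M_3}$ by the literally analogous formula and check that the pair $(\alpha_{M_1, M_2, M_3}, \alpha_{E_1, E_2, E_3})$ is a bundle isomorphism. Fibre-preservation and fibre-wise linearity are apparent from the explicit formulas, while bijectivity, smoothness, and naturality in $(E_1, E_2, E_3)$ descend directly from the corresponding properties of the associator in $\Man$ (applied separately to base and total spaces). The unitors $\rho$ and $\lambda$ are handled identically.

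Finally, the triangle and pentagon coherence diagrams are diagrams of pairs of maps $(u, v)$, and commute if and only if they commute separately in the base-space and total-space coordinates. Both reductions are instances of the triangle and pentagon diagrams in $\Man$, which hold there. The only mildly delicate point --- and the step most likely to require a caveat in line with the paper's relaxed interpretation of ``monoidal'' --- is the treatment of the empty bundle as a unit, since one must be willing to regard $\varnothing \to \varnothing$ as a bundle of the specified dimension and fibre without it having any actual fibres to compare. Given the standing conventions of the excerpt, I would simply treat this as a special case and note that unitor naturality and the triangle axiom hold vacuously on the empty component.
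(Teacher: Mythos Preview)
Your proposal is correct and follows essentially the same approach as the paper: define the disjoint union on bundles, construct associators and unitors explicitly, and inherit smoothness, naturality, and coherence from the corresponding structures in $\Man$. If anything, you are more thorough --- the paper omits the bifunctoriality check and the triangle/pentagon coherence arguments entirely, in keeping with its stated convention of treating ``monoidal'' in a relaxed, data-only sense.
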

\begin{defn}[Category of {$(V, d)$--bundles}]
The subcategory of the category of bundles in the above theorem is called the
category of $V$--fibred bundles on $d$--dimensional manifolds or of
$(V, d)$--bundles and is denoted $\Bun^V_d$.
\end{defn}

We will now develop a notion of gluing complex bundles on compact manifolds with
boundary along with connections on these bundles. To accomplish this, we will
first show the following:
\begin{lem}\label{thm:bundle_gluing}
Let $M$ be a smooth compact manifold such that $\partial M$ has
a collar $C_0$ whose connected components are each contractible. For any complex
vector bundle $\pi : E \to M$ with fibre $P$,
there exists a complex bundle $\wh{\pi} : \wh{E} \to M$ which restricts to
the trivial bundle on a collar $C \subset C_0$ of $\partial M$ and to $E$
on $M \setminus C_0$.
\end{lem}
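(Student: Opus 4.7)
The plan is to exploit the contractibility of each component of the collar to trivialize $E$ there, and then use the collar structure to glue this trivialization into a bundle that is literally trivial on a smaller collar. Since every complex vector bundle over a contractible paracompact base is trivializable, the restriction of $E$ to each connected component of $C_0$ is trivial; assembling these component-wise trivializations yields a global complex bundle isomorphism
\[
  \phi : E|_{C_0} \to C_0 \times P.
\]

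Next, I would use the collar diffeomorphism $C_0 \cong \partial M \times [0, \eps)$ to pick a strictly smaller collar $C \subset C_0$ corresponding to $\partial M \times [0, \eps/2)$, so that $\overline{C}$ is contained in the relative interior $C_0^{\circ}$ of $C_0$ in $M$. Setting $U := C_0^{\circ}$ and $V := M \setminus \overline{C}$, the pair $\set{U, V}$ is an open cover of $M$ with overlap $U \cap V = C_0^{\circ} \setminus \overline{C}$. I would then define $\wh{E}$ as the pushout
\[
  \wh{E} := \br{(U \times P) \amalg E|_V} / \sim,
\]
where $(x, p) \in U \times P$ is identified with $e \in E|_V$ whenever $x \in U \cap V$ and $\phi(e) = (x, p)$. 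The evident projection $\wh{\pi} : \wh{E} \to M$ is, by construction, the trivial bundle $U \times P$ over $U$ and agrees with $E|_V$ over $V$; in particular, $\wh{E}|_{C} = C \times P$ is trivial, and $\wh{E}|_{M \setminus C_0} = E|_{M \setminus C_0}$, as required.

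The main obstacle is checking that the pushout actually produces a smooth complex vector bundle and not merely a topological space. This reduces to showing that the gluing data — the identity trivialization on $U$ together with any local trivializations of $E$ on $V$, with transition functions obtained by restricting $\phi$ — forms a smooth cocycle that is fibre-linear on each overlap. Both properties are inherited directly from $\phi$: smoothness from $\phi$ being a diffeomorphism, and fibre-linearity from $\phi$ being a bundle isomorphism. The shrinking step is harmless because each component of $C$ sits inside a contractible component of $C_0$, so the restriction of $\phi$ remains a well-defined trivialization on the overlap $U \cap V$ where it is needed.
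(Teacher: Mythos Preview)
Your argument is correct and arguably cleaner than the paper's. Both proofs start from the same observation --- that $E|_{C_0}$ is trivializable because each component of $C_0$ is contractible --- but diverge from there. You use the trivialization $\phi$ directly as clutching data: take the literal trivial bundle $U\times P$ over an open neighbourhood $U$ of the boundary, take $E|_V$ over the complement $V$ of a smaller collar, and glue along $\phi$ on the overlap. This immediately yields a bundle that is equal to $C\times P$ over $C$ and equal to $E$ over $M\setminus C_0$. The paper instead works at the level of transition functions: for each pair of trivializing opens it picks a homotopy $H_{U,V}$ from the transition function $G_{U,V}|_{C_0}$ to the identity, precomposes with a bump function $f$ that is $1$ on $M\setminus C_0$ and $0$ on the inner collar $C$, and declares the new transition functions to be $K_{U,V}(x)=H_{U,V}(f(x),x)$ on $C_0$ and $G_{U,V}$ elsewhere. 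Your route is more direct and sidesteps the (unaddressed) question of whether independently chosen homotopies preserve the cocycle condition on triple overlaps. The paper's route does buy something, though: the bump function $f$ it introduces here is reused verbatim in the subsequent gluing of connections, so that construction is set up to do double duty; if your proof replaced the paper's, that later argument would simply need to introduce its own bump function.
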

\begin{proof}
Let $U$ and $V$ be any two open sets of $M$ over which $E$ trivializes and
$G_{U, V} : U \cap V \to \Aut(P)$, the assignment of transition functions to
their intersection.
By the smooth collar theorem, there exists a nieghbourhood $C_0$ of $\partial M$
diffeomorphic to the cylinder $\partial M \times I$ on $\partial M$,
with $\partial M$ identified with $\partial M \times \set{1}$. By hypothesis, we
can take $C_0$ to have contractible components such that $E|_{C_0}$ is
isomorphic to the trivial bundle. Therefore, $G_{U, V}|_{C_0}$ is smoothly
homotopic to the constant map $U \cap V \cap C_0 \to \Aut(P) : x \mapsto \id_P$.
Let $H_{U, V} : I \times U \cap V \cap C_0 \to \Aut(P)$ be this homotopy so that
$H_{U, V}(1, -) = G_{U, V}|_{C_0}$ and $H_{U, V}(0, x) = \id_P$ for all
$x \in U \cap V \cap C_0$.

We can then cut $C_0$ into pieces
$C' \cong \partial M \times \sbr{0, \frac{1}{2}}$ and
$C \cong \partial M \times \sbr{\frac{1}{2}, 1}$ that are each diffeomorphic to
$\partial M \times I$. There exists a smooth bump function $f : M \to \R$ such
that $f$ is $1$ on $M \setminus C_0$, decreases to $0$ on $C'$ as we move
towards $\partial M \times \set{\frac{1}{2}}$ and vanishes on $C$:
\[
  f(x) = \begin{cases}
    1 & x \in M \setminus C_0 \\
    \frac{1}{2}(1 - \text{erf}(at + b))
      & x = (x', t) \in C', x' \in \partial M, t \in \sbr{0, \frac{1}{2}} \\
    0 & x \in C
  \end{cases}
\]
where $a$ and $b$ are appropriately chosen constants.

We then take the bundle $\wh{E} \to M$ with the same trivializations as $E$
and transition functions
\[
  K_{U, V}(x) = \begin{cases}
    H_{U, V}(f(x), x) & x \in C_0 \\
    G_{U, V}(x)       & x \in M \setminus C_0
  \end{cases}
\]
We observe that away from the collar $C_0$, the bundle is the same as $E$ and
inside $C$, it is trivial, as required.
\end{proof}

It is straightforward to verify that for any cospan $M \ot[f] X \to[g] N$ and
any finite dimensional vector space $V$ seen as an object in $\Man$, the
following holds:
\[
  V \times (M \amalg_X N) \cong (V \times M) \amalg_{X \times V} (V \times N)
\]
such that the isomorphism is fibre-preserving and linear on fibres. Hence,
trivial bundles always glue at boundaries to give trivial bundles. This
observation yields a gluing operation $- * -$ for the following collection of
complex bundles:
\[
  \set[\wh{E}]{E \text{ is a complex bundle with fibre } V}
\]
since the bundles $\wh{E}$ are trivial near their boundaries. We observe that
gluing fibres at the boundaries is associative up to diffeomorphism by the same
argument for the associativity of gluing manifolds along boundaries. It is also
not hard to verify that the associator diffeomorphisms are fibre-preserving and
linear on the fibres. Furthermore, given a bundle $\wh{E} \to M$ where
$\partial M = W_0 \amalg W_1$, we take the trivial bundles
$W_0 \times I \times V \to W_0 \times I$ and
$W_1 \times I \times V \to W_1 \times I$, and observe that they act as gluing
identities for $\wh{E}$ on either side by a simple reparametrization. This
establishes a notion of cobordism of bundles. That is,
\begin{thm}
Given a double category of cobordisms $\s{C}$ (e.g. $\CCob_d$ or $\DThick$)
and a complex vector space $V$, we have a double category $\BBun^V_{\s{C}}$
consisting of the following data:
\begin{enmrt}
\li Object category: objects are trivial $V$--bundles on the objects of $\s{C}$
and morphisms are bundle isomorphisms
\li Morphism category: objects are complex bundles $\wh{E} \to M$, for $M$ in
the morphism category of $\s{C}$ and complex bundles $E \to M$; morphisms are
bundle isomorphisms
\li Source functor: each bundle $\wh{E} \to M$ is sent to the trivial bundle on
the source of $M$; action on morphisms is by restriction to appropriate boundary
components
\li Target functor: defined analogously as the source functor
\li Unit functors: each bundle $\wh{E} \to M$ is sent to the trivial bundle on
the cylinder on the appropriate boundary components
\li Horizontal composition: gluing corresponding fibres at common boundary
\li Horizontal composition associators: inherited from the category of manifolds
\li Horizontal composition unitors: inherited like the associators
\li Monoidal product: disjoint union
\li Monoidal unit(s): empty bundle(s)
\end{enmrt}
\end{thm}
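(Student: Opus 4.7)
The plan is to go down the list item by item and verify that each piece of data exists and assembles consistently. Because the paper adopts a relaxed notion of monoidal double category that tracks only the data and not the higher coherence axioms, I will not need to check any pentagon or triangle identities; it suffices to produce the functors, associators, unitors, and monoidal product and to check that they land in the right categories.

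First I would verify the two $1$-categories. The object category consists of trivial $V$-bundles on the objects of $\s{C}$ together with fibrewise-linear bundle isomorphisms; composition and inversion are inherited from the usual category of vector bundles. For the morphism category, given $M$ in the morphism category of $\s{C}$ and any complex vector bundle $E \to M$, Lemma \ref{thm:bundle_gluing} produces $\wh{E} \to M$ that is trivial on a collar of $\partial M$ and agrees with $E$ away from it; bundle isomorphisms of such $\wh{E}$ form a category in the standard way. I would then construct the source, target, and unit functors: since $\wh{E}$ is trivial on a collar of $\partial M$, the restriction $\wh{E}|_{\partial M}$ is naturally a trivial $V$-bundle on the relevant boundary component, and an isomorphism $\wh{E} \to \wh{E}'$ restricts to an isomorphism of these trivializations; the unit functor on a trivial $V$-bundle over $W$ returns the trivial $V$-bundle over the cylinder on $W$, which is already trivial throughout and so needs no further adjustment.

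For horizontal composition, I would use the gluing operation $-*-$ introduced between Lemma \ref{thm:bundle_gluing} and the theorem. Two horizontally composable $\wh{E_1}$ and $\wh{E_2}$ are trivial on collars of the common boundary, so by the remark that trivial bundles glue to trivial bundles, they assemble into a well-defined complex $V$-bundle on the glued manifold. Associators and unitors for this gluing are pulled back from those of $\s{C}$: the underlying diffeomorphisms of manifolds lift to bundle isomorphisms because the relevant collar regions are trivial and the identifications are linear on fibres. The monoidal product is the disjoint union of bundles, which preserves the property of being trivial near the boundary; the monoidal unit is the empty bundle, which acts as a two-sided identity vacuously. Both are inherited from the monoidal structure on $\Bun^V_d$ established in the earlier theorem.

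The main obstacle is the smoothness of the glued bundle $\wh{E_1} * \wh{E_2}$ at the seam. This is precisely what the modification $E \mapsto \wh{E}$ from Lemma \ref{thm:bundle_gluing} was designed to address: because each $\wh{E_i}$ is trivial on a collar of its boundary, the transition functions agree with the identity near the common boundary, so the glued transition functions extend smoothly across the seam. Once this point is in hand, the remaining items reduce to transporting the manifold-level structure of $\s{C}$ along the bundle projections and checking that the resulting maps are linear on fibres, which is routine and does not require additional argument under the paper's relaxed convention.
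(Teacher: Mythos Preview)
Your proposal is correct and follows essentially the same approach as the paper: the paper does not give a separate proof block for this theorem but instead establishes each ingredient in the discussion immediately preceding it (Lemma~\ref{thm:bundle_gluing} for the trivialization near the boundary, the pushout formula for gluing trivial bundles, associativity and unitality inherited from manifolds, and the monoidal structure from $\Bun^V_d$), and you have reproduced exactly this line of argument with the same identification of the seam-smoothness issue as the key point handled by the $E \mapsto \wh{E}$ construction.
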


We also notice that the above constructions apply to smooth (real) bundles as
long as the transition functions at the points in some collar of the boundary
can be connected to the identity function by paths in the automorphism group of
the fibre. This is possible if these transition functions all have positive
determinant. We may guess that the
structure on the category of bundles developed here transfers over to the
category of connections on the bundles involved. We next show that this is
indeed the case.

\subsection{Monoidal Double Category of Connections}

For $i \in \set{1, 2, 3}$ and connections $\nabla_i$ on smooth bundles
$\pi_i : E_i \to M_i$, we define a function
\[
  \nabla_1 \amalg \nabla_2
    : \Gamma(E_1 \amalg E_2)
    \to \Gamma((E_1 \tensor T^*M_1) \amalg (E_2 \tensor T^*M_2))
\]
as follows, for $j \in \set{0, 1}$:
\[\begin{array}{crcl}
       & (\nabla_1 \amalg \nabla_2)(s_1 \amalg s_2)(x, j)
       & = & \nabla_{j + 1}(s_{j + 1})(x) \\
  \iff & (\nabla_1 \amalg \nabla_2)(s_1 \amalg s_2)
       & = & \nabla_1(s_1) \amalg \nabla_2(s_2)
\end{array}\]
It is easy to see that this function satisfies the connection identities
piecewise so that it satisfies these identities on its entire domain. Thus,
$\nabla_1 \amalg \nabla_2$ is a connection. Letting
$f = (u, v) = (\alpha_{M_1, M_2, M_3}, \alpha_{E_1, E_2, E_3})$, we now wish to
verify that
\begin{equation}\label{eqn:conn_assoc}
  f \diamond (\nabla_1 \amalg (\nabla_2 \amalg \nabla_3))
    = (\nabla_1 \amalg \nabla_2) \amalg \nabla_3
\end{equation}
For this, we will need to inspect the expression:
\begin{align*}
   & f \diamond (\nabla_1 \amalg (\nabla_2 \amalg \nabla_2))(
        (s_1 \amalg s_2) \amalg s_3
     ) \\
  =& \tilde{f}(\nabla_1 \amalg (\nabla_2 \amalg \nabla_2))(
      v^{-1}((s_1 \amalg s_2) \amalg s_3)u
     )u^{-1} \\
  =& \tilde{f}(\nabla_1 \amalg (\nabla_2 \amalg \nabla_2))(
      s_1 \amalg (s_2 \amalg s_3)
     )u^{-1} \\
  =& (v \tensor d^*u)(\nabla_1 \amalg (\nabla_2 \amalg \nabla_2))(
      s_1 \amalg (s_2 \amalg s_3)
     )u^{-1} \\
  =& (v \tensor d^*u)(
      \nabla_1(s_1) \amalg (\nabla_2(s_2) \amalg \nabla_3(s_3))
     )u^{-1}
\end{align*}
To reach our goal \eqref{eqn:conn_assoc}, we observe the following basic facts.
\begin{cor}
For tangent bundles $\pi_i : TM_i \to M_i$, $i \in \set{1, 2, 3}$, we have
$d\alpha_{M_1, M_2, M_3} = \alpha_{TM_1, TM_2, TM_3}$.
\end{cor}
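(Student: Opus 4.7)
The plan is to unpack the two maps pointwise on each of the three connected summands and observe that they agree by a direct identification. The associator $\alpha_{M_1, M_2, M_3}$ is, on each summand $M_i$, just the identity map up to relabeling of the disjoint-union tags coming from the convention $A \amalg B = (A \times \{0\}) \cup (B \times \{1\})$. Consequently it is a diffeomorphism whose restriction to each component is (canonically identified with) $\id_{M_i}$.

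First I would recall that the differential operator $d(-)$ respects disjoint unions of smooth maps, in the sense that $d(h_1 \amalg h_2) = dh_1 \amalg dh_2$ under the identifications $T(M_1 \amalg M_2) = TM_1 \amalg TM_2$ established earlier in the section. This follows immediately from the fact that smoothness and differentiation are local, so that the tangent functor is determined by its behaviour on each connected component. Next I would note that $d\,\id_{M_i} = \id_{TM_i}$, so the differential of $\alpha_{M_1, M_2, M_3}$, when restricted to each summand $TM_i$ of $T(M_1 \amalg (M_2 \amalg M_3)) = TM_1 \amalg (TM_2 \amalg TM_3)$, is the identity map $TM_i \to TM_i$ viewed as a map into the correspondingly relabeled summand of $(TM_1 \amalg TM_2) \amalg TM_3$.

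Finally I would compare this explicit description with the definition of $\alpha_{TM_1, TM_2, TM_3}$ given in Subsection~\ref{subsec:bund_cob}: the latter is defined by exactly the same relabeling rule $(x, 0) \mapsto ((x, 0), 0)$, etc., on each summand $TM_i$. Thus both maps agree summand by summand, and being smooth maps out of a disjoint union, they are determined by their restrictions to the summands. Hence $d\alpha_{M_1, M_2, M_3} = \alpha_{TM_1, TM_2, TM_3}$.

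The main obstacle, if any, is purely bookkeeping: one must be careful with the nested pair encoding $((x_2, 0), 1)$ etc. to verify that the tangent-bundle associator, when written out with the same convention, matches the componentwise identity obtained from $d\alpha$. Since both constructions use the identical convention for $\amalg$, this reduces to inspection and no genuine computation is required.
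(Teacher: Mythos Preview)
Your proposal is correct and follows essentially the same approach as the paper: both argue that $d\alpha_{M_1,M_2,M_3}$ and $\alpha_{TM_1,TM_2,TM_3}$ are the identity on fibres (equivalently, on each summand) by unpacking the explicit disjoint-union conventions. The paper's own proof is just the one-line version of what you have spelled out in detail.
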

\begin{proof}
$d\alpha_{M_1, M_2, M_3}$ is the essentially the identity on fibres as is
$\alpha_{TM_1, TM_2, TM_2}$. To see this, pick explicit elements of
the relevant spaces and expand the definitions.
\end{proof}

\begin{cor}
For tangent bundles $\pi_i$ as in the previous lemma, we have
\[
  (d^*\alpha_{M_1, M_2, M_3})(g) = g \circ \alpha_{TM_1, TM_2, TM_3}^{-1}
    = \br{\alpha^{-1}_{TM_1, TM_2, TM_3}}^*(g)
\]
\end{cor}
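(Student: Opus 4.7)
The plan is to simply unwind definitions and apply the previous corollary, since both equalities are essentially formal once the correct identifications are made.

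First, I would recall the definition of $d^*u$ given earlier in the paper: for any diffeomorphism $u$ and any covector $g \in T^*_xM$, one has $(d^*u)(g) = g \circ (du)^{-1}$, viewed as a covector at $u(x)$. Specializing $u$ to the associator $\alpha_{M_1, M_2, M_3} : M_1 \amalg (M_2 \amalg M_3) \to (M_1 \amalg M_2) \amalg M_3$, this immediately gives
\[
  (d^*\alpha_{M_1, M_2, M_3})(g) = g \circ (d\alpha_{M_1, M_2, M_3})^{-1}.
\]

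Next, I would invoke the previous corollary, which asserts $d\alpha_{M_1, M_2, M_3} = \alpha_{TM_1, TM_2, TM_3}$ (since the associator in $\Man$ is essentially the identity on fibres of the tangent bundle, its differential is the associator on tangent bundles). Substituting into the display above yields the first equality
\[
  (d^*\alpha_{M_1, M_2, M_3})(g) = g \circ \alpha_{TM_1, TM_2, TM_3}^{-1},
\]
as desired.

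For the second equality, I would observe that $\alpha_{TM_1, TM_2, TM_3}^{-1}$ is a fibre-preserving linear isomorphism of vector bundles, and the standard dual-map construction $L \mapsto L^*$ sends any linear map between vector spaces to the precomposition operator on their dual spaces; applied here fibrewise,
\[
  \br{\alpha^{-1}_{TM_1, TM_2, TM_3}}^*(g) = g \circ \alpha^{-1}_{TM_1, TM_2, TM_3},
\]
which matches the previous expression. The only mild obstacle is bookkeeping the source and target spaces of $g$ correctly under the identifications, but once the definition of $d^*u$ and the previous corollary are in hand, both equalities follow by direct substitution.
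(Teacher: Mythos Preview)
Your proposal is correct and matches the paper's approach: the paper states this corollary without proof, treating it as an immediate consequence of the definition of $d^*u$ together with the previous corollary $d\alpha_{M_1,M_2,M_3} = \alpha_{TM_1,TM_2,TM_3}$, which is exactly what you unwind.
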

The above corollary yields:
\begin{align*}
   & f \diamond (\nabla_1 \amalg (\nabla_2 \amalg \nabla_2))(
        (s_1 \amalg s_2) \amalg s_3
     ) \\
  =& (v \tensor d^*u)(
      \nabla_1(s_1) \amalg (\nabla_2(s_2) \amalg \nabla_3(s_3))
     )u^{-1} \\
  =& \br{\alpha_{M_1, M_2, M_3} \tensor \br{\alpha^{-1}_{TM_1, TM_2, TM_3}}^*}(
      \nabla_1(s_1) \amalg (\nabla_2(s_2) \amalg \nabla_3(s_3))
     )\alpha_{M_1, M_2, M_3}^{-1}
\end{align*}
where the last expression is easily seen to be
\[
  (\nabla_1(s) \amalg \nabla_2(s_2)) \amalg \nabla_3(s_3)
  = ((\nabla_1 \amalg \nabla_2) \amalg \nabla_3)((s_1 \amalg s_2) \amalg s_3)
\]

We can similarly show that the unitors in $\Man$ yield unitors for disjoint
unions of connections of bundles with equal fibres and equidimensional base
spaces. We have thus proved the following theorem.
\begin{thm}\label{thm:bundle_cat}
For a vector space $V$ and a non-negative integer $d$, the subcategory of the
category of connections consisting of all connections on objects in
$\Bun_d^{V}$ and all morphisms of connections between them is a monoidal
category under disjoint union.
\end{thm}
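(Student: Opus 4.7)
The plan is to lift the monoidal structure on $\Bun_d^V$ (the preceding theorem) to the full subcategory of $\Conn$ cut out by the statement. On objects I would set $(\pi_1, \nabla_1) \amalg (\pi_2, \nabla_2) := (\pi_1 \amalg \pi_2, \nabla_1 \amalg \nabla_2)$ using the piecewise connection defined just above the theorem; on morphisms I would take $f_1 \amalg f_2$ to be the disjoint union of the underlying bundle isomorphisms. The associator, unitors, and monoidal unit are then inherited from $\Bun_d^V$, with the unit being the unique (vacuous) connection on the empty bundle.

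The first step is bifunctoriality. Given gauge transformations $f_i : (\pi_i, \nabla_i) \to (\pi_i', \nabla_i')$, every section of $E_1 \amalg E_2$ decomposes uniquely as $s_1 \amalg s_2$, and under the disjoint-union convention fixed in Section \ref{subsec:bund_cob} the base maps $u_i$, their inverses, their differentials, the fibre maps $v_i$, and hence the operators $\tilde{f_i}$ all act componentwise. Unpacking $(f_1 \amalg f_2) \diamond (\nabla_1 \amalg \nabla_2)$ on $s_1 \amalg s_2$ therefore reduces on each summand to $f_i \diamond \nabla_i$, which equals $\nabla_i'$ by hypothesis; piecing the two pieces back together gives $\nabla_1' \amalg \nabla_2'$, so $f_1 \amalg f_2$ is a gauge transformation. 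Identities and composites are preserved because the underlying disjoint-union bifunctor on $\Bun_d^V$ preserves them.

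The second step is promoting the associator and unitors to gauge transformations. Equation \eqref{eqn:conn_assoc} already accomplishes this for the associator, showing exactly that the pair $(\alpha_{M_1, M_2, M_3}, \alpha_{E_1, E_2, E_3})$ sends $\nabla_1 \amalg (\nabla_2 \amalg \nabla_3)$ to $(\nabla_1 \amalg \nabla_2) \amalg \nabla_3$ under $\diamond$. An essentially identical computation, relying on the empty bundle as unit and on the same componentwise splitting of $\tilde{\cdot}$ and $d^*(-)$, handles $\rho$ and $\lambda$. Naturality of these isomorphisms in the subcategory of $\Conn$ follows from their naturality in $\Bun_d^V$ together with the fact, established in the second theorem of this section, that $\diamond$ commutes with composition of bundle isomorphisms.

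Finally, the pentagon and triangle coherence axioms are equalities between specific bundle isomorphisms; they already hold in $\Bun_d^V$ because they are inherited from $\Man$, and since a gauge transformation is by definition a bundle isomorphism with an extra property, the same equalities hold in $\Conn$. The main obstacle I anticipate is not conceptual but purely bookkeeping: one has to verify carefully that $\widetilde{f_1 \amalg f_2}$, $(u_1 \amalg u_2)^{-1}$, and $d(u_1 \amalg u_2)$ all split piecewise across summands in the way the rest of the argument needs. Once that piecewise decomposition is pinned down, every remaining check reduces to a routine computation on each component followed by reassembly.
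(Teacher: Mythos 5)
Your proposal is correct and follows the same route as the paper: define $\nabla_1 \amalg \nabla_2$ componentwise, verify it is a connection, and observe that the associator and unitors inherited from $\Bun_d^V$ are gauge transformations via the computation culminating in equation \eqref{eqn:conn_assoc}. You are somewhat more thorough than the paper — the paper leaves bifunctoriality of $\amalg$ on gauge transformations, naturality, and the coherence axioms implicit (consistent with its announced ``relaxed'' treatment of monoidal structures), whereas you spell out that $f_1 \amalg f_2$ sends $\nabla_1 \amalg \nabla_2$ to $\nabla_1' \amalg \nabla_2'$ and that naturality and the pentagon/triangle diagrams descend from $\Bun_d^V$ by faithfulness of the forgetful functor.
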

\begin{defn}[Category of {$(V, d)$}--Connections]
We call the subcategory of the category of connections in the above theorem the
category of connections on $V$--fibred bundles on $d$--dimensional manifolds or
of $(V, d)$--connections. We denote this category $\Conn^V_d$.
\end{defn}

As we have a notion of gluing for (a subset of) complex bundles with a fixed
fibre, we will develop a notion of gluing for connections defined on these
bundles. We will require some basic algebraic observations to achieve this.
Recall that for $\K = \R$ or $\C$ and any
$\K$--vector bundle $B \to M$ --- that is, smooth vector bundle with $\R$
or $\C$ as fibres --- its set $\Gamma(B)$ of sections is a
$\Cinf(M, \K)$--module with addition and scaling defined pointwise.
The set $\Lin\br{\Gamma(B), \Gamma(B')}$ of $\Cinf(M, \K)$--linear maps
$\Gamma(B) \to \Gamma(B')$ for another bundle $B' \to M$ is again a
$\Cinf(M, \K)$--module.  Identifying $\K$ with the subset of $\Cinf(M, \K)$ consisting of the constant
functions, this means that set of connections over the vector bundle $E \to M$
is contained $\Lin\br{\Gamma(E), \Gamma(E \tensor T^*M)}$.
We will now show that the set of all connections on $E$ is a coset of
a submodule of $\Lin\br{\Gamma(E), \Gamma(E \tensor T^*M)}$. It will follow from
the following elementary fact about modules over a unital ring.

\begin{defn}[Affine Linear Combinations]
For some unital ring $R$ and (left) $R$--module $V$, we will call
an $R$--linear combination $\sum_{i = 1}^{n} r_iv_i$ for $r_i \in R, v_i \in V$
affine if $\sum_{i = 1}^{n} r_i = 1$.
\end{defn}

\begin{lem}
For any ring $R$, a non-empty subset of a left $R$--module $V$ that is closed
under affine $R$--linear combinations is a coset of a submodule of $V$.
\end{lem}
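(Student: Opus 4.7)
The plan is to fix an arbitrary basepoint $v_0$ in the non-empty subset $S \subseteq V$ and to show that the translate $W := S - v_0 = \set[w]{w + v_0 \in S}$ is a submodule of $V$. Once this is established, we immediately have $S = v_0 + W$, which exhibits $S$ as a coset of a submodule, as required.

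To verify that $W$ is a submodule, I will need three things: that $W$ contains $0$, that $W$ is closed under addition, and that $W$ is closed under scaling by elements of $R$. The first is trivial, as $0 = v_0 - v_0 \in W$. For closure under addition, I take $s_1 - v_0$ and $s_2 - v_0$ in $W$ and observe that their sum $s_1 + s_2 - 2v_0$ lies in $W$ if and only if $s_1 + s_2 - v_0 \in S$; but this is the affine combination $1 \cdot s_1 + 1 \cdot s_2 + (-1) \cdot v_0$ (coefficients summing to $1$) of elements of $S$, so it belongs to $S$ by hypothesis. For closure under scaling by $r \in R$, I have $r(s - v_0) = rs - rv_0 \in W$ precisely when $rs - rv_0 + v_0 = r \cdot s + (1 - r) \cdot v_0 \in S$, and again the coefficients sum to $1$, so this is an element of $S$ by the affine-closure hypothesis.

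The only real subtlety is that one must carefully arrange each membership test for $W$ so that the corresponding element of $S$ is genuinely produced from an affine combination involving only elements already known to lie in $S$ (in practice, $s_1, s_2, v_0 \in S$). Once this bookkeeping is done, all three submodule axioms fall out of a one-line computation. There is nothing in the argument that requires $R$ to be commutative, so the result holds for any unital ring, and $W$ is canonically the kernel of the induced ``difference from $v_0$'' map, making it clear that different choices of basepoint in $S$ yield the same submodule $W$ up to the translation $S = v_0 + W$.
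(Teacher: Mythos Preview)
Your proof is correct and follows essentially the same approach as the paper: fix a basepoint $v_0\in S$, set $W=S-v_0$, and verify the submodule axioms by exhibiting the needed elements of $S$ as affine combinations with coefficients summing to $1$. The paper additionally checks closure under negation separately, but as you implicitly use, this is subsumed by closure under scaling by $-1\in R$.
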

\begin{proof}
Let $S$ be a subset of $V$ closed under affine $R$--linear combinations.
Since $S$ is non-empty, we may pick some $s \in S$ and define:
\[
S - s = V_S = \set[s' - s]{s' \in S}
\]
We verify that $V_S$ is a submodule. Let $s' - s, s'' - s \in V_S$ and
$r \in R$. Observe that $0 = s - s \in V_S$ while
$-(s' - s) = (-s' + s) - s$ where $-s' + s \in S$ by closure of $S$ under affine
combinations so that $-(s' - s) \in V_S$. We also have that:
\[
(s' - s) + (s'' - s) = (s' + s'' - s) - s \in V_S
\text{ since } s' + s'' - s \in S
\]
by closure of $S$ under affine combinations again. This shows that $V_S$ is a
subgroup of $V$. Then, we observe that
\[
r(s' - s) = (rs' + (1 - r)s) - s \in V_S \text{ since } rs' + (1 - r)s \in S
\]
again by the closure of $S$ under affine combination. This shows that
$V_S$ is a subgroup of $V$ closed under $R$--scaling and thus a submodule.
\end{proof}

\begin{thm}
The set of connections on a bundle $E \to M$ is a coset of a submodule of the
$\Cinf(M, \K)$--module $\Lin\br{\Gamma(E), \Gamma(E \tensor T^*M)}$.
\end{thm}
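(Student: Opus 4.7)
The plan is to apply the preceding lemma about affine combinations directly: I will exhibit the set of connections on $E \to M$ as a non-empty subset of $\Lin\br{\Gamma(E), \Gamma(E \tensor T^*M)}$ closed under affine $\Cinf(M, \K)$-linear combinations. Non-emptiness is a standard fact: on any smooth manifold $M$ (which I will take to be paracompact so that partitions of unity exist), one can glue local trivial connections on trivializing charts via a partition of unity, and the Leibniz term from the partition of unity functions cancels because the partition sums to $1$. So the set of connections is non-empty, and it is visibly a subset of $\Lin\br{\Gamma(E), \Gamma(E \tensor T^*M)}$ because every connection is in particular a $\K$-linear map of section spaces, hence a $\Cinf(M, \K)$-linear combination with scalar coefficients of connections still lands in this hom set.

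The main content is the closure calculation. Given connections $\nabla_1, \dots, \nabla_n$ and smooth functions $r_1, \dots, r_n \in \Cinf(M, \K)$ with $\sum_i r_i = 1$, I would form $\nabla := \sum_i r_i \nabla_i$, defined pointwise as a $\Cinf(M, \K)$-linear combination inside $\Lin\br{\Gamma(E), \Gamma(E \tensor T^*M)}$, and check the two defining properties of a connection. The $\K$-linearity of $\nabla$ is immediate since each $\nabla_i$ is $\K$-linear and $\K$ embeds in $\Cinf(M, \K)$ as the constants. For the Leibniz rule, for $s \in \Gamma(E)$ and $r \in \Cinf(M, \K)$ I would compute
\begin{align*}
\nabla(r \cdot s)
&= \sum_i r_i \nabla_i(r \cdot s)
 = \sum_i r_i \br{r \cdot \nabla_i(s) + s \tensor dr} \\
&= r \cdot \sum_i r_i \nabla_i(s) + \br{\sum_i r_i} s \tensor dr
 = r \cdot \nabla(s) + s \tensor dr,
\end{align*}
where the final equality uses precisely the affine hypothesis $\sum_i r_i = 1$ to eliminate the coefficient in front of $s \tensor dr$. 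This shows that $\nabla$ is again a connection.

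With non-emptiness and affine closure established, the preceding lemma applies verbatim to the set of connections inside the $\Cinf(M, \K)$-module $\Lin\br{\Gamma(E), \Gamma(E \tensor T^*M)}$, yielding the desired coset description. The main obstacle, such as it is, is conceptual rather than technical: one must recognize that the Leibniz condition has the exact shape of an affine constraint, and that this is what the preceding algebraic lemma was set up to exploit. All of the individual algebraic manipulations above are routine given the pointwise $\Cinf(M, \K)$-module structure on section spaces introduced earlier in this subsection.
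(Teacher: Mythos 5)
Your proof is correct and takes essentially the same approach as the paper: invoke the affine-combination lemma, then verify the Leibniz rule for $\sum_i r_i \nabla_i$ directly using $\sum_i r_i = 1$. You are slightly more careful than the paper in explicitly addressing non-emptiness (via partitions of unity) and $\K$-linearity, both of which the lemma tacitly requires.
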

\begin{proof}
By the previous lemma, it suffices to show that the set of connections
is closed under affine $\Cinf(M, \K)$--linear combinations.
It is well known that the set of connections are closed under affine
$\K$--linear combinations \cite[10.5, p. 73]{LT17} --- although, the argument
there is for $\K = \R$, the same argument applies for $\K = \C$. We will show
that the same holds for affine $\Cinf(M, \K)$--linear combinations
of connections. For any $s \in \Gamma(E)$,
$r, r_1, \dots, r_n \in \Cinf(M, \K)$ with $\sum_{i = 1}^n r_i = 1$,
and connections $\nabla^1, \dots \nabla^n$ on $E$, we consider the following
expression:
\begin{align*}
\br{\sum_{i = 1}^n r_i\nabla^i}(r \cdot s)
&= \sum_{i = 1}^n (r_i\nabla^i)(r \cdot s) \\
&= \sum_{i = 1}^n r_i (\nabla^i(r \cdot s))
  && \text{scaling in $\Hom$ module} \\
&= \sum_{i = 1}^n r_i \br{r \cdot \nabla^i(s) + s \tensor dr}
  && \text{Leibniz property of $\nabla^i$} \\
&= \sum_{i = 1}^n r_i (r \nabla^i(s))
    + \br{\sum_{i = 1}^{n} r_i}(s \tensor dr) \\
&= \sum_{i = 1}^n r_i (r \nabla^i(s))
    + 1(s \tensor dr) \\
&= \sum_{i = 1}^n (r_i r) \nabla^i(s)
    + s \tensor dr \\
&= r \sum_{i = 1}^n r_i \nabla^i(s)
    + s \tensor dr
  && \text{commutativity of $\Cinf(M, \K)$} \\
&= r \br{\sum_{i = 1}^n r_i \nabla^i}(s)
    + s \tensor dr
\end{align*}
This shows that $\sum_{i = 1}^{n} r_i \nabla^i$ satisfies the Leibniz property.
Since $r_i, \nabla^i$ were arbitrary, we have the desired result.
\end{proof}

This allows us to write each connection $\nabla$ on
$E$ as a sum
\[
  \nabla = \varphi + A_{\nabla}
\]
where $\varphi$ is a fixed element of $\Lin(\Gamma(E), \Gamma(E \tensor T^*M))$
and $A_{\nabla}$ is an element of some fixed submodule of
$\Lin(\Gamma(E), \Gamma(E \tensor T^*M))$ varying with $\nabla$.
It is well known that
$\varphi$ can be taken as the exterior derivative operator $d$ but this will not
be important for our purposes. In particular,
this allows us to define for each $r \in \Cinf(\R)$, a mapping
on the set of connections as follows:
\[
  r \cdot \nabla := d + r \cdot A_{\nabla} =: d + A_{r \cdot \nabla}
\]
From this, a gluing operation for (a class of) connections is immediate. This
is made precise in the following theorem.

\begin{thm}
Let $E \to M$ and $E' \to M'$ be horizontal $1$--morphisms
(or bundle cobordisms) in $\BBun^V_{\s{C}}$ for some cobordism category $\s{C}$,
such that the gluing $E' * E \to M' * M$ exists. Then, for any two connections
$\nabla$ and $\nabla'$ on $E$ and $E'$ respectively, there exists a connection
$\nabla' * \nabla$ that has the same output as $\nabla$ on local sections of $E$
defined away from $E$'s trivialized boundary collar and to $\nabla'$ for local
sections of $E'$ defined away from $E'$'s trivialized boundary collar.
\end{thm}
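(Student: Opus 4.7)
Since $E$ and $E'$ are trivialised on collars $C \subset M$ and $C' \subset M'$ of their glued boundaries by Lemma \ref{thm:bundle_gluing}, the glued bundle $E' * E$ is trivial on the thickened collar $T$ of the gluing locus inside $M' * M$. The strategy is to use the affine-combination structure on connections (established immediately above), together with a smooth bump function, to patch $\nabla$ and $\nabla'$ together across $T$.

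My first task is to extend each of $\nabla$ and $\nabla'$ to a genuine connection on the glued bundle $E' * E$. For this I will fix any reference connection $D$ on $E' * E$, which exists by the standard partition-of-unity construction for connections on any smooth vector bundle. Then $\alpha := \nabla - D|_E$ lies in the submodule of $\Lin(\Gamma(E), \Gamma(E \tensor T^*M))$ controlling the coset of connections on $E$. Multiplying $\alpha$ by a smooth bump function supported in $M$ that is identically $1$ on $M \setminus C$ and vanishes near the gluing interface produces a cut-off operator $\tilde{\alpha}$; the key observation is that in the common trivialisation on $T$ both $\nabla$ and $D$ are expressible as the trivial connection plus $\text{End}(V)$--valued $1$--forms, so this cut-off makes sense fibre-wise and extends smoothly by zero across the gluing interface to an operator on sections of $E' * E$. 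Setting $\tilde{\nabla} := D + \tilde{\alpha}$ yields a connection on $E' * E$ that restricts to $\nabla$ on $M \setminus C$. An analogous construction produces $\tilde{\nabla}'$ restricting to $\nabla'$ on $M' \setminus C'$.

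Next, I choose a smooth cutoff $\chi : M' * M \to [0, 1]$ with $\chi \equiv 1$ on $M \setminus C$, $\chi \equiv 0$ on $M' \setminus C'$, transitioning smoothly across $T$, and define
\[
  \nabla' * \nabla := \chi \cdot \tilde{\nabla} + (1 - \chi) \cdot \tilde{\nabla}'.
\]
Since $\chi + (1 - \chi) = 1$, this is an affine $\Cinf(M' * M, \C)$--linear combination of connections and hence itself a connection by the preceding theorem. On $M \setminus C$ we have $\chi = 1$ and $\tilde{\nabla} = \nabla$, so $\nabla' * \nabla = \nabla$ there; symmetrically $\nabla' * \nabla = \nabla'$ on $M' \setminus C'$, giving the required support-restricted agreement for local sections.

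The principal technical obstacle is the extension-by-cutoff step: one must verify that $\tilde{\alpha}$ is genuinely smooth and $\Cinf$--linear across $\partial M$ (now interior to $M' * M$), and that this construction is compatible with the gluing on the trivialised parts. The triviality of both $E$ and $E'$ in the collar $T$, guaranteed by Lemma \ref{thm:bundle_gluing}, is precisely what makes this manageable --- the extension reduces to multiplying a matrix-valued $1$--form by a scalar bump function in the shared trivialisation, where smoothness is immediate.
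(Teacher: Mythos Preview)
Your argument is correct. Both your proof and the paper's exploit the affine structure on connections, but they organise the patching differently.

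The paper takes the exterior derivative $d$ as the fixed reference connection and re-uses the \emph{same} bump functions $f, f'$ already built in Lemma~\ref{thm:bundle_gluing} to trivialise the bundles near the boundary. Writing $\nabla = d + A_\nabla$ and $\nabla' = d + A_{\nabla'}$, they simply set
\[
  \nabla' * \nabla := d + f \cdot A_{\nabla} + f' \cdot A_{\nabla'},
\]
extending $f$ and $f'$ by zero across the gluing locus. This is a one-step construction: on the collar both $f$ and $f'$ vanish, so the glued connection is just $d$ there, which makes sense because the bundle is trivial on the collar by design.

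Your route instead introduces an auxiliary reference connection $D$ on the glued bundle (via partition of unity), extends each of $\nabla, \nabla'$ separately to global connections $\tilde{\nabla}, \tilde{\nabla}'$ by cutting off the difference from $D$, and only then forms the affine combination $\chi\tilde{\nabla} + (1 - \chi)\tilde{\nabla}'$. This is a genuine two-step argument and invokes the affine-combination theorem more directly than the paper does. What you gain is independence from the particular choice of reference (you never need $d$ to make global sense) and a cleaner separation of concerns; what the paper gains is economy, since it recycles the bump functions already on hand and avoids the extra reference $D$ and cutoff $\chi$.
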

\begin{proof}
Let the bump functions on $M$ and $M'$ used to trivialize $E$ and $E'$
respectively near boundaries, as defined in the proof of theorem
\ref{thm:bundle_gluing}, be $f$ and $f'$ respectively. Let
\[
  \nabla = d + A_{\nabla} \text{ and } \nabla' = d + A_{\nabla'}.
\]
We define:
\[
  \wh{\nabla} := f \cdot \nabla = d + f \cdot A_{\nabla}
  \text{ and } \wh{\nabla'} := f' \cdot \nabla' = d + f' \cdot A_{\nabla'}
\]
We can then extend the domain of $f$ to $M' * M$ by defining it to be zero on
$M'$. Similarly, we define $f'$ to be zero on $M$. This allows us to extend the
domain of $f \cdot A_{\nabla}$ and $f' \cdot A_{\nabla'}$ to $M' * M$ in the
same way. We can thus define:
\[
  \nabla' * \nabla := \wh{\nabla'} * \wh{\nabla}
    := d + f \cdot A_{\nabla} + f' \cdot A_{\nabla'}
\]
It is immediate that $\nabla' * \nabla$ satisfies the conditions of being a
connection pointwise. Hence, $\nabla' * \nabla$ is a connection on $E' * E$. We
also observe that where $f$ is $1$, $\nabla' * \nabla$ is equal to $\nabla$ and
likewise with $f', \nabla'$.
\end{proof}

\begin{defn}[Gluable Connection]
Connections of the form $\wh{\nabla}$ as in the previous theorem are called
gluable. The set of gluable connections on bundle cobordisms in
$\BBun^V_{\s{C}}$ is denoted $\Conn^V_{\s{C}}$.
\end{defn}

\begin{cor}
Gluing of connections on bundle cobordisms is associative and unital upto gauge
isomorphisms.
\end{cor}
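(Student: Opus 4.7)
The plan is to leverage the decomposition $\nabla = d + A_{\nabla}$ of a connection into the exterior derivative plus an element of a fixed submodule of $\Lin(\Gamma(E), \Gamma(E \tensor T^*M))$, together with the bundle-level associators and unitors already established in $\BBun^V_{\s{C}}$. For associativity, fix three horizontally composable bundle cobordisms $E_i \to M_i$ ($i \in \{1,2,3\}$) carrying gluable connections $\nabla_i = d + A_{\nabla_i}$ (with bump functions $f_i$), and consider the bundle isomorphism
\[
  \alpha_{E_1, E_2, E_3} : (E_3 * E_2) * E_1 \to E_3 * (E_2 * E_1)
\]
inherited from the associator in $\Man$. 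I would show that $\alpha_{E_1, E_2, E_3}$ is a gauge transformation between the two associations of the glued connection, i.e.\ that
\[
  \alpha_{E_1, E_2, E_3} \diamond \bigl((\nabla_3 * \nabla_2) * \nabla_1\bigr)
   = \nabla_3 * (\nabla_2 * \nabla_1).
\]

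First I would unfold both sides using the definition $\nabla' * \nabla = d + f \cdot A_{\nabla} + f' \cdot A_{\nabla'}$ (with each $A_{\nabla_i}$ extended by zero outside $E_i$). Both sides then have the form $d + \text{(sum of three bump-damped local terms)}$ supported on disjoint regions of the glued total space, so it suffices to compare them component-by-component. The key observation, mirroring the proof of Theorem 3.6 for disjoint unions, is that $d^*\alpha = (\alpha^{-1})^*$ on cotangent fibres and that $\alpha$ acts as the identity fibrewise away from the gluing collars; hence the action of $\alpha \diamond -$ on each local summand is exactly the identification given by the associator in $\Man$ applied to the corresponding section and cotangent vector. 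The exterior derivative $d$ is preserved automatically because it is characterised by the Leibniz rule, which is natural under bundle isomorphisms. The only delicate point is that the bump functions $f, f'$ used on the left side versus those used on the right might differ on the intermediate cobordism $E_2$; however, both gluings enforce that $A_{\nabla_2}$ is damped to $0$ along both of its boundary collars, and on the interior of $M_2$ both expressions restrict to $d + A_{\nabla_2}$. Thus the two expressions agree after applying $\alpha$.

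For unitality, I would argue that the horizontal unit in $\BBun^V_{\s{C}}$ — the trivial bundle $W \times I \times V \to W \times I$ on a cylinder — carries the canonical flat connection $d$ (for which $A_d = 0$) as its unit, so that for any gluable $(\nabla, E)$ the gluings $\nabla * d$ and $d * \nabla$ reduce to $d + f \cdot A_{\nabla}$, where $f$ is $1$ on the support of $A_{\nabla}$ after the damping region and smoothly transitions to $0$ across the collar that gets identified with the unit. The bundle unitors $\lambda_E, \rho_E$ from $\BBun^V_{\s{C}}$ are fibrewise linear and the identity away from the boundary collar, so the same calculation as in the associativity case shows that $\lambda_E \diamond (d * \nabla) = \nabla$ and $\rho_E \diamond (\nabla * d) = \nabla$ — i.e.\ the unitors are gauge transformations.

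The main obstacle is the bump-function issue: different admissible choices of $f, f'$ could a priori yield different connections after gluing, and the associator and unitor in $\Man$ are not constructed with a preferred bump function in mind. I would handle this by showing that any two choices of compatible bump functions for the same gluable connection $\nabla$ produce gauge-equivalent results — explicitly, that the identity bundle map is a gauge transformation once restricted to the interior where $A_\nabla$ is undamped, and the damping region is contractible so the two dampings are connected by an affine $\Cinf(M,\K)$--linear path of connections all of which coincide with $\nabla$ on sections supported away from the collar. This lets the gauge isomorphisms constructed above be well-defined independently of the bump-function choices, completing the proof that $-*-$ is associative and unital up to gauge isomorphism.
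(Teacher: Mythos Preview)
Your approach is essentially the same as the paper's: both use the bundle associators and unitors from $\BBun^V_{\s{C}}$ as the candidate gauge transformations, invoke the disjoint-union calculation (the paper's Theorem 3.6 analogue) to verify they actually transport the glued connection correctly, and take the exterior derivative $d$ on the cylinder as the horizontal unit. The paper's proof is much terser and does not engage with the bump-function ambiguity you raise in your final paragraph; it simply observes that extending bump functions by zero is associative, so your extra discussion there is addressing a concern the paper implicitly sidesteps rather than a gap in the argument.
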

\begin{proof}
For associativity, it suffices to observe that extending domains of bump
functions by zeros is associative and that associators from $\BBun^V_{\s{C}}$
are associator gauge transformations --- the proof of the latter claim is similar
to that of the associativity of the disjoint union of connections.

For unitality, we first observe that on boundary collars $f \cdot \nabla$ is
equal to $d$ for all connections $\nabla = d + A_{\nabla}$. The exterior
derivative operators, that are themselves connections, on the gluing
units of $\BBun^V_{\s{C}}$ --- that is, the cylinders on boundaries --- suffice
as the gluing units for connections. The unitors carry over like associators.
\end{proof}

Finally, one has the following fundamental fact.

\begin{thm}
For any cobordism double category $\s{C}$, $\BBun^V_{\s{C}}$ can be promoted to
a monoidal double category of connections by taking pairs $(E, \nabla)$ for each
bundle cobordism $E$ in $\BBun^V_{\s{C}}$ and a gluable connection $\nabla$ on
$E$ as the horizontal $1$--morphisms. Vertical $1$--morphisms are taken to be
bundle isomorphisms and $2$--morphisms are taken to be gauge transformations.
The rest of the structure is modified in the obvious ways.
\end{thm}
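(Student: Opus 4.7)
The plan is to lift the monoidal double categorical structure of $\BBun^V_{\s{C}}$ by decorating every piece of its data with a gluable connection. Since $\BBun^V_{\s{C}}$ already supplies the underlying scaffolding --- object category, morphism category, source/target/unit functors, horizontal composition with associators and unitors, and monoidal product --- the work reduces to checking that each structural piece lifts coherently to pairs $(E, \nabla)$ with $\nabla \in \Conn^V_{\s{C}}$.

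First I would enlarge the object category of $\BBun^V_{\s{C}}$ so that its objects are pairs $(E_0, \nabla_0)$ where $E_0$ is a trivial $V$--bundle on an object of $\s{C}$ and $\nabla_0 = d$, with morphisms the gauge transformations between them. The morphism category is defined analogously: horizontal $1$--morphisms are pairs $(E, \nabla)$ with $\nabla$ gluable on $E$, and $2$--morphisms are gauge transformations between them. These are well-defined (sub)categories because $\Conn$ is a groupoid. The source, target, and unit functors lift from $\BBun^V_{\s{C}}$: for gluable $\nabla = d + A_{\nabla}$, the restriction to the trivialized boundary collar is literally $d$, and the cylinder on a boundary component naturally carries the exterior derivative, so the restriction and unit assignments land in the enlarged object category. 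Horizontal composition is taken to be $(E, \nabla) * (E', \nabla') := (E * E', \nabla * \nabla')$, which is well-defined by the preceding theorem on gluing of connections.

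For the monoidal product, I would set $(E, \nabla) \amalg (E', \nabla') := (E \amalg E', \nabla \amalg \nabla')$, invoking Theorem~\ref{thm:bundle_cat} for well-definedness on both objects and horizontal $1$--morphisms. Associators and unitors for horizontal composition come from those of $\BBun^V_{\s{C}}$ together with the preceding corollary, which supplies the fact that $*$ on connections is associative and unital up to gauge isomorphism. The monoidal associators/unitors were already shown to be gauge transformations in the discussion culminating in equation~\eqref{eqn:conn_assoc}.

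The main obstacle is verifying that the remaining coherence cells of $\BBun^V_{\s{C}}$ --- associators for $*$, unitors against the cylinder bundles, and the interchange $2$--cells between $*$ and $\amalg$ --- lift to genuine gauge transformations with respect to the glued connections. I would reduce this to a local check on the decomposition $M = C \cup (M \setminus C_0)$ provided by Lemma~\ref{thm:bundle_gluing}: on the trivialized collar $C$ every connection in sight reduces to $d$ and the underlying bundle isomorphism is the identity up to reparametrization, so $f \diamond d = d$ holds trivially; on $M \setminus C_0$ the two connections agree summand by summand because the extensions by zero used in the definition of $*$ localize $A_{\nabla}$ and $A_{\nabla'}$ to disjoint supports. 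The calculation is formally parallel to the argument preceding equation~\eqref{eqn:conn_assoc} and proceeds by expanding the $- \diamond -$ action and matching summands on each region, at which point the theorem follows in the relaxed sense announced in the introduction.
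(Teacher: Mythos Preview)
Your proposal is essentially correct and matches the paper's approach: the paper states this theorem without proof, treating it as the evident culmination of the preceding constructions (the groupoid $\Conn$, the gluing theorem for connections, the corollary on associativity and unitality of $*$, and Theorem~\ref{thm:bundle_cat} for $\amalg$), and your sketch simply spells out how each piece of the double categorical data lifts. The one minor discrepancy is that you take the vertical $1$--morphisms to be gauge transformations between $(E_0, d)$ pairs, whereas the paper's statement keeps them as bare bundle isomorphisms; either choice is workable in the relaxed sense the paper adopts, but you should be aware of the difference.
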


\begin{defn}
We denote the monoidal double category of connections in the last theorem as
$\CConn^V_{\s{C}}$.
\end{defn}

At this point, one might think that we can immediately define a notion of TQFT
by choosing suitable paths for a representative for each cobordism class,
taking the linear maps obtained by parallel transport along these paths and
combining them in a suitable way. However, the problem remains that two gauge
transformations need not take a collection of paths in the domain to the same
collection of paths in the codomain. We will build the machinery to handle this
matter in the next section.


\section{Paths for Parallel Transport}

In order to obtain linear maps by parallel transport on manifolds, we need
additional structure on top of connections. These are collections of paths on
manifolds along which we will parallel transport vectors in the fibres of a
bundle with connection. We shall now formalize this apparatus in terms of
categories. We will require a notion of graphs on manifolds whose vertices are
points, possibly repeated, on the manifold and whose edges are paths on the
manifold.

\subsection{Graphs Encoding Algebraic Expressions}\label{subsec:alg_graph_exp}

We will now describe a method of encoding expressions involving tensor products,
point-wise algebra products and composition of linear maps $A \to A$ for some
algebra $A$, using directed graphs. As a matter of convention, we will take all
graphs to mean directed acyclic graphs with possibly multiple copies of the same
vertex but where we do not allow more than one edge
with the same source and target nor self-loops, unless stated otherwise.
We should clarify that by multiple copies of the same vertex, we mean a
labelling of the vertices where multiple vertices can have the same label and
vertices with the same label are regarded as ``copies of the same vertex''.
We should also distinguish the word ``label'' from ``colour'' as we will also
require graphs to eventually be vertex--$2$--coloured in addition to the
aforementioned vertex labelling.
We also allow the underlying
undirected graph of any directed graph to be a forest. Given a graph
$G = (V, E)$, we will write $V = V(G)$ and $E = E(G)$. We now see a motivating
example.

\begin{exm}\label{exm:egraph1}
Consider a graph consisting of nine vertices $1, \dots, 9$ with edges:
\[\begin{array}{ccccc}
  (1, 3) &,& (1, 4) &,& (2, 3),\\
  (3, 5) &,& (3, 6) &,& (4, 7),\\
  (5, 8) &,& (6, 9) &,& (7, 9)
\end{array}\]
We visualize this graph as follows:
\[\begin{tikzpicture}[xscale=2,yscale=0.5]
\node at (0, 3) (v1) {1};
\node at (0, 1) (v2) {2};
\node at (1, 3) (v3) {3};
\node at (1, 1) (v4) {4};
\node at (2, 4) (v5) {5};
\node at (2, 2) (v6) {6};
\node at (2, 0) (v7) {7};
\node at (3, 3) (v8) {8};
\node at (3, 1) (v9) {9};
\midarrow{v1}{v3}
\midarrow[0.33]{v1}{v4}
\midarrow[0.33]{v2}{v3}
\midarrow{v3}{v5}
\midarrow{v3}{v6}
\midarrow{v4}{v7}
\midarrow{v5}{v8}
\midarrow{v6}{v9}
\midarrow{v7}{v9}
\end{tikzpicture}\]
Thinking of individual edges in the above graph as mappings $A \to A$ of some
bimonoid $A$ with product $m : A \otimes A \to A$ and $c : A \to A \otimes A$
in some monoidal category, we can think of multiple
incoming edges on a vertex --- for example, the incoming edges on $3$ --- as a
multiplication $\cwedge$ of maps and multiple outgoing edges --- for example, the
outgoing edges of $1$ --- as a comultiplication $\cvee$ of maps.
A bit of clarification is in order: for maps
$f : B \to A, g : C \to A, h : A \to B, k : A \to C$ in the
ambient monoidal category, by the multiplication $f \cwedge g$, we mean the
composite:
\[\begin{tikzcd}
 B \otimes C \ar[r, "f \otimes g"] & A \otimes A \ar[r, "m"] & A
\end{tikzcd}\]
and by the comultiplication $h \cvee k$, we mean the composite
\[\begin{tikzcd}
A \ar[r, "c"] & A \otimes A \ar[r, "h \otimes k"] & B \otimes C
\end{tikzcd}\]
Parallel edges,
multiplications or comultiplications with disjoint sources and targets can then
be thought of as a tensor product of maps. We can treat a single vertex as an
identity mapping.

We can then modify the graph as follows:
\[\begin{tikzpicture}[xscale=2,yscale=0.75]
\node at (-2, 3) (v1) {1};
\node at (-2, 1) (v2) {2};
\node at (-1, 4) (v33) {3};
\node at (-1, 2) (v44) {4};
\node at (-1, 1) (v22) {2};
\node at (0, 4) (v333) {3};
\node at (0, 2) (v3333) {3};
\node at (0, 1) (v444) {4};
\node at (1, 3) (v3) {3};
\node at (1, 1) (v4) {4};
\node at (2, 4) (v5) {5};
\node at (2, 2) (v6) {6};
\node at (2, 0) (v7) {7};
\node at (3, 3) (v8) {8};
\node at (3, 1) (v9) {9};
\midarrow{v1}{v33}
\midarrow{v1}{v44}
\midarrow{v2}{v22}
\midarrow{v33}{v333}
\midarrow[0.33]{v22}{v3333}
\midarrow[0.33]{v44}{v444}
\midarrow{v333}{v3}
\midarrow{v3333}{v3}
\midarrow{v444}{v4}
\midarrow{v3}{v5}
\midarrow{v3}{v6}
\midarrow{v4}{v7}
\midarrow{v5}{v8}
\midarrow{v6}{v9}
\midarrow{v7}{v9}
\end{tikzpicture}\]
Thinking of edges between copies of the same vertex, which did not originally
exist, as the identity map and crossing edges as a twist operation --- denoted
$\boxtimes$ --- of maps, we can capture the graph into an algebraic expression of
the following form:
\begin{align*}
        & ((5, 8) \tensor ((6, 9) \cwedge (7, 9))) \\
  \circ & (((3, 5) \cvee (3, 6)) \tensor (4, 7)) \\
  \circ & (((3, 3) \cwedge (3, 3)) \tensor (4, 4)) \\
  \circ & ((3, 3) \tensor ((4, 4) \boxtimes (2, 3))) \\
  \circ & (((1, 3) \cvee (1, 4)) \tensor (2, 2))
\end{align*}
Again, some clarification is needed here: we are now assuming a braiding
isomorphism $\tau_{B, C} : B \otimes C \to C \otimes B$ in the ambient monoidal
category, for each pair of objects $B, C$, and given maps $f : W \to X$
and $g : Y \to Z$, the twist $f \boxtimes g$ is defined to be the composite:
\[\begin{tikzcd}
W \otimes Y \ar[r, "f \otimes g"] & X \otimes Z \ar[r, "\tau_{X, Z}"] &
  Z \otimes X
\end{tikzcd}\]

Observe that we have taken a directed acyclic graph and converted it to a
diagram in $\Cob_{2}$ --- the symmetric monoidal category of cobordisms of
dimension $2$. This, in turn, yields an expression involving operations on
endomorphisms of a monoid in some monoidal category. However, we should note
that this is just one possible interpretation of the graph above.
\end{exm}

This motivates us to define an algorithm for extracting algebraic expressions
from a directed acyclic graph such as the one above.

\begin{alg}[Graph Reduction Algorithm]\label{alg:expr-construction}
Let $G = (V, E)$ be any graph. We make the following modifications to $G$:
\begin{enumerate}
\setlength{\itemsep}{0pt}

\item For each vertex $v$, choose an ordering $(v, w_1), \dots, (v, w_n)$ of its
outgoing edges.

\item Let $S(G)$ be the set consisting of vertices with no incoming
edges --- called the source vertices of $G$ --- and $T(G)$, the set consisting of
vertices with no outgoing edges --- called the target vertices of $G$. We then
choose an ordering of $S(G)$. Note that $S(G) \cap T(G)$ might be non-empty
because of vertices with no edges, incoming or outgoing --- these will be called
the edgeless vertices. These choices induce more structure on $G$:

\begin{lem}\label{lem:level-ordering}
Let $G$ be as above with the chosen ordering of $S(G)$ and, for each vertex,
the chosen ordering of its outgoing edges. Then, the vertices of $G$ can be
written as a disjoint union $V_1 \amalg V_2 \amalg \cdots \amalg V_n$ such that:
\begin{enmrt}
\li each $V_i$ is non-empty when $V(G)$ is non-empty,
\li $V_1 = S(G)$,
\li for each $1 < i \leq n$, the sources of the incoming edges of the vertices
in $V_i$ are all contained in $\coprod_{j = 1}^{ i - 1} V_{j}$, and
\li there is an induced ordering of the vertices within $V_i$ for each
$1 < i \leq n$.
\end{enmrt}
\end{lem}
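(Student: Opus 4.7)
The plan is to build the decomposition $V_1 \amalg \cdots \amalg V_n$ by a level-wise topological sort of $G$, and then propagate the orderings level by level using the chosen data. First, I would set $V_1 := S(G)$, which immediately gives (ii) and carries the chosen ordering; then, recursively for $i > 1$, assuming $V_1, \ldots, V_{i-1}$ have been defined and their union is a proper subset of $V(G)$, I would let $V_i$ be the set of vertices $v \in V(G) \setminus \coprod_{j < i} V_j$ such that every incoming edge of $v$ has source in $\coprod_{j < i} V_j$. Property (iii) is built into this definition, and the $V_j$'s are disjoint by construction.

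Next, I would verify non-emptiness and termination. Whenever the remainder $V(G) \setminus \coprod_{j < i} V_j$ is non-empty, the subgraph of $G$ induced on it is a finite directed acyclic graph, so it contains at least one source vertex; any such vertex has all of its original incoming edges sourced from $\coprod_{j < i} V_j$ and therefore lies in $V_i$, yielding (i). Finiteness of $V(G)$ guarantees that this process terminates at some stage $n$ with $\coprod_{j = 1}^{n} V_j = V(G)$.

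For the induced ordering (iv), the key observation is that every $v \in V_i$ with $i > 1$ admits at least one incoming edge whose source lies in the immediately preceding level $V_{i-1}$: otherwise all incoming sources of $v$ would already lie in $\coprod_{j < i - 1} V_j$, so $v$ would itself satisfy the defining condition for $V_{i-1}$, contradicting $v \in V_i$. Given this, I would proceed inductively: assuming $V_{i-1}$ has been ordered, for each $v \in V_i$ let $u(v) \in V_{i-1}$ be the smallest element under the $V_{i-1}$-ordering with an edge to $v$, and let $p(v)$ denote the position of $(u(v), v)$ in the chosen outgoing-edge ordering at $u(v)$; I would then order $V_i$ lexicographically by the pair $(u(v), p(v))$. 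The no-parallel-edges convention on $G$ makes $v \mapsto (u(v), p(v))$ injective, so this is a genuine total order. The only point requiring real care is the anchoring observation above --- without it the inductive ordering scheme has nothing to hang on to, as one could imagine a $v \in V_i$ whose incoming edges skip past $V_{i-1}$ entirely --- but it is forced by the defining condition of $V_{i-1}$; everything else is routine topological-sort bookkeeping.
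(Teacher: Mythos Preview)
Your proposal is correct and follows essentially the same approach as the paper: both construct the $V_i$ by the same recursive topological-sort rule and then propagate the ordering level by level using the previous level's order together with the outgoing-edge orderings. You are in fact somewhat more careful than the paper, making the non-emptiness argument explicit and isolating the anchoring observation (that each $v\in V_i$ has an incoming edge from $V_{i-1}$), which the paper only records later as a separate corollary.
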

\begin{proof}
We construct the sets $V_i$ as follows:
\begin{enmrt}
\li Let $V_1$ be the set of vertices with no incoming edges --- i.e.,
$V_1 := S(G)$
\li Let $V_{k + 1}$ be the set of vertices in
$G \setminus (V_1 \cup \cdots \cup V_k)$ with incoming edges only from
$V_1, \dots, V_k$.
\end{enmrt}
The first three properties are easy to check.
For the last property,
we first note that the ordering on $S(G) = V_1$ induces an ordering of $V_2$ as
follows. Let $S(G)$ be ordered as $u_1, \dots, u_n$. Let the ordering of the
outgoing edges of $u_i$ be $(u_i, v_{i, 1}), \dots, (u_i, v_{i, k_i})$. We let
$v_i := v_{1, i}$ for $1 \leq i \leq k_1$. For $1 \leq j < n$,
$k_j < i \leq k_{j + 1}$, we let $v_i := v_{j + 1, i}$. Then, the $v_i$ are an
ordering of $V_2$. We can repeat this process with $V_2$ in place of $S(G)$ and
so on to obtain an ordering for each level.
\end{proof}

\begin{defn}[Level Ordering]\label{defn:level-ordering}
We shall call this partition of $V(G)$ a level ordering of $G$ and the subsets
$V_i$, its levels, with respect to the choices. We will call the process
described in the above proof as the level ordering algorithm.
\end{defn}

\begin{cor}\label{cor:inedge-ordering}
Let $G$ be as above with the chosen ordering of $S(G)$ and the outgoing edges
of each vertex. For each vertex $v$ of $G$, this induces an ordering $(u_1, v),
\dots, (u_k, v)$ of the incoming edges of $v$.
\end{cor}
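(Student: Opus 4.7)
The plan is to leverage Lemma \ref{lem:level-ordering} directly. That lemma already endows each level $V_i$ with an ordering derived only from the ordering of $S(G)$ and the orderings of outgoing edges at the preceding levels. Concatenating these level-wise orderings in order of increasing level index yields a canonical total order $\preceq$ on all of $V(G)$ determined solely by the chosen data.

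Given a vertex $v$, by property (iii) of Lemma \ref{lem:level-ordering}, if $v \in V_i$ then every incoming edge of $v$ has its source in $V_1 \amalg \cdots \amalg V_{i - 1}$. Since our convention on graphs disallows more than one edge with the same source and target, the assignment $e = (u, v) \mapsto u$ is an injection from the set of incoming edges of $v$ into $\coprod_{j < i} V_j$. Restricting $\preceq$ to the image of this injection and pulling back along it yields a total ordering of the incoming edges of $v$, producing the desired list $(u_1, v), \dots, (u_k, v)$.

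The only subtlety I anticipate is verifying that no extraneous choices sneak in: the ordering on each $V_j$ for $j < i$ must have been fixed before we examine $v$'s incoming edges, so that the order of $(u_1, v), \dots, (u_k, v)$ is fully determined by the original data (the ordering of $S(G)$ and of the outgoing edges at each vertex) rather than by any auxiliary choice. This is precisely what the inductive construction in the proof of Lemma \ref{lem:level-ordering} guarantees, so the corollary reduces to a one-line appeal to that lemma together with the observation about injectivity above.
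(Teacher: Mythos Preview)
Your argument is correct and follows the same idea as the paper's proof: use the level orderings from Lemma~\ref{lem:level-ordering} to order the sources of the incoming edges, and hence the edges themselves. Your version is in fact slightly more careful than the paper's, which asserts that the sources of the incoming edges of $v \in V_i$ lie in $V_{i-1}$ alone; as you note, property~(iii) only guarantees they lie in $\coprod_{j<i} V_j$, so concatenating the level orderings (and invoking the no-parallel-edges convention for injectivity) is the right way to state it.
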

\begin{proof}
$v$ must lie in $V_i$ for some $i$. If $i = 1$, then $v$ has no incoming edges.
Otherwise, $i > 1$ and the source vertices of the incoming edges of
$v$ are in $V_{i - 1}$ and inherit an ordering from $V_{i - 1}$, giving an
ordering of the incoming edges of $v$.
\end{proof}

\begin{cor}\label{cor:lvltolvl}
Let $V_1, V_2, \dots, V_k$ be the induced level ordering of a graph
$G$ as above. Then for each $i \in \set{2, \dots, k}$ and each
$v \in V_i$ that is not edgeless, there is an edge $(u, v)$ with
$u \in V_{i - 1}$.
\end{cor}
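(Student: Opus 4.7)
The plan is to unpack the recursive definition of the levels $V_1, \dots, V_k$ from Lemma~\ref{lem:level-ordering} and argue by contradiction. First I would record the two features of the construction I will use: the levels are pairwise disjoint, and by the definition of $V_i$, every vertex in $V_i$ lies outside $V_1 \cup \cdots \cup V_{i-1}$ and has all of its incoming edges originating in $V_1 \cup \cdots \cup V_{i-1}$.

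Next, fix $v \in V_i$ with $i \geq 2$ and suppose for contradiction that no incoming edge of $v$ has source in $V_{i-1}$. Combined with the observation above, every incoming edge of $v$ then has source in $V_1 \cup \cdots \cup V_{i-2}$. Since the levels are disjoint and $v \in V_i$, we also have $v \notin V_1 \cup \cdots \cup V_{i-2}$. Together these two facts say that $v$ satisfies the membership condition for $V_{i-1}$ in the construction --- a vertex outside $V_1 \cup \cdots \cup V_{i-2}$ whose incoming edges come only from those earlier levels --- so $v$ would have been placed in $V_{i-1}$. This contradicts $v \in V_i$ by disjointness of the levels.

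The only edge case to check is whether $v$ has any incoming edges at all, since a vertex with none would make the above contradiction vacuous. But if $v$ had no incoming edges, then $v \in S(G) = V_1$, which again contradicts $v \in V_i$ with $i \geq 2$. The non-edgeless hypothesis can be invoked here directly, although it is in fact automatic once $i \geq 2$. There is no real obstacle in this argument: the proof amounts to recognizing that the level assignment is greedy, so each vertex lands in the earliest level compatible with its incoming edges, forcing at least one predecessor to lie in the immediately preceding level.
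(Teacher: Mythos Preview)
Your proof is correct and is exactly the argument the paper has in mind; the paper's own proof is the single sentence ``If not, the level-ordering algorithm would place $v$ at a lower level,'' and you have simply unpacked this one-liner in detail. Your observation that the non-edgeless hypothesis is redundant once $i \geq 2$ (since $V_1 = S(G)$ already contains every vertex without incoming edges) is also correct.
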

\begin{proof}
If not, the level-ordering algorithm would place $v$ at a lower level.
\end{proof}

\item Copy vertices and add edges to make the following modification, where the
incoming edges are in order from the lowest at the top to the highest at the
bottom:
\[\begin{tikzpicture}
\node at (2, 0)     (v) {$v$};
\node at (0, 1)     (u1) {$u_1$};
\node at (0, 0.5)   (u2) {$u_2$};
\node at (0, 0)     (u3) {$u_3$};
\node at (0, -1)    (uk) {$u_k$};
\midarrow{u1}{v}
\midarrow{u2}{v}
\midarrow{u3}{v}
\midarrow{uk}{v}
\draw[thick, loosely dotted] (u3) -- (uk);
\end{tikzpicture}
\qquad
\begin{tikzpicture}
\node at (0, 1)   (TOP)     {};
\node at (0, 0)   (TO)      {$\Longrightarrow$};
\node at (0, -1)  (BOTTOM)  {};
\end{tikzpicture}
\qquad
\begin{tikzpicture}
\node at (4, -1)     (v) {$v$};
\node at (0, 1)     (u1) {$u_1$};
\node at (0, 0.5)   (u2) {$u_2$};
\node at (0, 0)     (u3) {$u_3$};
\node at (0, -1)    (uk) {$u_k$};
\node at (1, 0.5) (a)  {$v$};
\node at (2, 0)   (b)  {$v$};
\midarrow{u1}{a}
\midarrow{u2}{a}
\midarrow{a}{b}
\midarrow{u3}{b}
\draw[thick, loosely dotted] (b) -- (v);
\draw[thick, loosely dotted] (u3)   -- (uk);
\midarrow{uk}{v};
\end{tikzpicture}
\]

\item Copy vertices and edges to make the following modification similar to the
previous step:
\[
\begin{tikzpicture}
\node at (-2, 0)     (v) {$v$};
\node at (0, 1)     (w1) {$w_1$};
\node at (0, 0.5)   (w2) {$w_2$};
\node at (0, 0)     (w3) {$w_3$};
\node at (0, -1)    (wk) {$w_k$};
\midarrow{v}{w1}
\midarrow{v}{w2}
\midarrow{v}{w3}
\midarrow{v}{wk}
\draw[thick, loosely dotted] (u3) -- (uk);
\end{tikzpicture}
\qquad
\begin{tikzpicture}
\node at (0, 1)   (TOP)     {};
\node at (0, 0)   (TO)      {$\Longrightarrow$};
\node at (0, -1)  (BOTTOM)  {};
\end{tikzpicture}
\qquad
\begin{tikzpicture}
\node at (-4, -1)   (v)  {$v$};
\node at (0, 1)     (w1) {$w_1$};
\node at (0, 0.5)   (w2) {$w_2$};
\node at (0, 0)     (w3) {$w_3$};
\node at (0, -1)    (wk) {$w_k$};
\node at (-1, 0.5) (a)  {$v$};
\node at (-2, 0)   (b)  {$v$};
\midarrow{a}{w1}
\midarrow{a}{w2}
\midarrow{b}{a}
\midarrow{b}{w3}
\draw[thick, loosely dotted] (b) -- (v);
\draw[thick, loosely dotted] (w3)   -- (wk);
\midarrow{v}{wk};
\end{tikzpicture}
\]

\item At this point, every vertex has both indegree and outdegree at most $2$.
The chosen edge orderings induces a local orientation --- in an
informal sense --- on the graph. By this we mean, that this allows us to
distinguish the following diagrams in a precise sense:
\[
\begin{tikzpicture}
\node at (0, 0) (u) {$u$};
\node at (2, 0.5) (v) {$v$};
\node at (2, -0.5) (w) {$w$};
\midarrow{u}{v}
\midarrow{u}{w}
\end{tikzpicture}
\qquad
\qquad
\begin{tikzpicture}
\node at (0, 0) (u) {$u$};
\node at (2, -0.5) (v) {$v$};
\node at (2, 0.5) (w) {$w$};
\midarrow{u}{v}
\midarrow{u}{w}
\end{tikzpicture}
\]
The distinction is that if we take $(u, v) < (u, w)$ in the left picture, say,
then we can take $(u, v) > (u, w)$ in the right picture.

Then, for every edge that is shared between a ``multiplication'' and a
``comultiplication'', considering the edge orderings chosen before, we have the
following possibilities for common edges and we make the modifications shown:
\[
\begin{tikzpicture}
\node at (0, 2) (u) {$u$};
\node at (2, 2) (v) {$v$};
\node at (0, 0) (w) {$w$};
\node at (2, 0) (x) {$x$};
\midarrow{u}{v}
\midarrow{u}{x}
\midarrow{w}{x}
\end{tikzpicture}
\qquad
\begin{tikzpicture}
\node at (0, 1)   (TOP)     {};
\node at (0, 0)   (TO)      {$\Longrightarrow$};
\node at (0, -1)  (BOTTOM)  {};
\end{tikzpicture}
\qquad
\begin{tikzpicture}
\node at (0, 2) (u) {$u$};
\node at (2, 2) (a) {$v$};
\node at (4, 2) (v) {$v$};
\node at (2, 1) (b) {$x$};
\node at (0, 0) (w) {$w$};
\node at (2, 0) (c) {$w$};
\node at (4, 0) (x) {$x$};
\midarrow{u}{a}
\midarrow{a}{v}
\midarrow{u}{b}
\midarrow{b}{x}
\midarrow{w}{c}
\midarrow{c}{x}
\end{tikzpicture}
\]
\[
\begin{tikzpicture}
\node at (0, 0) (u) {$u$};
\node at (1, 2) (v) {$v$};
\node at (2, 0) (w) {$w$};
\midarrow{u}{v}
\midarrow{v}{w}
\midarrow{u}{w}
\end{tikzpicture}
\qquad
\begin{tikzpicture}
\node at (0, 1)   (TOP)     {};
\node at (0, 0)   (TO)      {$\Longrightarrow$};
\node at (0, -1)  (BOTTOM)  {};
\end{tikzpicture}
\qquad
\begin{tikzpicture}
\node at (0, 1) (u) {$u$};
\node at (2, 2) (v) {$v$};
\node at (2, 0) (a) {$w$};
\node at (4, 1) (w) {$w$};
\midarrow{u}{v}
\midarrow{u}{a}
\midarrow{v}{w}
\midarrow{a}{w}
\end{tikzpicture}
\]

\[
\begin{tikzpicture}
\node at (0, 2) (u) {$u$};
\node at (2, 2) (v) {$v$};
\node at (0, 0) (w) {$w$};
\node at (2, 0) (x) {$x$};
\midarrow{u}{v}
\midarrow[0.33]{w}{v}
\midarrow[0.33]{u}{x}
\end{tikzpicture}
\qquad
\begin{tikzpicture}
\node at (0, 1)   (TOP)     {};
\node at (0, 0)   (TO)      {$\Longrightarrow$};
\node at (0, -1)  (BOTTOM)  {};
\end{tikzpicture}
\qquad
\begin{tikzpicture}
\node at (0, 2) (u) {$u$};
\node at (2, 2) (a) {$v$};
\node at (4, 2) (aa) {$v$};
\node at (6, 2) (v) {$v$};
\node at (2, 1) (b) {$x$};
\node at (4, 1) (bb) {$w$};
\node at (0, 0) (w) {$w$};
\node at (2, 0) (c) {$w$};
\node at (4, 0) (cc) {$x$};
\node at (6, 0) (x) {$x$};
\midarrow{u}{a}
\midarrow{u}{b}
\midarrow{w}{c}
\midarrow{a}{aa}
\midarrow[0.33]{b}{cc}
\midarrow[0.33]{c}{bb}
\midarrow{aa}{v}
\midarrow{bb}{v}
\midarrow{cc}{x}
\end{tikzpicture}
\]

We also make the modifications obtained from the rotations of the above diagrams
about a horizontal edge. After these modifications have been applied, there are
no edges shared between ``multiplications'' and ``comultiplications''.

\item Note that $G$ remains acyclic even after the modifications, the vertices
with no incoming edges remain so, the ordering
of the vertices with no incoming edges still applies, and that the
orderings of outgoing edges for each vertex can be modified accordingly, with no
further choice involved. We then construct a
level ordering of this modified $G$.

\begin{rmk}
The vertices with no edges, incoming or outgoing, are always in the first level.
However, vertices with no outgoing edges need not always be in the last level.
\end{rmk}

\item\label{alg:edgeless}
Consider the vertices with no outgoing edges but not at the last level. For each
such vertex $u$ in $V_i$, we add a copy $u'$, to $V_{i + 1}$. Its insertion
order needs to be made precise. Call the vertices in $V_{i + 1}$ with an
incoming edge from some vertex above $u \in V_{i}$, the vertices above $u$ in
$V_{i + 1}$. Similarly, call the vertices in $V_{i + 1}$ with no incoming edges
from $u$ or a vertex above $u \in V_i$, the vertices below $u$ in $V_{i + 1}$.
A copy of $u$, say $u'$, is inserted
in the position right after the vertices above $u$ and before the vertices below
$u$ in $V_{i + 1}$. We then add in the edge $(u, u')$. We continue this process
with $V_{i + 1}$ in place of $V_i$ and so on, until there is a copy
of $u$ in each level after $V_i$ with a path connecting them.

\begin{rmk}
Each vertex without any edges, incoming or outgoing, are also copied in this
way, noting that these vertices are placed in the first level during the level
ordering.
\end{rmk}

\item For each level-skipping edge $(u, v)$ --- that is, with $u \in V_k$ and
$v \in V_{k'}$ for some $k' > k + 1$ --- we insert a copy of $u$ in $V_{i}$ for
each $k < i < k'$ in positions similar to the insertions in the last step. We
call this copy $u'$. We then add an edge $(u, u')$. We repeat this with
$V_{i + 1}$ in place of $V_{i}$ and $u'$ in place of $u$. After this process
completes, we delete the edge $(u, v)$ and add an edge from the copy of $u$
in $V_{k' - 1}$ to $v$. After we complete this process for every
level-skipping edge, level-skipping edges are replaced by paths connecting the
source vertex to the target vertex of these edges.

\item Observe that we can now identify the graph $G$ with a cobordism in
$\Cob_2$ from $|S(G)|$ copies of $S^1$ to $|T(G)|$ copies of $S^1$! The
identifications of the generating structures are the obvious ones:
\[
\begin{tikzpicture}
\node[circle, fill, inner sep=1.5pt] at (0, 0) (u) {};
\node[circle, fill, inner sep=1.5pt] at (-2, 0.5) (v) {};
\node[circle, fill, inner sep=1.5pt] at (-2, -0.5) (w) {};
\midarrow{v}{u}
\midarrow{w}{u}
\node at (-1, -1) (lbl) {pair-of-pants};
\end{tikzpicture}
\qquad
\begin{tikzpicture}
\node[circle, fill, inner sep=1.5pt] at (0, 0) (u) {};
\node[circle, fill, inner sep=1.5pt] at (2, -0.5) (v) {};
\node[circle, fill, inner sep=1.5pt] at (2, 0.5) (w) {};
\midarrow{u}{v}
\midarrow{u}{w}
\node at (1, -1) (lbl) {co-pair-of-pants};
\end{tikzpicture}
\qquad
\begin{tikzpicture}
\node[circle, fill, inner sep=1.5pt] at (0, 0) (u) {};
\node at (2, -0.5) (v) {};
\node at (2, 0.5) (w) {};
\node[circle, fill, inner sep=1.5pt] at (2, 0) (x) {};
\midarrow{u}{x}
\node at (1, -1) (lbl) {cylinder};
\end{tikzpicture}
\]
Note that the cap or cup were not used. Furthermore, there are also crossings of
various kinds such as:
\[
\begin{tikzpicture}
\node[circle, fill, inner sep=1.5pt] at (0, 0.5) (y) {};
\node[circle, fill, inner sep=1.5pt] at (0, -0.5) (z) {};
\node[circle, fill, inner sep=1.5pt] at (-2, 0.5) (w) {};
\node[circle, fill, inner sep=1.5pt] at (-2, -0.5) (x) {};
\midarrow[0.75]{w}{z}
\midarrow[0.75]{x}{y}
\end{tikzpicture}
\qquad
\begin{tikzpicture}
\node[circle, fill, inner sep=1.5pt] at (0, 0.5) (x) {};
\node[circle, fill, inner sep=1.5pt] at (0, -0.5) (y) {};
\node[circle, fill, inner sep=1.5pt] at (-2, 1.5) (u) {};
\node[circle, fill, inner sep=1.5pt] at (-2, 0.5) (v) {};
\node[circle, fill, inner sep=1.5pt] at (-2, -0.5) (w) {};
\midarrow[0.75]{u}{y}
\midarrow[0.75]{v}{y}
\midarrow[0.25]{w}{x}
\end{tikzpicture}
\qquad
\begin{tikzpicture}
\node[circle, fill, inner sep=1.5pt] at (0, 1.5) (x) {};
\node[circle, fill, inner sep=1.5pt] at (0, 0.5) (y) {};
\node[circle, fill, inner sep=1.5pt] at (0, -0.5) (z) {};
\node[circle, fill, inner sep=1.5pt] at (-2, 1.5) (u) {};
\node[circle, fill, inner sep=1.5pt] at (-2, 0.5) (v) {};
\node[circle, fill, inner sep=1.5pt] at (-2, -0.5) (w) {};
\midarrow[0.75]{u}{z}
\midarrow[0.75]{v}{z}
\midarrow[0.25]{w}{x}
\midarrow[0.25]{w}{y}
\end{tikzpicture}
\]
but these can be identified with manifolds representing crossing morphisms in
$\Cob_2$.
The resulting cobordism or, more precisely, the manifold representing said
cobordism yields an algebraic expression involving the composition and monoidal
product of $\Cob_2$ and the generating morphisms. This algebraic expression is
taken to be the expression of the original graph $G$.

\end{enumerate}
\end{alg}

\begin{defn}[Reduced Graphs]
Given a graph $G$, the modified graph resulting from the above algorithm will be
called a reduction of $G$ and will be denoted $G'$.
Any graph with the properties that were introduced in $G'$ through the
algorithm --- we will skip listing them again --- is called reduced.
The expression resulting from the algorithm above is called
an expression of $G$ and denoted $\Exp{G}$. In particular, given a reduced
graph $H$, we may apply the last step to directly obtain an expression
$\Exp{H}$.
\end{defn}

\begin{rmk}
Note that the possible expressions of a graph are by no means unique as
we made a large number of arbitrary choices in our process. Nevertheless, the
algorithm above is deterministic after the second step.
That is, if we fix a choice for the ordering of the source vertices and for the
orderings of the outgoing edges for each vertex, then the algorithm gives a
fixed expression.
\end{rmk}

Even though we made many choices in the above algorithm, the only
choices that cannot be thought of as canonical in any way are the chosen edge
orderings and the ordering of the source vertices. We are thus motivated to make
the following definition.

\begin{defn}[Expression Graph]
A graph with a chosen ordering of its source vertices and,
for each vertex, a chosen orderings of the outgoing edges of the vertex 
is called an expression graph.
\end{defn}

Thus, given an expression graph, we have an algorithm to extract its expression
consistently as long as we keep the other choices in the above algorithm fixed.
We will later see that $\Exp{G}$ is functorial in a suitable sense.

\subsection{Morphisms of Expression Graphs}

It is of interest to define morphisms of expression graphs. They will be
necessary to define interesting constructions on expression graphs later on.

\begin{defn}[Expression Homomorphism]
An expression graph homomorphism or expression homomorphism is a graph
homomorphism $f : G \to H$ between expression graphs such that $f(S(G))$ is
contained in some level set $L$ in $H$, $f$ preserves the ordering of $S(G)$ in
$L$ and $f$ preserves the edge orderings for each vertex.
\end{defn}

We then note some useful facts concerning level orderings and expression
isomorphisms. First, every expression graph $G$ has a level ordering as
defined in \ref{defn:level-ordering} so that the vertex
set is a disjoint union of the levels. Then, we have a level function $l_G$ for
$G$ which assigns to each vertex $v$ the integer $n$ for which $v$ is in the
$n$--th level set.

\begin{lem}\label{thm:expiso_lvlpres}
Expression isomorphisms $f : G \to H$ are level preserving, i.e.
\[
  l_G(v) = l_H(f(v))
\]
\end{lem}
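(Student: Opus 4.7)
The plan is to prove this by strong induction on the level $l_G(v)$, relying on a recursive characterization of the level function that depends only on the graph structure. Specifically, I would first observe that the level-ordering algorithm (Definition \ref{defn:level-ordering}) yields the following intrinsic formula: $l_G(v) = 1$ exactly when $v \in S(G)$, and otherwise
\[
  l_G(v) \;=\; 1 + \max\set[l_G(u)]{(u, v) \in E(G)}.
\]
This is just a restatement of how $V_{k+1}$ is built from $V_1, \dots, V_k$: a vertex $v$ is placed at the first level that exceeds all levels of the sources of its incoming edges, and by Corollary \ref{cor:lvltolvl} this is achieved at level $1 + \max_u l_G(u)$. Since $G$ is finite and acyclic, this recursion is well-founded.

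Next, I would note that an expression isomorphism $f : G \to H$ is in particular a graph isomorphism, so it induces a bijection between the incoming edges of $v$ in $G$ and the incoming edges of $f(v)$ in $H$, and consequently $f(S(G)) = S(H)$. (Moreover, this identifies the level set $L$ in the definition of expression homomorphism as $V_1(H)$.)

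With these two observations in hand, the induction goes as follows. For the base case $l_G(v) = 1$, the vertex $v$ has no incoming edges in $G$, so $f(v)$ has no incoming edges in $H$ and thus $l_H(f(v)) = 1$. For the inductive step, assume the claim for all vertices of level at most $k$, and take $v$ with $l_G(v) = k+1$. The incoming edges of $v$ have sources $u$ with $l_G(u) \leq k$ and at least one with $l_G(u) = k$. Under $f$, these sources map bijectively onto the sources of the incoming edges of $f(v)$, and by the inductive hypothesis their $H$-levels equal their $G$-levels. Applying the recursive characterization in $H$ then gives $l_H(f(v)) = 1 + k = l_G(v)$.

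The only mild subtlety is justifying the recursive formula for $l_G$; once that is in place, the induction itself is routine because every step is an immediate consequence of $f$ being a bijection on edges. I do not anticipate any genuine obstacle, and the argument does not use the ordering data of the expression graph structure at all — the statement really concerns only the underlying directed graph isomorphism.
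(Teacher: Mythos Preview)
Your proof is correct and takes a genuinely different route from the paper's. The paper argues by paths: given $v$ in level $k$, it uses Corollary~\ref{cor:lvltolvl} to produce a path $v_1,\dots,v_k=v$ through consecutive levels, pushes it forward via $f$, and uses that edges strictly increase levels to conclude $l_H(f(v))\geq l_G(v)$; the reverse inequality comes from running the same argument on $f^{-1}$. You instead extract the recursive formula $l_G(v)=1+\max\set[l_G(u)]{(u,v)\in E(G)}$ for non-source vertices and prove equality outright by strong induction, using only that a graph isomorphism gives a bijection on incoming edges at each vertex.

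Your approach is tighter in that it yields equality in one pass rather than two inequalities, and it makes explicit what the paper's argument leaves implicit: the level function is determined entirely by the incidence structure of the underlying directed acyclic graph, so any graph isomorphism preserves it and the extra expression-graph data (orderings) plays no role. The paper's path argument has the virtue of being slightly more self-contained --- it does not pause to state and justify the recursive formula --- but both proofs ultimately rest on Corollary~\ref{cor:lvltolvl}, which is exactly the statement that the maximum in your formula is attained at the preceding level.
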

\begin{proof}
Let
$\set{V_i}_{i = 1}^{N}$ and
$\set{W_j}_{i = 1}^{M}$ be the level sets of $G$ and $H$
respectively. Let $v \in V_k$ for some $1 \leq k \leq N$.
By corollary \ref{cor:lvltolvl}, there is a path $v_1, \dots, v_k = v$ with each
$v_i \in V_i$. Then, there is a path $f(v_1), \dots, f(v_k) = f(v)$ in $H$ with
$l_H(f(v_{i})) < l_H(f(v_{i + 1}))$ (since edges only go forward in levels) and
$l_H(f(v_1)) \geq 1$. Hence, by extending this path in $H$ backwards to $W_1$,
we have that $l_H(f(v)) \geq l_G(v)$.
On the other hand, let $f(v) \in W_m$ so that there is
a path $w_1, \dots, w_m$ in $H$ with each $w_i \in W_i$. Using $f^{-1}$ on this
path and a similar argument as the one before, we can show that
$l_H(f(v)) \leq l_G(v)$.
\end{proof}

\begin{cor}
For any expression isomorphism $f : G \to H$, if $V_i$ is the $i$--th level set
in $G$, then $f(V_i)$ is the $i$--th level set in $H$.
\end{cor}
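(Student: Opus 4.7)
The plan is to deduce this corollary directly from Lemma \ref{thm:expiso_lvlpres}, which states that expression isomorphisms preserve the level function. Denote by $\set{V_i}_{i = 1}^{N}$ and $\set{W_j}_{j = 1}^{M}$ the level sets of $G$ and $H$ respectively. The goal is to establish $f(V_i) = W_i$ for each $i$, which naturally breaks into two inclusions.

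For the forward inclusion $f(V_i) \subseteq W_i$: given $v \in V_i$, by definition $l_G(v) = i$, and by Lemma \ref{thm:expiso_lvlpres} we have $l_H(f(v)) = l_G(v) = i$, so $f(v) \in W_i$. This is immediate from the lemma.

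For the reverse inclusion $W_i \subseteq f(V_i)$: I would first observe that the inverse $f^{-1} : H \to G$ is itself an expression isomorphism (the class of expression isomorphisms is closed under inversion, since ordering preservation is symmetric for a bijection and graph isomorphisms are invertible in the category of directed graphs). Then for any $w \in W_i$, applying Lemma \ref{thm:expiso_lvlpres} to $f^{-1}$ gives $l_G(f^{-1}(w)) = l_H(w) = i$, so $f^{-1}(w) \in V_i$ and hence $w = f(f^{-1}(w)) \in f(V_i)$.

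There is no real obstacle here --- the only subtlety is recognizing that one needs to invoke the lemma for $f^{-1}$ as well, which in turn requires noting that inverses of expression isomorphisms are again expression isomorphisms. Consequently $N = M$ is forced as a byproduct, since the bijection $f$ induces bijections $V_i \to W_i$ for every $i$ and the levels partition the vertex sets.
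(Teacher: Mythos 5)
Your proof is correct and follows essentially the same route as the paper: apply Lemma \ref{thm:expiso_lvlpres} to $f$ to get $f(V_i) \subseteq W_i$, then apply it to $f^{-1}$ to get the reverse inclusion. The only addition you make is to explicitly note that $f^{-1}$ is again an expression isomorphism, which the paper treats as implicit; that is a reasonable detail to spell out.
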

\begin{proof}
Let $W_i$ be the $i$--th level set of $H$. Then, by the previous lemma
$f(V_i) \subset W_i$. Again, using $f^{-1}$ in place of $f$ and $W_i$ in place
of $V_i$, we have
$f^{-1}(W_i) \subset V_i \implies W_i \subset f(V_i)$. Hence, $f(V_i) = W_i$.
\end{proof}

\begin{lem}
Expression isomorphisms $f : G \to H$ are order-preserving on levels, i.e.
\[
  u \leq v \iff f(u) \leq f(g)
\]
\end{lem}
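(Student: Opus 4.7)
The plan is to induct on the level index, using the previous corollary that expression isomorphisms take the $i$-th level of $G$ onto the $i$-th level of $H$, together with the way the ordering on each level is built up by the level-ordering algorithm from \ref{defn:level-ordering}.

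For the base case, $V_1 = S(G)$ carries the chosen source ordering, and by the definition of expression graph homomorphism, $f$ preserves the ordering of $S(G)$ inside the level set of $H$ containing $f(S(G))$, which by the previous corollary is exactly $W_1 = S(H)$. So $f$ is order-preserving on $V_1$.

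For the inductive step, suppose $f$ is order-preserving on $V_{i-1}$. Recall that the induced ordering on $V_i$ is produced by listing, in order, the outgoing edges of each vertex of $V_{i-1}$ (taking the vertices of $V_{i-1}$ in their own order and the outgoing edges at each such vertex in its chosen ordering) and reading off the targets, with each vertex of $V_i$ assigned the position of its first appearance in this list. Since $f$ preserves the edge ordering at every vertex, the sequence of outgoing edges at $f(u)$ in $H$ is $f$ applied to the corresponding sequence at $u$ in $G$; by the inductive hypothesis $f$ also preserves the ordering of the source vertices $u \in V_{i-1}$; and by the previous corollary $f$ maps $V_{i-1}$ bijectively onto $W_{i-1}$ and $V_i$ bijectively onto $W_i$. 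Thus the enumeration defining the ordering on $V_i$ is mapped by $f$ exactly to the enumeration defining the ordering on $W_i$, and first-appearance positions are preserved. This gives order-preservation on $V_i$, completing the induction.

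The converse direction ($f(u) \leq f(v) \implies u \leq v$) follows by applying the same argument to the inverse expression isomorphism $f^{-1} : H \to G$, which is an expression homomorphism because $f$ is (edge orderings and source orderings are bijections between the corresponding finite orderings, so their inverses preserve orderings too). The main obstacle here is not any single algebraic step but rather pinning down precisely the recursion by which the ordering on $V_i$ is built from the data on $V_{i-1}$; once that recursion is stated, the inductive verification is mechanical.
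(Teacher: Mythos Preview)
Your proof is correct and follows the same inductive strategy as the paper: both induct on the level index, with the base case handled by the source-ordering axiom and the inductive step using that the ordering on $V_i$ is determined recursively from the ordering on $V_{i-1}$ together with the outgoing-edge orderings. The paper phrases the inductive step pointwise (for each $u \leq v$ at level $k+1$ it locates the minimal $V_k$-predecessors $u', v'$ and compares those), whereas you argue globally that $f$ carries the enumeration procedure for $V_i$ to that for $W_i$; these are two presentations of the same argument.
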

\begin{proof}
Let $u, v$ be in the same level set in $G$ and $u \leq v$. We proceed by
induction on $l_G(u) = l_G(v)$. When $l_G(u) = l_G(v) = 1$, then
$f(u) \leq f(v)$, by definition. The other direction is obtained similarly with
$f^{-1}$ in place of $f$.

Let $l_G(u) = l_G(v) = k + 1$. By the construction of level sets and their
ordering given in the graph reduction algorithm, there exist
$u', v' \in V(G)$ such that the following hold:
\begin{enmrt}
\li $l_G(u') = l_G(v') = k$
\li $u' \leq v'$
\li there are edges $(u', u)$ and $(v' v)$
\li there are no edges $(u'', u)$ or $(v'', v)$ with $u'' < u'$ or $v'' < u'$
\li if $u' = v'$, $(u', u) \leq (v', v)$
\end{enmrt}
By the previous corollary, $l_H(f(u')) = l_H(f(v')) = k$.
If there is an $x < f(u')$ or a $y < f(v')$ in $V(H)$ with edges
$(x, f(u'))$ or $(y, f(v'))$, then by induction, $f^{-1}(x) < u'$ or
$f^{-1}(y) < v'$ with some edge $(f^{-1}(x), u)$ or $(f^{-1}(y), v)$,
contradicting the conditions on $u'$ and $v'$. Thus, the edges $(f(u'), f(u))$
and $(f(v'), f(v))$ ensure that $f(u) \leq f(v)$. The other direction is again
obtained similarly by replacing $f$ with $f^{-1}$.
\end{proof}

\begin{cor}\label{cor:expiso_unique}
Expression isomorphisms $f : G \to H$ are unique.
\end{cor}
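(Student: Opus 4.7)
The plan is to combine the two preceding results (level preservation and order preservation on levels) to show that any two expression isomorphisms $f, g : G \to H$ must agree vertex by vertex, which suffices since a graph homomorphism is determined by its action on vertices.

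First I would fix the level decompositions $V(G) = V_1 \amalg \cdots \amalg V_N$ and $V(H) = W_1 \amalg \cdots \amalg W_M$, each equipped with the total ordering produced by the level-ordering algorithm. By the corollary preceding the previous lemma, any expression isomorphism $f$ satisfies $f(V_i) = W_i$ for every $i$ (in particular $N = M$). By the previous lemma, the restriction $f|_{V_i} : V_i \to W_i$ is order-preserving with respect to these induced orderings. Since $V_i$ and $W_i$ are finite totally ordered sets of the same cardinality, there is a \emph{unique} order-preserving bijection $V_i \to W_i$, namely the one matching the $j$-th element of $V_i$ to the $j$-th element of $W_i$.

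If $g : G \to H$ is a second expression isomorphism, the same argument applied to $g$ forces $g|_{V_i}$ to be this same unique order-preserving bijection, so $f|_{V_i} = g|_{V_i}$ for every $i$. Since the levels partition $V(G)$, we conclude $f = g$ on all of $V(G)$, and because a graph homomorphism is determined by its vertex map (edges being pairs of vertices and both $f, g$ being bijective homomorphisms), this gives $f = g$ as expression graph morphisms.

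I do not anticipate a serious obstacle: the real content has already been packaged into the two lemmas (level preservation and order preservation on levels) together with the corollary identifying $f(V_i) = W_i$. The only subtlety is verifying that the induced ordering on each level $V_i$ (coming from the level-ordering algorithm) is in fact a total order so that the uniqueness of order-preserving bijections applies; this follows directly from the inductive construction of the level ordering, in which every level inherits a strict linear order from the preceding one together with the chosen orderings of outgoing edges.
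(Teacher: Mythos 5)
Your proof is correct and follows the same approach as the paper: both apply the level-preservation corollary and the order-preservation lemma, then conclude from the uniqueness of order-preserving bijections between finite totally ordered sets. Your write-up simply spells out the details that the paper's two-line proof leaves implicit.
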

\begin{proof}
Let $g : G \to H$ be another expression isomorphism. Then both $f$ and $g$
restrict to order-preserving bijections on the finite level sets and hence must
agree on the level sets. Thus, $f$ and $g$ agree on $G$.
\end{proof}

In light of the last corollary, it is reasonable to consider expression graphs
up to expression isomorphisms from this point onwards. We will next define some
useful constructs on expression graphs that will facilitate our desired
modification of TQFTs.

\subsection{Constructs on Expression Graphs}

Consider expression graphs $G$ and $H$. We can take the disjoint unions of their
vertex and edge sets. It is clear that the edge orderings of the vertices of
$G$ and $H$ collectively provide an edge ordering for every vertex of
$G \amalg H$. We observe that $S(G \amalg H) = S(G) \amalg S(H)$ so that the
orderings of $S(G)$ and $S(H)$ provide an ordering of $S(G \amalg H)$, where the
vertices of $G$ come before those of $H$. The empty graph is an expression graph
and hence acts as a unit for the disjoint union operation. It is easy to see
that the associators and unitors for the coproduct in the category of sets are
expression isomorphisms.

We then proceed to define a gluing of expression graphs. For an expression
graph $G$, $S(G)$ without any edges is itself an expression graph.
The edgeless vertices are in both $S(G)$ and $T(G)$. Since $S(G)$ is also the
first level of $G$, $S(G)$ is ordered, by definition. We then observe that
$T(G)$ is the union of the last level and the set of edgeless vertices. Thus,
using a method similar to point \ref{alg:edgeless} in the algorithm given in
\S\ref{subsec:alg_graph_exp}, we have an induced ordering of $T(G)$, so that
$T(G)$ is also an expression graph without any edges.
There are obvious order-preserving expression homomorphisms
$S(G) \hto G \hot T(G)$ that are isomorphisms onto their images. We note
that if there is a an expression isomorphism $\psi : S(H) \to T(G)$ for
expression graphs $G$ and $H$, then $\psi$ is unique by \ref{cor:expiso_unique}.
This allows us to define the following notion of gluing.

\begin{defn}[Gluing of Expression Graphs]
Let $G$ and $H$ be expression graphs such that there is a unique expression
isomorphism\footnote{Note that, in this case, this is simply an order-preserving
bijection of sets.}
\[
  \psi_{G, H} : S(H) \to T(G)
\]
Then, we say that $G$ and $H$ are gluable at $S(H) \cong T(G)$. We denote the
pushout of the following span in $\Set$ as $V(H * G)$:
\[
  H \hot S(H) \to[\psi_{G, H}] T(G) \hto G
\]
The graph with vertex set $V(H * G)$ and edge set $E(G) \amalg E(H)$ is denoted
$H * G$ and called the gluing of $H$ with $G$ at $S(H) \cong T(G)$.
\end{defn}

\begin{thm}
Gluings $H * G$ of expression graphs $G$ and $H$ at $S(H) \cong T(G)$ are
expression graphs.
\end{thm}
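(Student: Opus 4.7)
The plan is to verify the three defining features of an expression graph in turn: that $H * G$ satisfies the underlying graph conventions (directed, acyclic, no parallel edges, no self-loops), that its set of sources carries a natural ordering, and that each of its vertices carries a natural ordering of its outgoing edges. Throughout, I will exploit the fact that $\psi_{G, H}$ is a bijection $S(H) \to T(G)$, so the pushout identifies vertices across the gluing interface without merging any edges; the edge set is the straightforward disjoint union $E(G) \amalg E(H)$.

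The basic graph axioms are largely inherited from $G$ and $H$. Ruling out parallel edges and self-loops reduces to the observation that vertices in $T(G)$ have no outgoing $G$-edges and vertices in $S(H)$ have no incoming $H$-edges, so identifying such pairs along $\psi_{G, H}$ cannot produce a pair of edges with the same source and target, nor turn any $G$- or $H$-edge into a loop. The main obstacle will be acyclicity. Here I would trace any hypothetical cycle in $H * G$ edge-by-edge: cycles using only $G$-edges or only $H$-edges are ruled out by acyclicity of $G$ and $H$ respectively, so a mixed cycle must transition from $G$-edges to $H$-edges at an identified vertex. A $G \to H$ transition at an identified vertex is permitted, but the reverse $H \to G$ transition would require an $H$-edge to terminate at a vertex of $S(H)$, contradicting the fact that such vertices are sources in $H$. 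Consequently a mixed cycle can have at most one crossing and cannot close.

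For the ordering data, I will give a brief case analysis on a vertex's origin — unidentified in $G$, unidentified in $H$, or in the pushout interface — to conclude that the sources of $H * G$ are precisely the image of $S(G)$ under the canonical inclusion $G \hookrightarrow H * G$. The subtle case concerns edgeless vertices of $G$, which lie in $S(G) \cap T(G)$; they remain sources because their images in $S(H)$ receive no incoming $H$-edges, and an unidentified vertex from $H$ cannot be a source of $H * G$ since that would force it into the already identified set $S(H)$. The ordering of $S(G)$ then transports directly to $S(H * G)$. The outgoing edge ordering at each vertex is read off from exactly one side of the gluing: unidentified $G$-vertices and unidentified $H$-vertices retain their orderings from $G$ and $H$, while an identified vertex lies in $T(G)$ and therefore has no outgoing $G$-edges, so its outgoing edges are precisely those of the corresponding vertex in $S(H)$ and inherit the ordering supplied by $H$. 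Taken together, these verifications equip $H * G$ with the full expression graph structure.
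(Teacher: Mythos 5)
Your proof is correct and follows the same essential route as the paper's: identify $S(H*G)$ with $S(G)$ so it inherits the source ordering, and observe that the only vertices whose edge sets change are those in $T(G) \cong S(H)$, which have no outgoing $G$-edges (resp.\ no incoming $H$-edges), so the outgoing-edge orderings are read unambiguously from whichever side supplied them. Where you go beyond the paper is in explicitly checking that $H*G$ satisfies the standing graph conventions --- acyclicity, no parallel edges, no self-loops --- rather than leaving this implicit. Your acyclicity argument is sound: a hypothetical mixed cycle would need both a $G\!\to\!H$ and an $H\!\to\!G$ transition at identified vertices, and the latter is impossible since vertices of $S(H)$ have no incoming $H$-edges (equivalently, vertices of $T(G)$ have no outgoing $G$-edges). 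The paper simply does not address these graph-level axioms in its proof; your version supplies the missing sanity checks, which is a modest but genuine improvement in rigor without changing the underlying strategy.
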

\begin{proof}
First we show that $S(H * G) = S(G)$. If $v \in S(H * G) \setminus S(G)$, then
we consider the case that $v \in G \subset H * G$. Since $v \not\in S(G)$ and
gluing does not delete edges, $v$ must have an incoming edge in $G$. We then
consider the case that $v \in H \subset H * G$. Here, we can assume
$v \not\in S(H)$ because $S(H) = T(G)$ in $H * G$ so that in this case, $v$ has
an incoming edge in $H$. In either case, $v \not\in S(H * G)$ --- a contradiction
showing that $S(H * G) \setminus S(G)$ is empty.
Thus, $S(H * G) \subseteq S(G)$.
If $v \in S(G)$, then $v$ has no incoming edges in $G$. It is clear that gluing
cannot introduce incoming edges to the source vertices of $G$ so that
$S(G) \subset S(H * G)$. Thus, $H * G$ inherits the ordering of its source
vertices (including edgeless vertices which are also in $S(H) \cong T(G)$) from
$G$.

We observe that only the vertices in $T(G)$ are the sites of new edges and these
are all outgoing while the vertices in $T(G)$ have no outgoing edges in $G$.
Thus, $H * G$ inherits edge orderings unambiguously from $G$ and $H$,
collectively. Therefore, $H * G$ is an expression graph.
\end{proof}

\begin{exm}\label{exm:expression_gluing}
Consider $G$ from example \ref{exm:egraph1} and $H$ as follows:
\[\begin{tikzpicture}[xscale=2,yscale=0.75]
\node at (3, 3) (v10) {$10$};
\node at (3, 1) (v11) {$11$};
\node at (4, 3) (v12) {$12$};
\node at (4, 1) (v13) {$13$};
\node at (5, 4) (v14) {$14$};
\node at (5, 2) (v15) {$15$};
\node at (5, 0) (v16) {$16$};
\midarrow[0.33]{v8}{v13}
\midarrow[0.33]{v9}{v12}
\midarrow{v8}{v14}
\midarrow{v12}{v14}
\midarrow{v12}{v15}
\midarrow{v13}{v15}
\midarrow{v13}{v16}
\end{tikzpicture}\]
We have the following diagram of $H * G$:
\[\begin{tikzpicture}[xscale=2,yscale=0.75]
\node at (0, 3) (v1) {$1$};
\node at (0, 1) (v2) {$2$};
\node at (1, 3) (v3) {$3$};
\node at (1, 1) (v4) {$4$};
\node at (2, 4) (v5) {$5$};
\node at (2, 2) (v6) {$6$};
\node at (2, 0) (v7) {$7$};
\node at (3, 3) (v8) {$8 \cong 10$};
\node at (3, 1) (v9) {$9 \cong 11$};
\node at (4, 3) (v12) {$12$};
\node at (4, 1) (v13) {$13$};
\node at (5, 4) (v14) {$14$};
\node at (5, 2) (v15) {$15$};
\node at (5, 0) (v16) {$16$};
\midarrow{v1}{v3}
\midarrow[0.33]{v1}{v4}
\midarrow[0.33]{v2}{v3}
\midarrow{v3}{v5}
\midarrow{v3}{v6}
\midarrow{v4}{v7}
\midarrow{v5}{v8}
\midarrow{v6}{v9}
\midarrow{v7}{v9}
\midarrow[0.33]{v8}{v13}
\midarrow[0.33]{v9}{v12}
\midarrow{v8}{v14}
\midarrow{v12}{v14}
\midarrow{v12}{v15}
\midarrow{v13}{v15}
\midarrow{v13}{v16}
\end{tikzpicture}\]
\end{exm}

It is then easy to verify that gluing of expression graphs is associative and
unital up to expression isomorphism much like the disjoint union. We can further
verify that the data of expression graphs defined so far form a monoidal double
category whose objects are finite, ordered sets, vertical $1$--morphisms are
unique order isomorphisms, horizontal $1$--morphisms $G : U \to V$ are
expression graphs $G$ with $S(G) \cong U$ and $T(G) \cong V$, and $2$--morphisms
are expression isomorphisms, with horizontal composition given by gluing and
monoidal product given by disjoint union.

We now observe some properties of the graph reduction algorithm
\ref{alg:expr-construction}.

\begin{exm}
We observe that, in general, the reduction does not result in the
same graph when we apply it before gluing as opposed to after gluing. Let
$G$ and $H$ be as follows:
\[
\begin{tikzpicture}[yscale=0.5]
\node at (0, 3) (v1) {$1$};
\node at (0, 1) (v2) {$2$};
\node at (2, 3) (v3) {$3$};
\midarrow{v1}{v3}
\node at (1, -1) (G) {$G$};
\end{tikzpicture}
\qquad
\qquad
\begin{tikzpicture}[yscale=0.5]
\node at (0, 3) (v4) {$4$};
\node at (0, 1) (v5) {$5$};
\node at (2, 3) (v6) {$6$};
\node at (2, 2) (v7) {$7$};
\node at (2, 0) (v8) {$8$};
\midarrow{v4}{v6}
\midarrow{v5}{v7}
\midarrow{v5}{v8}
\node at (1, -1) (H) {$H$};
\end{tikzpicture}
\]
Applying the reduction on $G$ and $H$ separately and then gluing
the results yields:
\[
\begin{tikzpicture}[yscale=0.5]
\node at (0, 3) (v1) {$1$};
\node at (0, 1) (v2) {$2$};
\node at (2, 3) (v4) {$3 \cong 4$};
\node at (2, 1) (v5) {$2 \cong 5$};
\node at (4, 3) (v6) {$6$};
\node at (4, 2) (v7) {$7$};
\node at (4, 0) (v8) {$8$};
\midarrow{v1}{v4}
\midarrow{v2}{v5}
\midarrow{v4}{v6}
\midarrow{v5}{v7}
\midarrow{v5}{v8}
\end{tikzpicture}
\]
Applying the reduction on $H * G$ results in:
\[
\begin{tikzpicture}[yscale=0.5]
\node at (0, 3) (v1) {$1$};
\node at (2, 3) (v4) {$3 \cong 4$};
\node at (0, 1) (v5) {$2 \cong 5$};
\node at (4, 3) (v6) {$6$};
\node at (2, 2) (v7) {$7$};
\node at (2, 0) (v8) {$8$};
\node at (4, 2) (v77) {$7$};
\node at (4, 0) (v88) {$8$};
\midarrow{v1}{v4}
\midarrow{v4}{v6}
\midarrow{v5}{v7}
\midarrow{v5}{v8}
\midarrow{v7}{v77}
\midarrow{v8}{v88}
\end{tikzpicture}
\]
However, in this case, we observe that the expressions $\Exp{H * G}$ and
$\Exp{H} \circ \Exp{G}$ differ only by ``identity'' edges and hence, when
interpreted as a morphism in a monoidal category would give morphisms that are
equal.
\end{exm}

In light of the previous example, it seems useful to decide when we should
consider two expressions to be the same and so we make the following definition.

\begin{defn}[Equivalence of Expressions]
Given graphs $G$ and $H$, we will consider $\Exp{G}$ and $\Exp{H}$ equivalent
if their reductions $G'$ and $H'$ differ by insertion or contraction of identity
edges --- edges whose source and target labels are the same and that are not
part of a pair-of-pants or a co-pair-of-pants. In this case, we will simply
write them as equal:
\[
\Exp{G} = \Exp{H}
\]
Given an expression graph $W$ with no edges (a finite ordered set), the identity
expression on $W$ is the exprssion $\id_W = (\id_{w_1}, \dots, \id_{w_n})$,
where $W = \set{w_1 < w_2 < \dots < w_n}$. We will also equate:
\[
  \Exp{G} \circ \id_W = \Exp{G} \text{ and } \id_W \circ \Exp{H} = \Exp{H}
\]
whenever $S(G) \cong W$ and $T(H) \cong W$.
\end{defn}

With no work, we have the following from our definition of equivalence of
expressions:

\begin{thm}[Unitality of Expressions]
For an expression graph $G$, we have
\[
  \Exp{G * S(G)} = \Exp{G} \circ \Exp{S(G)} = \Exp{G}
  = \Exp{T(G)} \circ \Exp{G} = \Exp{T(G) * G}
\]
\end{thm}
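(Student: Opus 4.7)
The plan is to reduce both equalities to the equivalence relation on expressions defined just above, by showing that gluing with $S(G)$ or $T(G)$ does not change the underlying expression graph (up to the unique expression isomorphism), and that the reductions of $S(G)$ and $T(G)$ produce the identity expressions $\id_{S(G)}$ and $\id_{T(G)}$.

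First, I will verify the extremal equalities $\Exp{G * S(G)} = \Exp{G}$ and $\Exp{T(G) * G} = \Exp{G}$ at the level of expression graphs. Since $S(G)$ is edgeless, every vertex of $S(G)$ is both a source and a target vertex, so $T(S(G)) = S(G)$, and the required expression isomorphism $\psi_{S(G), G}: S(G) \to T(S(G))$ exists and is unique by Corollary \ref{cor:expiso_unique}; in fact it is the identity order-preserving bijection. The pushout defining $V(G * S(G))$ is then the pushout of $G \hot S(G) \xrightarrow{\id} S(G)$ in $\Set$, which is canonically $V(G)$; the edge sets agree as $E(G) \amalg E(S(G)) = E(G)$, the source ordering is inherited from $G$, and the outgoing-edge orderings match. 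Hence $G * S(G)$ and $G$ are identical as expression graphs, and similarly $T(G) * G = G$. Applying the graph reduction algorithm to identical inputs yields identical expressions, so $\Exp{G * S(G)} = \Exp{G}$ and $\Exp{T(G) * G} = \Exp{G}$.

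Second, I will compute $\Exp{S(G)}$ and $\Exp{T(G)}$ directly. Since $S(G)$ is edgeless and ordered, running Algorithm \ref{alg:expr-construction} on it has nothing to do in steps 1--6: there are no edges to order, a single level containing all vertices, no shared pair-of-pants/co-pair-of-pants edges, no terminal vertices out of place, and no level-skipping edges. Step 7 inserts a copy of each edgeless vertex into every subsequent level, but there is only one level, so no copies are added; the final step reads off a tuple of single-vertex pieces, which under the conventions of Example \ref{exm:egraph1} are identity mappings. This gives $\Exp{S(G)} = (\id_{v_1}, \dots, \id_{v_n}) = \id_{S(G)}$, with vertices in the chosen order, and analogously $\Exp{T(G)} = \id_{T(G)}$.

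Finally, I will close the chain using the equivalence relation on expressions. By the definition just stated, $\Exp{G} \circ \id_{S(G)} = \Exp{G}$ and $\id_{T(G)} \circ \Exp{G} = \Exp{G}$, which combined with the two calculations above delivers all four equalities simultaneously. The main subtle point — really the only place where there is anything to check — is that the reduction algorithm on an edgeless graph genuinely produces the identity tuple and not some spurious insertion of copies via step 7; once that is confirmed, everything else is bookkeeping about the pushout in $\Set$ collapsing when one leg of the span is the identity.
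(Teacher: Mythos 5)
Your proposal is correct and follows the same line of reasoning the paper intends when it says the result holds ``with no work'' from the definition of equivalence of expressions: the paper leaves implicit both that the pushout $G * S(G)$ (resp. $T(G) * G$) collapses to $G$ because one leg of the defining span is the identity, and that the reduction algorithm applied to an edgeless ordered set yields the identity tuple $\id_W$, and your proof spells both of these out carefully. The identification $\Exp{G} \circ \Exp{S(G)} = \Exp{G} \circ \id_{S(G)} = \Exp{G}$ via the explicit unitality clause in the equivalence definition is exactly what the paper invokes.
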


\begin{rmk}
In the above theorem, we could go further and take the cylinders on $S(G)$ and
$T(G)$ in place of $S(G)$ and $T(G)$ respectively given by expanding each vertex
into an identity edge.
\end{rmk}

We then have the following results which show that all differences that can
arise between reductions of graphs before and after gluing can be accounted for
by simple cases similar to the last example above.

\begin{lem}
For expression graphs $G$ and $H$ with $n$ and $m$ levels respectively and
gluable at $S(H) \cong T(G)$, the number of levels in $H * G$ is $n + m - 1$.
\end{lem}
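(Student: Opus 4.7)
The plan is to compute the level function $l_{H*G}$ directly and show its image is exactly $\{1, 2, \dots, n+m-1\}$ with every level nonempty. The key structural fact is that the edge set partitions as $E(H*G) = E(G) \amalg E(H)$, with the gluing identification $\psi : S(H) \to T(G)$ affecting only the vertex set. Consequently, a vertex $v \in V(G)$ (including those identified with vertices of $S(H)$) has exactly the same set of predecessors in $H*G$ as in $G$, since vertices in $S(H) = W_1$ have no predecessors in $H$; dually, a vertex $v \in V(H) \setminus S(H)$ has exactly the same predecessors in $H*G$ as in $H$. I will work in the natural setting where $T(G) \subseteq V_n$, which is where the gluing construction is designed to be applied (normalized/reduced expression graphs).

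First, I would show that $l_{H*G}(v) = l_G(v)$ for every $v \in V(G)$. This follows by induction on $l_G(v)$: the base case $v \in V_1 = S(G) = S(H*G)$ is established in the previous theorem, and the inductive step uses the fact that the predecessor set is preserved. In particular, for $v \in T(G) \cong W_1$, we obtain $l_{H*G}(v) = n$.

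Next, the main induction: I claim $l_{H*G}(v) = n + j - 1$ for all $v \in W_j$, $1 \le j \le m$. The base case $j=1$ is the previous paragraph. For the inductive step with $j \ge 2$, any $v \in W_j$ is not edgeless (edgeless vertices sit in $W_1$), so Corollary \ref{cor:lvltolvl} provides an edge $(u,v)$ with $u \in W_{j-1}$. All predecessors of $v$ in $H*G$ lie in $W_1 \cup \cdots \cup W_{j-1}$ and, by the inductive hypothesis, have $l_{H*G}$ values in $\{n, n+1, \ldots, n+j-2\}$, with the maximum $n+j-2$ attained by the $W_{j-1}$-predecessor guaranteed above. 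Hence $l_{H*G}(v) = 1 + (n+j-2) = n+j-1$, as claimed.

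Combining these, the levels of $H*G$ are $V_1, \ldots, V_{n-1}, V_n$ (with $V_n$ absorbing $W_1$ under $\psi$), $W_2, W_3, \ldots, W_m$, giving exactly $n+m-1$ levels. Nonemptiness of each level follows from Lemma \ref{lem:level-ordering}(i) applied to $G$ and $H$ separately. The main obstacle is the careful bookkeeping at the gluing interface $T(G) = S(H) = W_1$: one must verify that vertices here inherit their levels from $G$ rather than from $H$, and that the normalization placing $T(G)$ at level $n$ is consistent with how $\psi$ interacts with the level-ordering algorithm — once this is established, the induction propagates cleanly through $H$ with the uniform shift by $n-1$.
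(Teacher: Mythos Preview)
Your approach is genuinely different from the paper's. The paper argues by two-sided path-length bounds: for the lower bound it picks some $w_m\in W_m$, traces backward in $H$ via Corollary~\ref{cor:lvltolvl} to $w_1\in W_1\cong T(G)$, then asserts a path $v_1,\dots,v_n=w_1$ in $G$ with $v_i\in V_i$, and concatenates; for the upper bound it notes that any path realising the last level of $H*G$ must pass from $G$ through the interface $T(G)=S(H)$ into $H$ exactly once, so has at most $n+m-1$ vertices. You instead compute $l_{H*G}$ directly by induction and read off both the level count and nonemptiness at once; under your hypothesis $T(G)\subseteq V_n$ this is correct and in fact yields more, namely the exact level of every vertex of $H*G$.

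The gap is that hypothesis. For expression graphs as defined in the paper, edgeless vertices of $G$ lie in $T(G)\cap V_1$, so $T(G)\subseteq V_n$ fails whenever $n\ge 2$ and $G$ has an edgeless vertex. When it fails, your base case $l_{H*G}(v)=n$ for $v\in W_1$ is false, and the shift by $n-1$ does not propagate uniformly: a vertex of $W_2$ whose only $W_1$-predecessor glues to an edgeless vertex of $G$ lands at level $2$, not $n+1$. The paper's lower-bound argument has the parallel weakness that the backward trace from $w_m$ may terminate at such a $w_1$, whence no path $v_1,\dots,v_n=w_1$ with $v_i\in V_i$ exists in $G$. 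So both arguments tacitly rely on something like your normalisation; you are right to flag it, but it is not merely bookkeeping at the interface --- it is where the real content sits, and you should either restrict the statement to graphs with $T(G)=V_n$ or supply a separate argument handling edgeless vertices.
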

\begin{proof}
Let the level sets of $G$ be $V_1, \dots, V_n$ and
those of $H$, be $W_1, \dots, W_m$. By \ref{cor:lvltolvl}, there exists a path
$w_1, \dots, w_m$ with $w_i \in W_i$ in $H$, and also a path
$v_1, \dots, v_n = w_1$ with $v_i \in V_i$ in $G$. The concatenation of these
paths shows that $H * G$ has at least $n + m - 1$ levels.

Let $X_1, \dots, X_k$ be the level sets of $H * G$. For any vertex
$x \in X_k$, by \ref{cor:lvltolvl}, there must be a path $x_1, \dots, x_k = x$
in $H * G$ with $x_i \in X_i$. However, since the edges of $H * G$ are the edges
of $G$ or $H$, we must have that all $x_1, \dots, x_{j}$ are in
$G \setminus H$, $x_{j + 1} \in G \cap H = S(H) \cong T(G)$ and all
$x_{j + 2}, \dots, x_k$ are in $H \setminus G$, for some $j$. If this is
not the case, then we must have an edge from some vertex in $H$ to some vertex
in $G$, which is impossible. Therefore, $H * G$ has at most $n + m - 1$ levels.
\end{proof}

We observe that the first five steps of algorithm \ref{alg:expr-construction}
commute with gluing. One might think that the sixth
step onwards does not. However, we will see that it does. For this, we note the
following result about changes in level sets after gluing.

\begin{thm}
For expression graphs $G$ and $H$ gluable at $S(H) \cong T(G)$,
the level sets of $H * G$ are formed by taking the level sets of $G$ and those
of $H$, identifying the first level of $H$ with the last level of $G$ and moving
vertices between levels in the following pattern:
\begin{enmrt}
\li Move vertices in $S(H)$ which glue to edgeless vertices to the first level.
\li For each such vertex $v$ in $S(H)$ and each edge $(v, v')$ where $(v, v')$
is the first incoming edge of $v'$, move $v'$ to the second level.
\li Repeat the previous step with $v'$ in place of $v$.
\end{enmrt}
\end{thm}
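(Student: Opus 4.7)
The plan is to construct the claimed level partition on $H*G$ directly, verify it satisfies the defining conditions of Lemma \ref{lem:level-ordering}, and match it against the prescription in the statement. Throughout I would exploit the pushout description $V(H*G) = (V(G)\setminus T(G)) \sqcup (T(G) \cong S(H)) \sqcup (V(H) \setminus S(H))$ together with $E(H*G) = E(G) \sqcup E(H)$; because $T(G)$ vertices have no outgoing edges in $G$ and $S(H)$ vertices have no incoming edges in $H$, every incoming edge of any vertex of $H*G$ belongs to exactly one of $E(G)$ or $E(H)$, so levels can be computed compartmentally.

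First I would define a candidate level function $\ell \colon V(H*G) \to \N$ by cases. For $v \in V(G) \setminus T(G)$, set $\ell(v) := l_G(v)$, since incoming edges are unchanged from $G$ and all its $G$-predecessors also lie in $V(G) \setminus T(G)$. For $v \in T(G) \cong S(H)$, set $\ell(v) := 1$ if $v$ is edgeless in $G$ (so it has no incoming edges in $H*G$ at all, as it has none in $H$ either) and $\ell(v) := l_G(v) = n$ otherwise, using the observation just before Example \ref{exm:expression_gluing} that non-edgeless elements of $T(G)$ sit in the last level $V_n$. For $v \in V(H) \setminus S(H)$, extend recursively along the $H$-levels by $\ell(v) := 1 + \max\{\ell(u) : (u,v) \in E(H)\}$. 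An induction along levels of $H$ then shows $\ell$ is order-preserving (no backward edges), that $\ell^{-1}(1) = S(H*G)$, and together with the previous lemma fixing $n+m-1$ levels this verifies the first three conditions of Lemma \ref{lem:level-ordering}.

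To match $\ell$ with the prescription, item (i) is immediate from the definition on edgeless-in-$G$ vertices of $T(G) \cong S(H)$. For item (ii), a vertex $v' \in W_2$ has $\ell(v') = 1 + \max\{\ell(u) : u \in W_1, (u,v') \in E(H)\}$; when every $W_1$-predecessor of $v'$ is a relocated edgeless-in-$G$ vertex at $\ell = 1$, the recursion forces $\ell(v') = 2$, while if some $W_1$-predecessor is non-edgeless (hence at $\ell = n$) then $v'$ remains at level $n+1$. Interpreting the phrase ``first incoming edge'' via the ordering from Corollary \ref{cor:inedge-ordering} and the construction in Lemma \ref{lem:level-ordering}, this smallest incoming edge coming from a relocated $v$ is precisely the triggering condition in those cases where promotion occurs. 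Item (iii) is then just the iteration, with each newly-promoted $v'$ playing the role that $v$ played at the previous step.

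The step I expect to be the main obstacle is the bookkeeping of orderings through the promotion cascade. The level-ordering algorithm's output depends on the chosen ordering of $S(H*G) = S(G) \sqcup \{\text{edgeless-in-}G\text{ vertices in } S(H)\}$, so I would need to check that the induced order on each $\ell^{-1}(k)$ for $k \geq 2$ agrees with what the algorithm yields from this combined source ordering. Since relocation leaves outgoing-edge orderings intact (no edges are changed by relocation) and the algorithm's output is uniquely determined by the source and outgoing-edge orderings (by the construction in Lemma \ref{lem:level-ordering} together with Corollary \ref{cor:expiso_unique}), this compatibility should close, but this is where the argument earns its keep.
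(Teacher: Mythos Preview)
Your approach differs from the paper's. The paper argues by exhaustive case analysis on where a vertex sits: it shows by induction on level that vertices originally in some $V_i$ of $G$ cannot move (gluing introduces only outgoing edges at $T(G)$); that a vertex of $S(H)$ moves to level $1$ precisely when it glues to an edgeless vertex of $G$; and, for $v \in W_k$ with $k>1$, it traces a path $w_1,\dots,w_k=v$ back through first incoming edges and argues that $v$ can only drop in level if $w_1$ does, which reduces to the previous case. The style is ``show no other movement is possible'' rather than building the new level function directly.

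Your construction is cleaner in one respect: the recursion $\ell(v) = 1 + \max\{\ell(u) : (u,v)\in E(H)\}$ \emph{is} the level-ordering algorithm, so once you verify conditions (i)--(iii) of Lemma~\ref{lem:level-ordering} you have identified $\ell$ with the actual level function of $H*G$ without separate contradiction arguments. This makes the role of item (ii) in the statement more transparent: your recursion shows that $v'\in W_2$ drops to level $2$ exactly when \emph{all} its $W_1$-predecessors were relocated to level $1$, which is a sharper description than the ``first incoming edge'' phrasing. Both your proposal and the paper are somewhat loose on exactly what work that phrase is doing; neither fully resolves it, and you are right to flag it.

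Your worry about ordering bookkeeping is slightly misplaced for this particular theorem, since the statement only concerns which level set each vertex lands in, not the induced order within levels. The paper's proof likewise does not track within-level orderings here; that issue is deferred to the subsequent discussion of how steps 7 and 8 of the reduction algorithm interact with gluing.
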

\begin{proof}
By the previous lemma, it makes sense to say that the levels of $H * G$ are
formed by taking the levels of
$G$ and $H$ and moving vertices between them. Let the levels of $G$ be $V_1,
\dots, V_n$ and those of $H$ be $W_1, \dots, W_m$. We will informally say that
$V_n = W_1$ and that the levels of $H * G$ are
$V_1, \dots, V_n $ (or $W_1$)$, \dots, W_m$ for the sake of simplifying the
language, although the movement of vertices makes the equality false. The cases
for movement of vertices in the gluing $H * G$ are as follows:
\begin{enmrt}
\li If $v \in V_i$ for $1 \leq i \leq n$ before gluing, then $v$ cannot move to
another $V_j$ or $W_k$.
We proceed by induction. For $i = 1$, we observe that $S(H * G) = S(G) = V_1$
and hence no $v \in V_1$ can move forward since it has no incoming edges. For
$i = r + 1$, if $v \in V_i$, then there exists an edge $(v', v) \in G$ with
$v' \in V_{r}$ and $v'$ does not move by induction. Hence, $v$ cannot move to
$V_j$ for $j < i$ since $(v', v)$ would then be an edge that does not go forward
in levels. Since all the incoming edges of $v \in V_i$ are from
$V_1, \dots, V_{i - 1}$, $v$ cannot move to $V_j$ for some $j > i$ or $W_k$ for
some $k$, because this would result in $v$ not having edges from the immediate
previous level.

\li If $v \in S(H)$ before gluing, then $v$ can only move when $v$ glues to some
vertex $v_G \in G$ that has no incoming edges in $G$ and, in this case,
$v_G \in V_1$ in $H * G$ and $v$ moves to $V_1$ by gluing to $v_G$. This is true
because if $v_G$ has some incoming edges in $G$, then $v_G \in V_n$ and we are
in the previous case so that $v_G$ stays in its level and $v$ merely to glues to
$v_G$.

\li If $v \in W_k$ for some $1 < k \leq m$ before gluing, then there exists a
path $w_1, \dots, w_k = v$ with $w_i \in W_i$ in $H$. We can additionally assume
that for each $i$, $(w_i, w_{i + 1})$ is the first incoming edge of $w_{i + 1}$.
After gluing, if $v$ moves to some $W_{k'}$, then the first incoming edge of
$v = w_k$ is from $w_{k - 1}$ so that $w_{k - 1}$ must have also moved to
$W_{k' - 1}$, by our definition of level ordering. Repeating this argument, we
see that $w_1$ must move forward in levels which is impossible by the previous
case. If, on the other hand, $v$ moves backwards, we must have each $w_i$ move
backwards as well because otherwise, we will have at least one edge not going
forward in levels. In particular, $w_1$ must move backwards and hence to $V_1$,
by the first case, since $w_1 \in W_1 = V_n$ before gluing.
In this case, $w_2$ moves to $V_2$, $w_3$ to $V_3$ and so on until $w_k$ moves
to $V_k$ if $k \leq n$. If $k > n$, then $w_1, \dots, w_n$ move to
$V_1, \dots, V_n$ and $w_{n + 1}, \dots, w_{n + k - n} = w_k$ move to
$W_1, \dots, W_{k - n}$ respectively.
\end{enmrt}
\end{proof}

Whether we construct level sets before or after gluing, such movements must take
place and hence the sixth step where we construct the levels commutes with
gluing. We then observe that applying the seventh step before gluing results in
no edgeless vertices in both graphs so that there are no movements of vertices
between levels by the above theorem. Applying the seventh step after gluing
results in movement of vertices and then ``extending'' some vertices to the last
level. In either case, even though we do not obtain the same graph, the parts
that differ, do so only by prefixes and suffixes of edges between copies of the
same vertex that are not part of a pair-of-pants or a co-pair-of-pants --- that
is, identity edges.

For the eighth step, observe that a vertex $v \in S(H)$ after the seventh step
can have either one outgoing edge or two. Now, suppose it glues to an edgeless
vertex in $G$ and moves to the first level. Then, if $v$ has one outgoing edge
$(v, v')$, then either $v'$ gets moved to the second level or the edge is now
a level-skipping edge. In this case, it gets broken up into identity edges
that traverese one level at a time, after the eighth step.
Now, consider the case the $v$ has two outgoing edges $(v, w_1), (v, w_2)$.
None of $w_1$ and $w_2$ can have another outgoing edge after the seventh step.
Hence, the given edges are the first and only incoming edges of both $w_1$ and
$w_2$, and both get moved to the second level, without introducing level
skipping edges. We can repeat analogous arguments with $v'$, $w_1$ and $w_2$ in
place of $v$ to show that the resulting graphs after the eighth step, before and
after gluing differ by identity edges.

Thus, the resulting cobordisms in the ninth step differ by identity cobordisms
at some places and hence are equivalent, yielding the following theorem.

\begin{thm}[Compositionality of Expressions]
For expression graphs $G$ and $H$, gluable at $S(H) \cong T(G)$, we have
\[
  \Exp{H * G} = \Exp{H} \circ \Exp{G}
\]
\end{thm}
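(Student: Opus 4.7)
The plan is to chase the two reductions $(H*G)'$ and $H' * G'$ through the nine steps of algorithm \ref{alg:expr-construction} one step at a time, showing at each stage that either the step literally commutes with gluing, or that the two results differ only by edges between copies of the same vertex that are not part of any pair-of-pants or co-pair-of-pants---i.e., identity edges in the sense already introduced. Since identity-edge differences in the resulting cobordism exactly produce the equivalence relation on expressions defined above, this will yield $\Exp{H * G} = \Exp{H} \circ \Exp{G}$.

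First I would handle steps (1)--(5), which are purely local: they only inspect the incoming/outgoing edges at a single vertex, together with the edge orderings already fixed on $G$ and $H$. Since gluing $H * G$ at $S(H) \cong T(G)$ introduces no new edges at interior vertices of $G$ or $H$ and merely identifies the vertex sets $S(H)$ and $T(G)$ (whose outgoing/incoming edges come unambiguously from $H$ and $G$ respectively), the local modifications performed in these steps on $G$ and $H$ separately agree with the corresponding local modifications done on $H * G$. This part is largely routine bookkeeping.

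The crux is step (6), the level ordering. Here I would invoke the lemma giving $n+m-1$ levels for $H * G$ together with the classification theorem of vertex movements on gluing (the one preceding this statement). That theorem already pins down precisely how vertices in $S(H)$ that glue onto edgeless vertices of $G$ get pushed to the first level of $H*G$, how this propagates along the first-incoming-edges in $H$, and how no other vertex moves. Because each movement changes only the level of a vertex, not which vertices it is connected to nor its edge order, the post-gluing level ordering on $H*G$ and the juxtaposition of the level orderings on $G'$ and $H'$ agree except at the vertices tracked by the movement theorem. The main obstacle is really this coordination between gluing and level ordering, but it is already packaged by the previous theorem, so the work here is to just quote it.

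Finally, for steps (7) and (8), which are where the two constructions genuinely diverge, I would argue as the paragraph immediately above the theorem does: the vertex-extension in step (7) and level-skipping repair in step (8), when applied after the movements of step (6), produce extra prefixes and suffixes of edges whose endpoints are copies of the same vertex and which are never part of a pair-of-pants or a co-pair-of-pants. These are precisely identity edges. Under step (9), such edges become identity cylinders in the resulting $\Cob_2$-cobordism, and insertion/contraction of identity cylinders is exactly the equivalence relation on expressions. Thus $\Exp{H*G}$ and $\Exp{H}\circ\Exp{G}$ are equal as expressions, completing the proof.
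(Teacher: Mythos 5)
Your proposal tracks the paper's own argument essentially step for step: you assert that steps (1)–(5) of the graph reduction algorithm commute with gluing on purely local grounds, you invoke the $n+m-1$ lemma and the preceding vertex-movement theorem to handle step (6), you argue that steps (7)–(8) only generate prefixes and suffixes of identity edges where the two constructions diverge, and you close by noting that step (9) turns these into identity cylinders, which is exactly the equivalence relation on expressions. This is the same decomposition and the same reliance on the preceding lemma and theorem that the paper uses, so the proof is correct and matches the paper's own.
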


We have thus shown that the reduction of a graph to produce an expression is
``functorial'' but in a loose sense since we have not specified a codomain for
$\Exp{}$.
We can also define the following constructs unambiguously:

\begin{defn}[Expression Tensor Product]
Let $G$ and $H$ be expression graphs. We define:
\[
  \Exp{G} \tensor \Exp{H} := \Exp{G \amalg H}
\]
\end{defn}

\begin{defn}[Expression Substitution]
Given an expression graph $G$, we write
\[
  \Exp{G}[f] \text{ or } \Exp{G}[f(u, v)] \text{ or } \Exp{G}[(u, v)/f(u, v)]
\]
to denote the expression obtained by replacing each edge $(u, v) \in G$ in
$\Exp{G}$ with some string of symbols $f(u, v)$, depending on $(u, v)$.
\end{defn}

\subsection{Geometric Realization}

We wish to use expression graphs to generate linear maps by viewing the edges as
paths in a manifold equipped with an algebra-fibred bundle with connection,
taking their associated parallel transport maps, and combining these maps in a
pattern encoded in the expression graph. To accomplish this, we use the
following simple notion of geometric realization of graphs.

\begin{defn}[Geometric Graph]
A graph in a manifold $M$ or a geometric graph is a graph $G = (V, E)$ equipped
with a function $\gamma : E \to C^0(I, M)$, called a geometric realization of
$G$. We call $M$ the realizing manifold of $G$ under $\gamma$. For an edge from
$u$ to $v$ we write $\gamma_{u, v}$ to denote the path associated to that edge.
\end{defn}

\begin{rmk}
A geometric graph is essentially a collection of paths in a manifold but we
consider one or more copies of each path and ``identify'' their end-points in a
pattern encoded by a graph, even though these paths need not share end-points
--- that is, we do not strictly require $\gamma_{u, v}(1) = \gamma_{v, w}(0)$.
We will see that this relaxation is essential to defining our notion of TQFTs.
We will ultimately be interested in expression graphs in manifolds, which will
provide us with a way to associate linear maps to manifolds.
\end{rmk}

\begin{exm}\label{exm:geomegraph}
Consider the graph in example \ref{exm:egraph1}. Then, we consider a mapping of
the edges to paths on a surface as shown below. Note that here we have
$\gamma_{u, v}(1) = \gamma_{v, w}(0)$ for most points but not all --- for
instance, $\gamma_{2, 3}$ and $\gamma_{3, 5}$ do not share any end-points.
\begin{figure}[H]\label{fig:geomegraph}
\begin{center}
\begin{tikzpicture}[x=0.75pt,y=0.75pt,yscale=-0.95,xscale=0.95]

\draw  [dash pattern={on 4.5pt off 4.5pt}] (6,138.5) .. controls (6,65.32) and (107.63,6) .. (233,6) .. controls (358.37,6) and (460,65.32) .. (460,138.5) .. controls (460,211.68) and (358.37,271) .. (233,271) .. controls (107.63,271) and (6,211.68) .. (6,138.5) -- cycle ;
\draw    (52.31,161.49) .. controls (119.1,125.59) and (102.4,162.09) .. (169.19,126.19) ;
\draw [shift={(169.19,126.19)}, rotate = 331.74] [color={rgb, 255:red, 0; green, 0; blue, 0 }  ][fill={rgb, 255:red, 0; green, 0; blue, 0 }  ][line width=0.75]      (0, 0) circle [x radius= 3.35, y radius= 3.35]   ;
\draw [shift={(111.13,143.84)}, rotate = 180.1] [color={rgb, 255:red, 0; green, 0; blue, 0 }  ][line width=0.75]    (10.93,-3.29) .. controls (6.95,-1.4) and (3.31,-0.3) .. (0,0) .. controls (3.31,0.3) and (6.95,1.4) .. (10.93,3.29)   ;
\draw [shift={(52.31,161.49)}, rotate = 331.74] [color={rgb, 255:red, 0; green, 0; blue, 0 }  ][fill={rgb, 255:red, 0; green, 0; blue, 0 }  ][line width=0.75]      (0, 0) circle [x radius= 3.35, y radius= 3.35]   ;
\draw    (169.19,126.19) .. controls (233.04,103.33) and (202.45,29.92) .. (275.29,52.51) ;
\draw [shift={(275.29,52.51)}, rotate = 17.23] [color={rgb, 255:red, 0; green, 0; blue, 0 }  ][fill={rgb, 255:red, 0; green, 0; blue, 0 }  ][line width=0.75]      (0, 0) circle [x radius= 3.35, y radius= 3.35]   ;
\draw [shift={(217.18,75.67)}, rotate = 476.11] [color={rgb, 255:red, 0; green, 0; blue, 0 }  ][line width=0.75]    (10.93,-3.29) .. controls (6.95,-1.4) and (3.31,-0.3) .. (0,0) .. controls (3.31,0.3) and (6.95,1.4) .. (10.93,3.29)   ;
\draw [shift={(169.19,126.19)}, rotate = 340.3] [color={rgb, 255:red, 0; green, 0; blue, 0 }  ][fill={rgb, 255:red, 0; green, 0; blue, 0 }  ][line width=0.75]      (0, 0) circle [x radius= 3.35, y radius= 3.35]   ;
\draw    (52.31,161.49) .. controls (178.45,204.05) and (218.84,194.46) .. (229.3,171.66) ;
\draw [shift={(229.3,171.66)}, rotate = 294.65] [color={rgb, 255:red, 0; green, 0; blue, 0 }  ][fill={rgb, 255:red, 0; green, 0; blue, 0 }  ][line width=0.75]      (0, 0) circle [x radius= 3.35, y radius= 3.35]   ;
\draw [shift={(142.99,186.54)}, rotate = 191.16] [color={rgb, 255:red, 0; green, 0; blue, 0 }  ][line width=0.75]    (10.93,-3.29) .. controls (6.95,-1.4) and (3.31,-0.3) .. (0,0) .. controls (3.31,0.3) and (6.95,1.4) .. (10.93,3.29)   ;
\draw [shift={(52.31,161.49)}, rotate = 18.65] [color={rgb, 255:red, 0; green, 0; blue, 0 }  ][fill={rgb, 255:red, 0; green, 0; blue, 0 }  ][line width=0.75]      (0, 0) circle [x radius= 3.35, y radius= 3.35]   ;
\draw    (150.83,241.06) .. controls (275.76,227.54) and (145.96,205.62) .. (189.19,146.19) ;
\draw [shift={(169.19,126.19)}, rotate = 286.3] [color={rgb, 255:red, 0; green, 0; blue, 0 }  ][fill={rgb, 255:red, 0; green, 0; blue, 0 }  ][line width=0.75]      (0, 0) circle [x radius= 3.35, y radius= 3.35]   ;
\draw [shift={(192.755,201.39)}, rotate = 410.89] [color={rgb, 255:red, 0; green, 0; blue, 0 }  ][line width=0.75]    (10.93,-3.29) .. controls (6.95,-1.4) and (3.31,-0.3) .. (0,0) .. controls (3.31,0.3) and (6.95,1.4) .. (10.93,3.29)   ;
\draw [shift={(150.83,241.06)}, rotate = 353.83] [color={rgb, 255:red, 0; green, 0; blue, 0 }  ][fill={rgb, 255:red, 0; green, 0; blue, 0 }  ][line width=0.75]      (0, 0) circle [x radius= 3.35, y radius= 3.35]   ;
\draw [shift={(189.19,146.19)}, rotate = 353.83] [color={rgb, 255:red, 0; green, 0; blue, 0 }  ][fill={rgb, 255:red, 0; green, 0; blue, 0 }  ][line width=0.75]      (0, 0) circle [x radius= 3.35, y radius= 3.35]   ;
\draw    (229.3,171.66) .. controls (243.97,235.64) and (284.76,184.82) .. (314.44,210.62) ;
\draw [shift={(314.44,210.62)}, rotate = 41] [color={rgb, 255:red, 0; green, 0; blue, 0 }  ][fill={rgb, 255:red, 0; green, 0; blue, 0 }  ][line width=0.75]      (0, 0) circle [x radius= 3.35, y radius= 3.35]   ;
\draw [shift={(263.43,205.67)}, rotate = 180.44] [color={rgb, 255:red, 0; green, 0; blue, 0 }  ][line width=0.75]    (10.93,-3.29) .. controls (6.95,-1.4) and (3.31,-0.3) .. (0,0) .. controls (3.31,0.3) and (6.95,1.4) .. (10.93,3.29)   ;
\draw [shift={(229.3,171.66)}, rotate = 77.09] [color={rgb, 255:red, 0; green, 0; blue, 0 }  ][fill={rgb, 255:red, 0; green, 0; blue, 0 }  ][line width=0.75]      (0, 0) circle [x radius= 3.35, y radius= 3.35]   ;
\draw    (169.19,126.19) .. controls (246.58,102.99) and (236.29,142.42) .. (284.82,127.77) ;
\draw [shift={(284.82,127.77)}, rotate = 343.19] [color={rgb, 255:red, 0; green, 0; blue, 0 }  ][fill={rgb, 255:red, 0; green, 0; blue, 0 }  ][line width=0.75]      (0, 0) circle [x radius= 3.35, y radius= 3.35]   ;
\draw [shift={(227.95,120.58)}, rotate = 190.98] [color={rgb, 255:red, 0; green, 0; blue, 0 }  ][line width=0.75]    (10.93,-3.29) .. controls (6.95,-1.4) and (3.31,-0.3) .. (0,0) .. controls (3.31,0.3) and (6.95,1.4) .. (10.93,3.29)   ;
\draw    (314.44,210.62) .. controls (399.23,240.98) and (364.55,150.12) .. (406.29,102.26) ;
\draw [shift={(406.29,102.26)}, rotate = 311.09] [color={rgb, 255:red, 0; green, 0; blue, 0 }  ][fill={rgb, 255:red, 0; green, 0; blue, 0 }  ][line width=0.75]      (0, 0) circle [x radius= 3.35, y radius= 3.35]   ;
\draw [shift={(377.94,180.41)}, rotate = 465.08] [color={rgb, 255:red, 0; green, 0; blue, 0 }  ][line width=0.75]    (10.93,-3.29) .. controls (6.95,-1.4) and (3.31,-0.3) .. (0,0) .. controls (3.31,0.3) and (6.95,1.4) .. (10.93,3.29)   ;
\draw  [color={rgb, 255:red, 0; green, 0; blue, 0 }  ,draw opacity=1 ][dash pattern={on 0.84pt off 2.51pt}] (16.54,136.11) .. controls (25.63,126.66) and (66.01,145.03) .. (106.72,177.14) .. controls (147.44,209.25) and (173.08,242.94) .. (163.99,252.39) .. controls (154.9,261.85) and (114.52,243.48) .. (73.81,211.37) .. controls (33.09,179.26) and (7.45,145.57) .. (16.54,136.11) -- cycle ;
\draw    (275.29,52.51) .. controls (328.72,77.63) and (326.15,20.9) .. (367.89,59.19) ;
\draw [shift={(367.89,59.19)}, rotate = 42.53] [color={rgb, 255:red, 0; green, 0; blue, 0 }  ][fill={rgb, 255:red, 0; green, 0; blue, 0 }  ][line width=0.75]      (0, 0) circle [x radius= 3.35, y radius= 3.35]   ;
\draw [shift={(323.24,52.43)}, rotate = 512.6] [color={rgb, 255:red, 0; green, 0; blue, 0 }  ][line width=0.75]    (10.93,-3.29) .. controls (6.95,-1.4) and (3.31,-0.3) .. (0,0) .. controls (3.31,0.3) and (6.95,1.4) .. (10.93,3.29)   ;
\draw    (290.82,100.77) .. controls (383.33,149.3) and (366.22,80.72) .. (406.29,102.26) ;
\draw [shift={(350.78,118.25)}, rotate = 537.99] [color={rgb, 255:red, 0; green, 0; blue, 0 }  ][line width=0.75]    (10.93,-3.29) .. controls (6.95,-1.4) and (3.31,-0.3) .. (0,0) .. controls (3.31,0.3) and (6.95,1.4) .. (10.93,3.29)   ;
\draw [shift={(290.82,100.77)}, rotate = 42.53] [color={rgb, 255:red, 0; green, 0; blue, 0 }  ][fill={rgb, 255:red, 0; green, 0; blue, 0 }  ][line width=0.75]      (0, 0) circle [x radius= 3.35, y radius= 3.35]   ;
\draw  [color={rgb, 255:red, 0; green, 0; blue, 0 }  ,draw opacity=1 ][dash pattern={on 0.84pt off 2.51pt}] (138.62,104.37) .. controls (148.74,97.58) and (179.13,108.92) .. (206.48,129.69) .. controls (233.84,150.47) and (247.81,172.81) .. (237.69,179.59) .. controls (227.57,186.38) and (197.19,175.04) .. (169.83,154.27) .. controls (142.47,133.49) and (128.5,111.15) .. (138.62,104.37) -- cycle ;
\draw  [dash pattern={on 0.84pt off 2.51pt}] (268.26,15.88) .. controls (281.5,13.99) and (305.14,60.41) .. (321.05,119.55) .. controls (336.96,178.69) and (339.13,228.17) .. (325.89,230.05) .. controls (312.64,231.94) and (289.01,185.52) .. (273.09,126.38) .. controls (257.18,67.24) and (255.02,17.76) .. (268.26,15.88) -- cycle ;
\draw  [dash pattern={on 0.84pt off 2.51pt}] (358.5,48.46) .. controls (367.96,43.55) and (391.4,56.66) .. (410.84,77.74) .. controls (430.28,98.83) and (438.37,119.9) .. (428.91,124.82) .. controls (419.45,129.73) and (396.01,116.62) .. (376.57,95.54) .. controls (357.13,74.45) and (349.04,53.37) .. (358.5,48.46) -- cycle ;

\draw (89.78,125.75) node [anchor=north west][inner sep=0.75pt]    {$\gamma _{1}{}_{,}{}_{3}$};
\draw (122.67,165.43) node [anchor=north west][inner sep=0.75pt]    {$\gamma _{1}{}_{,}{}_{4}$};
\draw (165.08,200.67) node [anchor=north west][inner sep=0.75pt]    {$\gamma _{2}{}_{,}{}_{3}$};
\draw (179.33,67.29) node [anchor=north west][inner sep=0.75pt]    {$\gamma _{3}{}_{,}{}_{5}$};
\draw (208.3,100.52) node [anchor=north west][inner sep=0.75pt]    {$\gamma _{3}{}_{,}{}_{6}$};
\draw (242.33,185.1) node [anchor=north west][inner sep=0.75pt]    {$\gamma _{4}{}_{,}{}_{7}$};
\draw (299.22,38.41) node [anchor=north west][inner sep=0.75pt]    {$\gamma _{5}{}_{,}{}_{8}$};
\draw (326.9,100.09) node [anchor=north west][inner sep=0.75pt]    {$\gamma _{6}{}_{,}{}_{9}$};
\draw (344.63,175.21) node [anchor=north west][inner sep=0.75pt]    {$\gamma _{7}{}_{,}{}_{9}$};

\end{tikzpicture}

\end{center}
\caption{A geometric graph}
\end{figure}
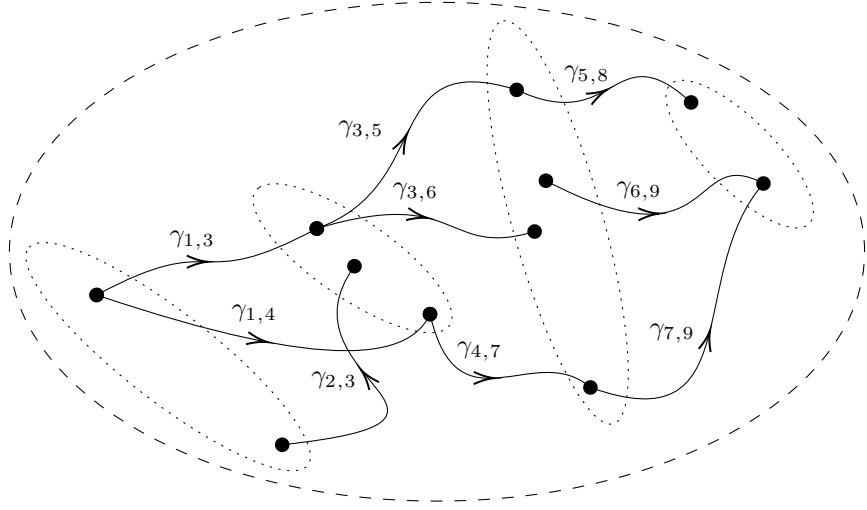
We notice that the same algebraic expression carries over:
\begin{align*}
        & (\gamma_{5, 8} \tensor (\gamma_{6, 9} \cwedge \gamma_{7, 9})) \\
  \circ & ((\gamma_{3, 5} \cvee \gamma_{3, 6}) \tensor \gamma_{4, 7}) \\
  \circ & ((\gamma_{3, 3} \cwedge \gamma_{3, 3}) \tensor \gamma_{4, 4}) \\
  \circ & (\gamma_{3, 3} \tensor (\gamma_{4, 4} \boxtimes \gamma_{2, 3})) \\
  \circ & ((\gamma_{1, 3} \cvee \gamma_{1, 4}) \tensor \gamma_{2, 2})
\end{align*}
\end{exm}

At this point, we make a necessary observation. Consider geometric expression
graphs in a manifold equipped with an $A$--fibred bundle with connection. Then,
consider the expression of an expression graph realized in this manifold. If we
substitute the edges in the expression of the graph with the parallel transport
maps along the paths associated to the edges, we obtain a linear map from a
non-zero tensor power of $A$ to another such tensor power, given the graph has
at least one edge. However, there is no immediate way to obtain a map of the
form $\R \to A$ or $A \to \R$ in this method.

We recall that cobordisms $\varnothing \to S^1$ or
$S^1 \to \varnothing$ in $\Cob_2$ are given by composing cobordisms
$S^1 \to S^1$ with the following structures:
\[\begin{tikzpicture}[scale=0.5]
\cupcob{0, 2}
\node at (1.5, 0) (cuplbl) {{\footnotesize cup}};
\end{tikzpicture}
\qquad \qquad \qquad
\begin{tikzpicture}[scale=0.5]
\capcob{3, 2}
\node at (1.5, 0) (caplbl) {{\footnotesize cap}};
\end{tikzpicture}\]
We would like analogous structures for expression graphs. We can use vertex
colouring to distinguish between copies of the empty manifold and copies of
$S^1$ in the cobordism manifold resulting from the graph reduction.
So, we consider vertex coloured expression graphs.

\begin{defn}[Pretransport Graph]
A pretransport graph is an expression graph $G$ with a colour function
$V(G) \to \set{{\color{green!55!black}\text{green}},
{\color{blue!55!black}\text{blue}}}$ (with no constraints on adjacency of
vertices).
\end{defn}

\begin{defn}[Pretransport Homomorphism]
A pretransport homomorphism is an expression homomorphism that preserves the
colouring of vertices.
\end{defn}

From \ref{cor:expiso_unique}, it is immediate that:

\begin{cor}
Pretransport isomorphisms are unique.
\end{cor}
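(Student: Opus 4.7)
The plan is to reduce the claim directly to Corollary \ref{cor:expiso_unique}. Since a pretransport graph is by definition an expression graph equipped with the extra datum of a vertex $2$-colouring, and a pretransport homomorphism is an expression homomorphism that additionally preserves this colouring, every pretransport isomorphism $f : G \to H$ is in particular an expression isomorphism between the underlying expression graphs of $G$ and $H$.

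Suppose $f, g : G \to H$ are two pretransport isomorphisms. Forgetting the colour condition, both $f$ and $g$ are expression isomorphisms of the underlying expression graphs. By Corollary \ref{cor:expiso_unique}, expression isomorphisms between two fixed expression graphs are unique, so $f = g$ as expression homomorphisms, and hence as pretransport homomorphisms (since the colour-preservation condition is a property, not extra data).

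There is essentially no obstacle here: the only thing to check is that the forgetful passage from pretransport graphs to expression graphs sends isomorphisms to isomorphisms, which is immediate from the definitions. I would present the proof in at most two sentences, noting that the colouring plays no role in the argument beyond ensuring that the candidate isomorphisms exist at all.
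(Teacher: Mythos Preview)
Your proposal is correct and matches the paper's own reasoning exactly: the paper simply states that the corollary is immediate from Corollary~\ref{cor:expiso_unique}, which is precisely the reduction you carry out. Your explicit observation that the colouring is merely a property (not extra data) is a helpful clarification, but the underlying argument is identical.
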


We now redefine the source and target of a pretransport graph so that
constructions like expressions and gluing can be adapted to this new setting
where we allow empty graphs to be sources and targets.

\begin{defn}
Let $G$ be a pretransport graph with $s_1, \dots, s_n$ the ordering of the
source vertices of its underlying expression graph and $t_1, \dots, t_m$ the
ordering of the target vertices thereof. Let $s_{i_1}, \dots, s_{i_p}$ be the
green source vertices and $t_{j_1}, \dots t_{j_q}$ be the green target vertices
in $G$. Then, the source of $G$ is defined to be the expression graph
\[
  S(G) := \set{s_1, \dots, s_{i_1 - 1}}
       \amalg \varnothing
       \amalg \set{s_{i_1 + 1}, \dots, s_{i_2}}
       \amalg \varnothing
       \amalg \cdots
       \amalg \varnothing
       \amalg \set{s_{i_p + 1}, \dots, s_{n}}
\]
The target of $G$ is defined to be the expression graph
\[
  T(G) := \set{t_{1}, \dots, t_{j_1 - 1}}
       \amalg \varnothing
       \amalg \set{t_{j_1 + 1}, \dots, t_{j_2}}
       \amalg \varnothing
       \amalg \cdots
       \amalg \varnothing
       \amalg \set{t_{j_q + 1}, \dots, t_{m}}
\]
\end{defn}

\begin{defn}
Gluing of expression graphs extends to gluing of pretransport graphs if we
consider pretransport isomorphisms instead of expression isomorphisms. More
precisely, for pretransport graphs $G$ and $H$, if $S(H) \cong T(G)$, then the
gluing is exactly the same as for expression graphs except for positions where
we have instances of $\varnothing$. For this, we have the following cases:
\begin{enmrt}
\li $H$ and $G$ both have green vertices at that position: we glue them
as usual.
\li $H$ has a green vertex but $G$ has $\varnothing$: we simply forget
that $G$ has $\varnothing$.
\li $H$ has $\varnothing$ but $G$ has a green vertex: we forget the copy
of $\varnothing$
\li $H$ and $G$ both have copies of $\varnothing$: we keep a single copy
of $\varnothing$
\end{enmrt}
\end{defn}

\begin{defn}[Expression Construction]
The expression of a pretransport graph $G$ is obtained by running the graph
reduction algorithm and expression construction on $G$, copying colours whenever
vertices are copied, and then identifying green vertices
with the empty manifold in the resulting cobordism in $\Cob_2$, adding caps and
cups where appropriate.
\end{defn}

We then consider the class of graphs that will yield our desired linear maps.

\begin{defn}[Transport Graph]
We call a geometric pretransport graph a transport graph.
\end{defn}

\begin{defn}[Geometric Homomorphism]
Let $G$ and $H$ be graphs in manifolds $M$ and $N$ with geometric realizations
$\gamma^G$ and $\gamma^H$ respectively, then a homomorphism $h : G \to H$
equipped with a smooth map $f : M \to N$ making the following diagram commute is
called a geometric homomorphism:
\[\begin{tikzpicture}[baseline=(a).base]
\node[scale=\diagscale] (a) at (0, 0){
\begin{tikzcd}
E(G) \ar[d, "\gamma^G" left] \ar[r, "h" above] &
E(H) \ar[d, "\gamma^H" right] \\
C^0(I, M) \ar[r, "f_*" below] &
C^0(I, N)
\end{tikzcd}
};
\end{tikzpicture}\]
where $f_*$ is the post-composition map $g \mapsto f \circ g$, and the
homomorphism $h$ is viewed as the function it induces on edge sets.
\end{defn}

\begin{defn}[Transport Homomorphism]
A geometric pretransport homomorphism is called a transport homomorphism.
\end{defn}

Pasting commutative squares as the one above along the $\gamma$ sides, we
observe that geometric homomorphisms compose associatively. Taking
$h$ and $f$ as the identity maps yields an identity morphism of geometric
graphs. Similarly, transport homomorphism also compose associatively and have
units. Furthermore, transport graphs inherit the disjoint union from their
underlying expression graphs and manifolds. For their gluing, we require a
notion of gluing geometric realizations.

We first observe that if $G$ and $H$ are transport graphs with a transport
isomorphism $S(H) \to T(G)$, it is unique by \ref{cor:expiso_unique}, since
there is only one map of path sets --- the empty map which always makes the
rquired diagram commute. We then have the following basic fact.

\begin{cor}
Let $G$ and $H$ be transport graphs realized in manifolds
$M$ and $N$ with geometric realizations $\gamma^G$ and
$\gamma^H$ respectively such that $\psi_{G, H} : S(H) \to T(G)$ is a unique
pretransport homomorphism, and $M$ and $N$ are ``smoothly gluable'' at some
part. Then, there exists a geometric realization
\[
  \gamma^H * \gamma^G : E(H * G) \to C^0(I, N * M)
\]
of the pretransport graph $H * G$ in $N * M$.
\end{cor}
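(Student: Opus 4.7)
The plan is to construct $\gamma^H * \gamma^G$ piecewise on $E(H*G) = E(G) \amalg E(H)$, using the smooth inclusions of $M$ and $N$ into the glued manifold $N*M$. First I would fix the canonical smooth inclusions $i_M : M \hookrightarrow N*M$ and $i_N : N \hookrightarrow N*M$ provided by the hypothesis that $M$ and $N$ are smoothly gluable at some part (these inclusions exist because $N*M$ is the pushout realizing the gluing). Post-composition then gives induced maps $(i_M)_* : C^0(I,M) \to C^0(I, N*M)$ and $(i_N)_* : C^0(I,N) \to C^0(I,N*M)$, which will let us transfer the given realizations into a common codomain.

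Next, I would appeal to the definition of gluing of expression graphs (extended to pretransport graphs) from the previous subsection, which gives $E(H*G) = E(G) \amalg E(H)$ as sets; only the vertex sets get identified along $\psi_{G,H}$, while no edges are created or deleted. Using the universal property of the coproduct in $\Set$, the two maps
\[
  (i_M)_* \circ \gamma^G : E(G) \to C^0(I, N*M), \qquad
  (i_N)_* \circ \gamma^H : E(H) \to C^0(I, N*M)
\]
assemble into a single function
\[
  \gamma^H * \gamma^G : E(G) \amalg E(H) \to C^0(I, N*M)
\]
defined piecewise by $(\gamma^H * \gamma^G)(e) = i_M \circ \gamma^G(e)$ for $e \in E(G)$ and $(\gamma^H * \gamma^G)(e) = i_N \circ \gamma^H(e)$ for $e \in E(H)$. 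This is the candidate geometric realization.

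To finish, I would check that nothing more is required by the definition of a geometric graph: a geometric realization is merely a function from the edge set to $C^0(I, \cdot)$, and the remark following the definition explicitly notes that we do not demand $\gamma_{u,v}(1) = \gamma_{v,w}(0)$ even when edges share a vertex. Hence there are no compatibility conditions to verify at the interface $S(H) \cong T(G)$, and the candidate above is automatically a valid geometric realization of $H*G$ in $N*M$.

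The only genuine subtlety — and the place I expect to have to be most careful — is the phrase ``smoothly gluable at some part'' for the manifolds. One must ensure that $N*M$ is a well-defined object of $\Man$ (as in the bundle-cobordism construction of \S\ref{subsec:bund_cob}, via collars) so that the inclusions $i_M$ and $i_N$ are actually smooth, which is what makes $(i_M)_*$ and $(i_N)_*$ sensible. Once this manifold-level gluing has been invoked, the graph-level construction is purely set-theoretic, driven by the coproduct decomposition of $E(H*G)$, and no further work is needed.
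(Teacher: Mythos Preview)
Your proposal is correct and follows exactly the paper's approach: the paper's proof is the single sentence ``We can define $\gamma^H * \gamma^G$ piecewise,'' and you have simply spelled out what that means using the coproduct decomposition $E(H*G) = E(G) \amalg E(H)$ and the inclusions into $N*M$. Your additional remarks about why no compatibility at the interface is required are accurate and helpful, but not something the paper itself bothers to record.
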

\begin{proof}
We can define $\gamma^H * \gamma^G$ piecewise.
\end{proof}

\begin{defn}[Gluing Transport Graphs]
For transport graphs $(G, \gamma^G)$ and $(H, \gamma^H)$ with $S(H) \cong T(G)$
with a transport homomorphism, we define their gluing in the obvious way:
\[
  (H, \gamma^H) * (G, \gamma^G) := (H * G, \gamma^H * \gamma^G)
\]
\end{defn}

This is enough structure to develop a simple system for doing algebra using
paths on a manifold. In particular, it yields at least two monoidal double
categories of transport graphs realized in manifolds. We develop one of these
notions of TQFTs next.

\subsection{Single Manifold TQFT}\label{subsec:sing_man_tqft}

Notice that the theory we developed so far transfers verbatim from the smooth
setting to the setting of complex bundles in that we can replace all instances
of $\R$ with $\C$, smooth bundles and $\R$--linear connections with smooth
manifolds, complex bundles and $\C$--linear connections on complex bundles
respectively. From this point onwards, when we say manifold, bundle, connection,
etc., we will mean these in either of the two settings. We will write $\K$ to
mean either of $\R$ or $\C$. With this convention, we consider the following
data for a monoidal double category:

\begin{enmrt}
\li Object category: objects are totally ordered, $2$--coloured, finite sets;
morphisms are order-preserving, colour-preserving (unique) bijections
$V \stackrel{!}{\longleftrightarrow} V'$

\li Morphism category: objects (horizontal $1$--morphisms) are transport graphs
$(G, \gamma)$, where for a fixed manifold $M$, a fixed bundle $\pi : E \to M$,
and a fixed connection $\nabla$ on $\pi$, $G$ is a pretransport graph with
$\gamma$ a geometric realization of $G$ in $M$; morphisms are tuples
\[
  (f_0, f_1, h) : (G_1, \gamma^1) \to (G_2, \gamma^2)
\]
where $(f_0, f_1)$ is an automorphism of the connection $\nabla$ and
$(f_0, h)$ is a transport isomorphism $(G_1, \gamma^1) \to (G_2, \gamma^2)$
\footnote{It is possible that this condition forces $(f_0, f_1) = (\id, \id)$.}

\li Source functor: $S : G \mapsto S(G)$; for a $2$--morphism $(f_0, f_1, h)$,
$S(f_0, f_1, h)$ is the unique order-preserving bijection
$S(\dom h) \to S(\codom h)$

\li Target functor: $T : G \mapsto T(G)$, defined similarly as $S$

\li Unit functor: $U : V \mapsto V$; $U : f \mapsto f$ --- each finite,
totally-ordered, $2$--coloured set is a transport graph with no edges and
order- and colour- preserving bijections between finite sets are unique
transport isomorphisms

\li Horizontal composition:
$(G_2, \gamma^2) * (G_1, \gamma^1) = (G_2 * G_1, \gamma^2 * \gamma^1)$

\li Horizontal associators: inherited from the categories of sets and manifolds

\li Horizontal unitors: inherited like associators

\li Monoidal product: disjoint union

\li Monoidal unit: empty set for object category, empty graph for morphism
category
\end{enmrt}

\begin{defn}[Double Category of Transport Graphs in a Manifold]
The above data defines the double category of transport graphs in $M$ and we
denote it as $\TG(M)$. We denote the object category as $\TG(M)_0$ and the
morphism category as $\TG(M)_1$.
\end{defn}

We then consider a double functor defined as follows.

\begin{defn}[Parallel Transport Calculus over a Manifold]\label{defn:sing_man_tqft}
For a finite, totally-ordered, $2$--coloured set
$V = \set{v_1, \dots, v_n}$ in $\TG(M)_0$, we set
\[
  F(V) := \bigotimes_{i = 1}^{n} c(v_i)
\]
where $c(v) = A$ if $v$ is blue and $c(v) = \K$ if $v$ is green,
for some $\K$--algebra $A$.

For every unique order- and colour-preserving bijection $f : V \to V'$ in
$\TG(M)_0$, we set $F(f) := \id_{F(V)} = \id_{F(V')}$.

For a transport graph $(G, \gamma)$ in $M$ --- an object in $\TG(M)_1$ --- and an
edge $(u, v) \in G$, we denote $\nabla^{\gamma_{u, v}}$ to be the linear map
$A \to A$ obtained by parallel transport along $\gamma_{u, v}$, with respect to
$\nabla$. Fixing some element $a_{u, v} \in A$, we then define:
\[
  F(u, v) := \begin{cases}
    \id_A
      & u = v \text{ is blue} \\
    \id_{\K} 
      & u \text{ is green and } v \text{ is green} \\
    \nabla^{\gamma_{u, v}}
      & u \text{ is blue and } v \text{ is blue} \\
    1 \mapsto \nabla^{\gamma_{u, v}}(a_{u, v})
      & u \text{ is green and } v \text{ is blue} \\
    \text{trace} \circ \nabla^{\gamma_{u, v}}
      & u \text{ is blue and } v \text{ is green}
  \end{cases}
\]
We then obtain a linear map:
\[
  F(G, \gamma) := \Exp{G}[F(u, v)]
               :  F(S(G)) \to F(T(G))
\]
For a $2$--morphism
$(f_0, f_1, h) : (G_1, \gamma_1) \to (G_2, \gamma_2)$ in $\TG(M)_1$, we consider
the path $(r_t, s_t)$ in the isomorphism group of the connection $\nabla$ such
that $(r_0, s_0) = (\id_M, \id_E)$ and $(r_1, s_1) = (f_0, f_1)$. We then have a
smoothly varying family of functions $s_t\gamma : E(G) \to C^0(I, M)$
where $(s_t\gamma)_{u, v} = s_t \circ \gamma_{u, v}$. This yields a smoothly
varying family of linear maps, which we write as:
\[
  F(f_0, f_1, h) := \Exp{G}[s_t\gamma]
    : F(S(G)) \to F(T(G)), t \in [0, 1]
\]
We call $F$ a parallel transport calculus on $M$.
\end{defn}

\begin{rmk}\label{rmk:any_vect_space}
We note that so far we have considered $\cwedge$ and $\cvee$ to mean the product
and its dual for some algebra $A$. This is not strictly necessary. We could, in
principle, choose, for each instance of $\cwedge$ in an expression of a
transport graph, a distinct map $A \tensor A \to A$ when defining a parallel
transport calculus as above, as long as it does not disturb the monoidal double
functoriality of $F$. In fact, we could replace $A$ with an arbitrary vector
space $V$ and work with arbitrary linear maps $V \tensor V \to V$ and
$V \to V \tensor V$ in replacing $\cwedge$ and $\cvee$ in expressions arising
from transport graphs.
\end{rmk}

We note that the definition of $F$ does not specify the codomain. We will define
the codomain double category in the next section along with a notion of TQFTs
based on transport graphs in cobordisms equipped with connections.

We also note that there is some redundancy in this setup. The empty graph, the
single green vertex and paths with only green vertices are not the same
horizontal $1$--morphism but they are all morphisms
$\varnothing \to \varnothing$ and map to the identity morphism of $\K$ under
$F$. This differs from $\Cob_2$ and $2\Thick$ in that there are unique
morphisms $\varnothing \to \varnothing$ in these categories as well as their
corresponding double categories that map to the identity morphism of $\K$ under
a usual TQFT.


\section{Parallel Transport Calculus}

In definition \ref{defn:sing_man_tqft}, we associated to each object a tensor
power of some algebra $A$; to each vertical $1$--morphism, the identity on $A$;
to each horizontal $1$--morphism, a linear map from a tensor power of $A$ to
another; to each $2$--morphism, a smooth family of linear maps. This data
suggests a codomain double category of our single manifold TQFT, which we define
next.

\subsection{Double Category of Finite-Dimensional Vector Spaces}

We consider as the object or $0$--morphism category of the codomain double
category the monoidal category of finite-dimensional, real (or complex) vector
spaces $\FVect_{\K}$ for $\K = \R$ or $\C$. We then notice a categorification
of the collection of morphisms of this category as follows.

We consider the collection of all morphisms of $\FVect_{\K}$,
\[
  \s{L} := \set[f]{f \in \Hom_{\FVect_{\K}}(U, V), U, V \in \Ob \FVect_{\K}}
\]
as the object collection of the horizontal $1$--morphism category of our double
category. For each such map, we choose some $p \times q$ matrix representation
$[a_{ij}]$ which yields a mapping $\iota : \s{L} \to \M_{\N}(\C)$, where
$\M_{\N}(\C)$ is the set of infinite complex matrices indexed by $\N \times \N$,
defined by:
\[
  \iota(a)_{ij} = \begin{cases}
    a_{ij} & 1 \leq i \leq p, 1 \leq j \leq q \\
    0      & \text{otherwise}
  \end{cases}
\]

This provides a notion of $2$--morphism for our double category under
construction. We notice that the Banach space $\s{B}$ of bounded operators on
the Hilbert space space $\ell^2(\N)$ can be seen as a subset of $\M_{\N}$, such
that $\s{L} \subset \s{B} \subset \M_{\N}$. It is well-known that $\s{B}$, being
a Banach space, is a simply-connected topological space. Now, for $a$ and $a'$
in $\s{L}$, we define a $2$--morphism to be a homomotpy class of paths $\alpha$
from $\iota(a)$ to $\iota(a')$ in $\s{B}$, which we denote as
$\alpha : a \To a'$. By the fact that the fundamental groupoid
$\Pi_1(\s{B})$ is a category, our $2$--morphisms have a strictly associative and
unital composition.  We note that the fundamental groupoid $\Pi_1(\s{B})$ is not
our morphism category --- it only supplies morphisms for $\s{L}$. An instance of
why the distinction is important is that a single morphism in $\Pi_1(\s{B})$
might represent morphisms between two different pairs of objects in $\s{L}$,
depending on the chosen matrix representations.

It is worthwhile observing the action of composition and tensor products of
linear maps on $2$--morphisms. We first notice that $\iota$ can be defined so that
it is multiplicative in two ways:
\[
  \iota(b \circ a) := \iota(b) \cdot \iota(a)
\]
where the right-hand-side product is the matrix product in $\M_{\N}(\C)$, and
\[
  \iota(a \tensor b) := \iota(a) \tensor \iota(b)
\]
where the $\tensor$ on the right is given by the Kronecker product. Now,
consider pairs of homotopic paths
$\alpha_1, \alpha_2 : a \To a'$ and
$\beta_1, \beta_2 : b \To b'$, where $b, a$ and $b', a'$ are
composable pairs of linear maps. We consider the pointwise composites:
\begin{equation}\label{eqn:pointwise_comp}
  (\beta_i \circ \alpha_i)(t) := \beta_i(t) \circ \alpha_i(t), t \in [0, 1],
    i \in \set{1, 2}
\end{equation}
Now, the $\beta_i \circ \alpha_i$ are clearly paths
$b \circ a \To b' \circ a'$ in $\s{B}$ --- we wish to show that
they are homotopic, making the operation well-defined on homomotopy classes of
paths. It suffices to observe that $\s{B}$ is simply connected, so that there is
exactly one class of homotopic paths between two points in $\Pi_1(\s{B})$.

Consider again elements
$a : U \to V, a' : U' \to V', b : X \to Y, b' : X' \to Y'$ of $\s{L}$. We define
\[
  n_x := \dim \dom x, m_x := \dim \codom x, x \in \set{a, a', b, b'}
\]
and
\[
  N_{x} = N_{x'} := \max\set{n_x, n_{x'}},
  M_{x} = M_{x'} := \max\set{m_x, m_{x'}}, x \in \set{a, b}
\]
We then have matrix representations of each $x \in \set{a, a', b, b'}$:
\[
  \iota'(x) := [\iota(x)_{ij}] \in \M_{M_x \times N_x}(\C)
\]
Now, $\M_{M_x \times N_x}(\C)$, being simply connected, has a path
$\gamma_x$ from $\iota'(x)$ to $\iota'(x')$ for each $x \in \set{a, b}$. In
fact, using the inclusion $\M_{M_x \times N_x}(\C) \hto \s{B}$ induced by
$\iota$, each $\gamma_x$ yields a path $\wh{\gamma_x}$ in $\s{B}$ from
$\iota(x)$ to $\iota(x')$, with its image contained in $\s{L}$. Since $\s{B}$ is
simply connected, every path in $\s{B}$ from $\iota(x)$ to $\iota(x')$ is
homotopic to $\wh{\gamma_x}$. Furthermore,
$\wh{\gamma_a}(t) \tensor \wh{\gamma_b}(t)$ is also a path from
$\iota(a \tensor a') = \iota(a) \tensor \iota(a')$ to
$\iota(b \tensor b') = \iota(b) \tensor \iota(b')$ in $\s{B}$, to which all
other paths with the same endpoints are homotopic. This shows that
\[
  (\wh{\gamma_a} \tensor \wh{\gamma_b})(t) :=
    \wh{\gamma_a}(t) \tensor \wh{\gamma_b}(t)
\]
is well-defined on homotopy classes of paths in $\Pi_1(\s{B})$. For
associativity of $\tensor$, we observe that
\[
  \iota(a \tensor (b \tensor c))
    = \iota(a) \tensor \iota(b) \tensor \iota(c)
    = \iota((a \tensor b) \tensor c)
\]
so that the constant path on $\iota(a) \tensor \iota(b) \tensor \iota(c)$
functions as the associator
\[
  a \tensor (b \tensor c) \to (a \tensor b) \tensor c
\]
For unitality of $\tensor$, we take the matrix $1_{\tensor} \in \s{B}$ whose
$(i, j)$ entry is $1$ if $i = j = 1$ and is $0$ otherwise, and then we observe:
\[
  \iota(a \tensor \id_{\K}) = \iota(a) \tensor 1_{\tensor} = \iota(a)
    = \iota(\id_{\K} \tensor a)
\]
so that the constant path on $\iota(a)$ functions as a left and right unitor.

We then take horizontal composition to be given by composition of linear maps
which is strictly associative and unital, with coherence following from that in
the category of vector spaces. The source and target functors are obvious --- we
send each linear map to its domain and codomain respectively. The unit functor
is also obvious --- we send each object to its identity linear map.

We compile these results into the following definition:
\begin{defn}[Double Category of Finite Dimensional Vector Spaces]
The following data form a monoidal double category:
\begin{enmrt}
\li Object category: $\FVect_{\K}$
\li Morphism category: $\s{L}$
\li Source functor: $\dom : (f : X \to Y) \mapsto X$
\li Target functor: $\codom : (f : X \to Y) \mapsto Y$
\li Unit functor: $V \mapsto \id_V$
\li Horizontal composition: $(g, f) \mapsto g \circ f$
\li Horizontal composition associator: constant path on
$\iota(a \circ b \circ c)$
\li Horizontal composition unitor: constant path on $\iota(\id_V)$, for a vector
space $V$
\li Monoidal product: $\tensor$ in appropriate contexts defined above
\li Monoidal unit: $\K$ for the object category and $\id_{\K}$ for the morphism
category
\li Monoidal associators: constant path on
$\iota(a) \tensor \iota(b) \tensor \iota(c)$ as an associator
\[
  a \tensor (b \tensor c) \to (a \tensor b) \tensor c
\]
\li Monoidal unitor: constant path on $\iota(\id_{\K}) = 1_{\tensor}$
\end{enmrt}
This is called the monoidal double category of finite dimensional $\K$--vector
spaces and is denoted $\FFVect_{\K}$.
\end{defn}

We finally note that if the choice of matrix representations poses foundational
problems, we can easily switch to the skeleton of $\FVect_{\K}$ consisting of
the spaces $\K^n$ for all $n \in \N$.

\subsection{Thick Tangles and Transport Graphs}

Having established a codomain double category, we look towards extending our
notion of TQFTs based on transport graphs in a single manifold to transport
graphs in cobordisms equipped with connections. For simplicity, we only consider
thick tangles equipped with connections, at the moment. Recall that a thick
tangle $X \to Y$ is a smooth surfaces $M$ with boundary $W_0 \amalg W_1$ with
equipped with smooth maps $a_M : X \to M, b_M : Y \to M$ such that $a_M$ and
$b_M$ are diffeomorphisms onto $W_0$ and $W_1$ respectively, and an embedding
$d_M : M \to \R \times [0, 1]$ (satisfying some additional properties, which we
will not need at the moment).

We consider the generating thick tangles equipped with transport graphs.
The following examples of transport graphs in the pair-of-pants, cap and their
duals gives us an idea of the structures we are dealing with:
\[
\begin{tikzpicture}[scale=0.55]
\pants{0, 0}
\lblvert{0, -4}{pants}{\footnotesize pair-of-pants}
\colvert{blue}{-2, 6}{a1}
\lblvert{-2.85, 6}{a1l}{\footnotesize $a_1$}
\colvert{blue}{-2, 4}{a2}
\lblvert{-2.85, 4}{a2l}{\footnotesize $a_2$}
\colvert{blue}{2, 5}{a3}
\lblvert{2.85, 5}{a3l}{\footnotesize $a_3$}
\midarrow{a1}{a3}
\midarrow{a2}{a3}
\colvert{blue}{-1.55, 1.65}{a1p}
\lblvert{-1.55, 2.1}{a1pl}{\footnotesize $a_1$}
\colvert{blue}{1.5, -0.75}{a3p1}
\lblvert{1.5, -0.3}{a3p1l}{\footnotesize $a_3$}
\midarrowc{a1p}{0, 1}{0, -1}{a3p1};
\colvert{blue}{-1.25, -1.95}{a2p}
\lblvert{-1.25, -1.45}{a2pl}{\footnotesize $a_2$}
\colvert{blue}{1.25, 0.35}{a3p2}
\lblvert{1.25, 0.75}{a3p2l}{\footnotesize $a_3$}
\midarrowc[0.35]{a2p}{0, -2}{1, 0.25}{a3p2}
\end{tikzpicture}
\qquad
\begin{tikzpicture}[scale=0.55]
\capcob{0, 0}
\lblvert{-1.5, -4}{caplbl}{\footnotesize cap}
\colvert{blue}{-2, 5}{a1}
\colvert{green!55!black}{-1, 5}{a2}
\midarrow{a1}{a2}
\colvert{blue}{-1.75, 0.75}{a1p}
\colvert{green!55!black}{-1.35, -0.45}{a2p}
\midarrowc{a1p}{-1.75, -0.05}{-1, 0}{a2p}
\end{tikzpicture}
\qquad
\begin{tikzpicture}[scale=0.55]
\cupcob{0, 0}
\lblvert{1.5, -4}{caplbl}{\footnotesize cup}
\colvert{green!55!black}{1, 5}{a1}
\colvert{blue}{2, 5}{a2}
\midarrow{a1}{a2}
\colvert{green!55!black}{1.75, 0.75}{a1p}
\colvert{blue}{1.75, -0.75}{a2p}
\midarrowc{a1p}{1.1, 0}{1.1, 0}{a2p}
\end{tikzpicture}
\qquad
\begin{tikzpicture}[scale=0.55]
\copants{0, 0}
\lblvert{0, -4}{copants}{\footnotesize co-pair-of-pants}
\colvert{blue}{2, 6}{a1}
\lblvert{2.85, 6}{a1l}{\footnotesize $a_1$}
\colvert{blue}{2, 4}{a2}
\lblvert{2.85, 4}{a2l}{\footnotesize $a_2$}
\colvert{blue}{-2, 5}{a3}
\lblvert{-2.85, 5}{a3l}{\footnotesize $a_3$}
\midarrow{a3}{a1}
\midarrow{a3}{a2}
\colvert{blue}{-1.25, 0}{a3p}
\lblvert{-1.25, 0.45}{a3pl}{\footnotesize $a_3$}
\colvert{blue}{1.5, 1.75}{a1p}
\lblvert{1.5, 2.25}{a1pl}{\footnotesize $a_1$}
\midarrowc{a3p}{0, 0}{-0.5, 1}{a1p}
\colvert{blue}{1.5, -1.75}{a2p}
\lblvert{1.5, -2.25}{a2pl}{\footnotesize $a_2$}
\midarrowc{a3p}{0, 0}{-0.5, -1}{a2p}
\end{tikzpicture}
\]
We recall that edges that share an end-point need not map to paths that share an
end-point, as we see in the left diagram. At the same time, paths are allowed to
intersect. We also note that we have chosen the pretransport graphs so as to
match their sources and targets with the sources and targets of their realizing
cobordisms. We now turn our attention to the cylinder. We observe that the
cylinder is a cobordism $I \to I$. On the transport graph side, morphisms from a
single blue vertex to another can be any path with blue end-points including the
path consisting of a single blue vertex. However, the gluing unit for the single
blue vertex, on either side, is the single blue vertex itself. Hence, we will
consider the following transport graphs in the cylinder:
\[
\begin{tikzpicture}[scale=0.55]
\idcob{0, 0}
\colvert{blue}{0, 2}{a}
\lblvert{0, -2}{lbl}{\footnotesize cylinder without paths}
\end{tikzpicture}
\qquad \qquad
\begin{tikzpicture}[scale=0.55]
\idcob{0, 0}
\lblvert{0, -2}{lbl}{\footnotesize cylinder with paths}
\colvert{blue}{-2, 2}{a1}
\colvert{blue}{2, 2}{a2}
\midarrow{a1}{a2}
\colvert{blue}{-1.5, -0.15}{a1p}
\colvert{blue}{1.5, 0.15}{a2p}
\midarrowc{a1p}{0, 0.45}{0, -0.45}{a2p}
\end{tikzpicture}
\]

\begin{exm}
We take the example from our first description of the double categorical
approach and adapt it to this setting. The following is one possible diagram:
\[\begin{tikzpicture}[scale=0.33]

\idcobup{1, 0}
\colvert{blue}{-0.5, -1.85}{s}
\colvert{blue}{2, 0.25}{t}
\midarrowc{s}{0, -2}{2, -1}{t}

\coordinate (cntr) at (1, 4);
\pants{cntr}
\colvert{blue}{$(cntr) + (-1.5, 2)$}{s1}
\colvert{blue}{$(cntr) + (-1.5, -2)$}{s2}
\colvert{blue}{$(cntr) + (1.25, 0)$}{t}
\midarrowc{s1}{$(cntr) + (0, 2)$}{$(cntr) + (0, 0)$}{t}
\midarrowc{s2}{$(cntr) + (0, 1)$}{$(cntr) + (0, -1)$}{t}

\coordinate (cntr) at (5, 2);
\pants{cntr}
\colvert{blue}{$(cntr) + (-1.5, 2)$}{s1}
\colvert{blue}{$(cntr) + (-1.5, -2)$}{s2}
\colvert{blue}{$(cntr) + (1.25, 0)$}{t}
\midarrowc{s1}{$(cntr) + (0, 2)$}{$(cntr) + (0, 0)$}{t}
\midarrowc{s2}{$(cntr) + (0, 1)$}{$(cntr) + (0, -1)$}{t}

\coordinate (cntr) at (9, 2);
\begin{scope}[rotate around={180:(cntr)}]
\pants{cntr}
\colvert{blue}{$(cntr) + (-1.5, 2)$}{s1}
\colvert{blue}{$(cntr) + (-1.5, -2)$}{s2}
\colvert{blue}{$(cntr) + (1.25, 0)$}{t}
\midarrowc{s1}{$(cntr) + (0, 2)$}{$(cntr) + (0, 0)$}{t}
\midarrowc{s2}{$(cntr) + (0, 1)$}{$(cntr) + (0, -1)$}{t}
\end{scope}

\coordinate (cntr) at (13, 2);
\pants{cntr}
\colvert{blue}{$(cntr) + (-1.5, 2)$}{s1}
\colvert{blue}{$(cntr) + (-1.5, -2)$}{s2}
\colvert{blue}{$(cntr) + (1.25, 0)$}{t}
\midarrowc{s1}{$(cntr) + (0, 2)$}{$(cntr) + (0, 0)$}{t}
\midarrowc{s2}{$(cntr) + (0, 1)$}{$(cntr) + (0, -1)$}{t}

\coordinate (cntr) at (17, 2);
\begin{scope}[rotate around={180:(cntr)}]
\pants{cntr}
\colvert{blue}{$(cntr) + (-1.5, 2)$}{s1}
\colvert{blue}{$(cntr) + (-1.5, -2)$}{s2}
\colvert{blue}{$(cntr) + (1.25, 0)$}{t}
\midarrowc{s1}{$(cntr) + (0, 2)$}{$(cntr) + (0, 0)$}{t}
\midarrowc{s2}{$(cntr) + (0, 1)$}{$(cntr) + (0, -1)$}{t}
\end{scope}

\coordinate (cntr) at (21, 2);
\pants{cntr}
\colvert{blue}{$(cntr) + (-1.5, 2)$}{s1}
\colvert{blue}{$(cntr) + (-1.5, -2)$}{s2}
\colvert{blue}{$(cntr) + (1.25, 0)$}{t}
\midarrowc{s1}{$(cntr) + (0, 2)$}{$(cntr) + (0, 0)$}{t}
\midarrowc{s2}{$(cntr) + (0, 1)$}{$(cntr) + (0, -1)$}{t}

\coordinate (cntr) at (25, 2);
\begin{scope}[rotate around={180:(cntr)}]
\pants{cntr}
\colvert{blue}{$(cntr) + (-1.5, 2)$}{s1}
\colvert{blue}{$(cntr) + (-1.5, -2)$}{s2}
\colvert{blue}{$(cntr) + (1.25, 0)$}{t}
\midarrowc{s1}{$(cntr) + (0, 2)$}{$(cntr) + (0, 0)$}{t}
\midarrowc{s2}{$(cntr) + (0, 1)$}{$(cntr) + (0, -1)$}{t}
\end{scope}

\end{tikzpicture}\]
\end{exm}

\begin{exm}
We need not consider transport graphs that match the pair-of-pants or the
cylinder (or their duals) exactly. For instance, we could consider the following
graph:
\[\begin{tikzpicture}[scale=0.5]
\coordinate (cntr) at (1, 4);
\pants{cntr}
\colvert{blue}{$(cntr) + (-1.5, 2)$}{s1}
\colvert{blue}{$(cntr) + (-1.5, -2)$}{s2}
\colvert{green!55!black}{$(cntr) + (0, 1)$}{t1}
\colvert{blue}{$(cntr) + (0, -1)$}{t2}
\colvert{blue}{$(cntr) + (1.5, -0.5)$}{t3}
\midarrowc{s1}{$(cntr) + (0, 2)$}{$(cntr) + (-1, 1)$}{t1}
\midarrowc[0.33]{s2}{$(cntr) + (-0.25, -1)$}{$(cntr) + (0, 1)$}{t1}
\midarrowc{s1}{$(cntr) + (-0.5, 0)$}{$(cntr) + (-0.5, 2)$}{t2}
\midarrowc{s2}{$(cntr) + (-0.5, -2)$}{$(cntr) + (-0.5, -1)$}{t2}
\midarrowc{t1}{$(cntr) + (0.5, 0)$}{$(cntr) + (1, 0.5)$}{t3}
\midarrowc{t2}{$(cntr) + (0.5, 0.5)$}{$(cntr) + (1, -0.5)$}{t3}
\end{tikzpicture}\]
Notice that the source and target of the graph matches the source and target of
the pair-of-pants event though the graph is not a pair-of-pants graph. Also
notice that this graph has an internal green vertex.
\end{exm}

From these examples, we are motiviated to note the following definition.
\begin{defn}
Given a $2$--dimensional thick tangle, consider a transport graph in
the tangle such that the source of the graph has the same number of vertices
as the number of boundary components of the source of the tangle and the
colouring and ordering of the source of the graph is such that a blue vertex
corresponds to an interval boundary component and a green vertex corresponds to
an empty boundary component. We further assume that the analogous statement
holds for the targets of the graph and the tangle. We then call the given
transport graph admissible for the given tangle.
\end{defn}

An immediate corollary of this definition is:
\begin{cor}
Admissible transport graphs in gluable thick tangles are gluable.
\end{cor}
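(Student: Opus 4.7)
The plan is to unpack the definition of admissibility on each side of the gluing and observe that it forces the source of the second graph and the target of the first to agree as pretransport graphs, so that the general gluing of transport graphs from the previous subsection applies. Let $T_1 : X \to Y$ and $T_2 : Y \to Z$ be gluable thick tangles, and let $(G_1, \gamma^1)$ and $(G_2, \gamma^2)$ be admissible transport graphs in $T_1$ and $T_2$ respectively. What needs to be produced is a unique pretransport isomorphism $\psi : S(G_2) \to T(G_1)$, together with a smooth gluing of the realizing manifolds compatible with the geometric realizations.

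First I would read off the common boundary. Since $T_1$ and $T_2$ are gluable at $Y$, the target boundary of $T_1$ and the source boundary of $T_2$ are the same finite disjoint union of intervals and empty manifolds, with a fixed ordering inherited from $Y \subset \R \times \set{0, 1}$. By admissibility applied to $T_1$, the target $T(G_1)$ has one vertex for each boundary component of $Y$, coloured blue or green according to whether the component is an interval or empty, and ordered to match $Y$. Admissibility applied to $T_2$ gives the same description for $S(G_2)$. Thus $T(G_1)$ and $S(G_2)$ are totally ordered, $2$--coloured finite sets with identical sequences of colours, so there is a unique order- and colour-preserving bijection between them; this is the required $\psi = \psi_{G_1, G_2}$, and its uniqueness is the content of Corollary \ref{cor:expiso_unique} specialized to pretransport isomorphisms of edgeless graphs.

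Second, the smooth gluing of $T_1$ and $T_2$ as cobordisms provides a smoothly glued manifold $T_2 * T_1$ in which the realizing manifolds of the two transport graphs sit as the two pieces of the standard cobordism gluing. This is exactly the hypothesis needed to invoke the corollary that produces $\gamma^2 * \gamma^1 : E(G_2 * G_1) \to C^0(I, T_2 * T_1)$ piecewise from $\gamma^1$ and $\gamma^2$. Combined with $\psi$ from the previous step, this shows that $(G_2, \gamma^2)$ and $(G_1, \gamma^1)$ satisfy the definition of gluability of transport graphs, so $(G_2, \gamma^2) * (G_1, \gamma^1)$ is defined.

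The only mildly delicate point, and the place where one must be careful rather than where real difficulty arises, is to confirm that the two notions of ordering on $Y$ --- the one transmitted through $T_1$ to $T(G_1)$ and the one transmitted through $T_2$ to $S(G_2)$ --- are literally the same, rather than merely isomorphic. This is forced by the definition of gluability of thick tangles via the embeddings $d_{T_i} : T_i \to \R \times [0, 1]$, which identify the relevant boundary orderings on the nose. Once this is observed, everything else is a direct appeal to the definitions and to Corollary \ref{cor:expiso_unique}, with no further computation required.
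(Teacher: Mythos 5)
Your proof is correct and matches what the paper intends: the paper labels this an ``immediate corollary'' of the definition of admissibility and supplies no argument at all, and your unpacking --- that admissibility forces $T(G_1)$ and $S(G_2)$ to carry the same ordered, coloured boundary data as the shared boundary $Y$, whence the unique (pre)transport isomorphism and the piecewise-glued realization --- is precisely the intended reasoning spelled out. One small remark: your ``mildly delicate point'' about the orderings being literally equal rather than merely isomorphic is not actually needed, since the gluing construction only asks for a unique pretransport isomorphism $S(G_2)\to T(G_1)$, which you already produced; equality of the orderings is stronger than required.
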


The examples we have seen so far are all admissible. We also show graphs that
are not admissible in the following example.

\begin{exm}
The following are not admissible transport graphs:
\[\begin{tikzpicture}[scale=0.33]

\coordinate (cntr) at (1, 4);
\pants{cntr}
\colvert{blue}{$(cntr) + (-1.5, 2)$}{s}
\colvert{blue}{$(cntr) + (1.5, 0)$}{t}
\midarrow{s}{t}
\lblvert{$(cntr) + (0, -5)$}{lbl}{\small not enough}
\lblvert{$(cntr) + (0, -6)$}{lbl}{\small source vertices}

\coordinate (cntr) at (11, 4);
\idcob{cntr}
\colvert{green!55!black}{$(cntr) + (-1.5, 0)$}{s}
\colvert{blue}{$(cntr) + (1.5, 0)$}{t}
\midarrow{s}{t}
\lblvert{$(cntr) + (0, -5)$}{lbl}{\small source should}
\lblvert{$(cntr) + (0, -6)$}{lbl}{\small be blue}

\coordinate (cntr) at (21, 4);
\copants{cntr}
\capcob{$(cntr) + (4, 2)$}
\colvert{blue}{$(cntr) + (-1.5, 0)$}{s}
\colvert{blue}{$(cntr) + (2.5, 2)$}{t1}
\colvert{blue}{$(cntr) + (1.5, -2)$}{t2}
\midarrow{s}{t1}
\midarrow{s}{t2}
\lblvert{$(cntr) + (0, -5)$}{lbl}{\small top traget}
\lblvert{$(cntr) + (0, -6)$}{lbl}{\small should be}
\lblvert{$(cntr) + (0, -7.25)$}{lbl}{\small green}

\end{tikzpicture}\]
\end{exm}

So far, we have only shown the diagrams of surfaces with paths in these examples
but what we need to work with are surfaces equipped with bundles, connections
and transport graphs. We will now construct a monoidal double category
consisting of these structures.

Consider the monoidal double category $\CConn^V_{\DThick}$ of gluable
connections on gluable $V$--fibred (smooth or complex) bundles on
$2$--dimensional thick tangles.
For each horizontal $1$--morphism (bundle with connection) in this category, we
take all admissible transport graphs in the base of the bundle such that all
paths consist only of points internal to the base space and away from the
boundary collar over which the bundle has been made trivial.

For each gluable bundle equipped with a gluable connection and an admissible
transport graph in this manner, we take its source to be the source of the
bundle in $\CConn^V_{\DThick}$ along with the source of the transport graph.
Targets are defined similarly. The gluing units are the disjoint unions of
cylinders of the following form shown before:
\[\begin{tikzpicture}[scale=0.55]
\idcob{0, 0}
\colvert{blue}{0, 2}{a}
\lblvert{0, -2}{lbl}{\footnotesize cylinder without paths}
\end{tikzpicture}\]

Naturally, the object category consists of the sources and targets of the
horizontal $1$--morphisms and transport isomorphisms between them that are also
connection isomorphisms --- similar to the monoidal double category $\TG(M)$ of
transport graphs in a manifold $M$ defined in subsection
\ref{subsec:sing_man_tqft}. The morphism category consists of the horizontal
$1$--morphisms along with transport isomorphisms that are also connection
isomorphisms. Finally, monoidal structure is given by disjoint union, as
expected.

Having the developed the basic idea of such a monoidal double category, we avoid
going into further detail because it does not provide any additional insight. We
simply note that the structure outlined here is a monoidal double category that
can serve as the domain for a notion of (double) functorial quantum field
theory built on parallel tranport calculi over cobordisms.
Hence, we end this subsection with the following definition.

\begin{defn}
The monoidal double category outlined above is called the double category of
transport graphs in $2$--dimensional thick tangles over $V$ and is
denoted $\TG\br{\CConn^V_{\DThick}}$.
\end{defn}

\subsection{Parallel Transport Calculus}

We are now equipped with all the machinery to define our desired modification
of topological quantum field theory.

\begin{defn}[Parallel Transport Calculus]
Let $A$ be a $\K$--algebra for $\K = \R$ or $\C$. Then, we define the data of a
monoidal double functor
\[
  F : \TG\br{\CConn^V_{\DThick}} \to \FFVect_{\K}
\]
Each horizontal $1$--morphism in the domain is identical to one in the
double category of transport graphs in a manifold. $F$ is thus defined on
horizontal $1$--morphisms identically to the functor in definition
\ref{defn:sing_man_tqft}.

We observe that each object can be reduced to a source or target of a
pretransport graph since, by definition the copies of $I$ and $\varnothing$ are
matched up with blue and green vertices respectively. This allows us to define
$F$ on objects identically to \ref{defn:sing_man_tqft} again. It is then easy to
see that the action of $F$ on vertical $1$--morphisms and $2$--morphisms can be
adapted similarly.

A monoidal double functor $F$ defined in this way is called a parallel transport
calculus (on $2$--dimensional thick tangles or $\DThick$).
\end{defn}

\begin{rmk}
The ending parenthetical remark in the previous definition suggests that we can
easily consider such parallel transport calculi over other cobordism categories
but we will not pursue this idea for now.
\end{rmk}

We have not yet discussed if enough useful elements of the algebra $A$ can be
accessed with a parallel transport calculus of this form.
After all, this was the original
issue with $1$--categorical TQFTs. We will not treat this issue in full in this
paper but we will note that the machinery we have developed so far puts no
serious restrictions on the bundles or connections we can choose over our
manifolds. What we mean by this is that given an arbitray bundle with a
connection, we can make it gluable by only modifying it in a small collar of the
boundary. The bundle and connection behave as usual over rest of the base
manifold. We hope that this will provide enough structure to ensure that enough
useful algebra elements become accessible with a parallel transport calculus.
Nevertheless, we will make this problem precise for future work.
One of the main questions to be answered here the following:

\begin{qstn}\label{qstn:elem_from_pt}
Given a manifold $M$ and some fixed element $a$ of a complex algebra
$A^{\tensor n}$, are there
\begin{enmrt}
\li an $A$--fibred complex bundle $E \to M$,
\li a complex linear connection $\nabla$ on $E$,
\li a transport graph $G$ in $M$ with $S(G)$ consisting of only green vertices,
$T(G)$ having $n$ blue vertices and with the paths in the geometric realization
of $G$ away from some small neighbourhood of $\partial M$,
\end{enmrt}
such that $a$ can be obtained as a linear map $\C \to A^{\tensor n}$ in
the process described in definition \ref{defn:sing_man_tqft} for horizontal
$1$--morphisms?
\end{qstn}

\begin{qstn}\label{qstn:map_from_pt}
Given a manifold $M$ and some linear map $f : V \to V$, are there
\begin{enmrt}
\li a $V$--fibred complex bundle $E \to M$
\li a complex linear connection $\nabla$ on $E$,
\li a transport graph $G$ in $M$ with $S(G)$ and $T(G)$ both consisting of only
blue vertices, with its paths away from some neighbourhood of $\partial M$, as
before,
\end{enmrt}
such that $f$ can be obtained in the process described in definition
\ref{defn:sing_man_tqft} for horizontal $1$--morphisms?
\end{qstn}

If we can answer these questions in the affirmative for an appreciable
collection of elements $a \in A$, then a parallel transport calculus gives us a
concrete way to perform computations involving the multiplication of $A$ and
automorphisms of $A$ and tensor products of these maps, using the geometry of
manifolds. Hence, we make the following definition:

\begin{defn}
If the answers to question \ref{qstn:elem_from_pt} (or \ref{qstn:map_from_pt})
is yes for some manifold $M$, then we say that $a$ (or $f$) is accessible from
$M$.
\end{defn}

If an element $a$ is accessible from a manifold $M$, it is in the image of a
parallel transport calculus on $M$.
If an element $a$ is accessible from a thick tangle
$M : \varnothing \to I^{\amalg n}$ where the chosen transport graph is
admissible, then $a$ is in the image of a parallel transport calculus on $\DThick$.

\begin{defn}
In the latter case, we say that $a$ is accessible from $\DThick$.
\end{defn}

After this, it is easy to see
that we can compute with accessible elements using the machinery of a parallel
transport calculus. This picture will become clearer as we treat quantum information
and computing in the next subsection.

\subsection{Quantum Computing with Parallel Transport}

Take $A$ to be the complex matrix algebra $\M_2(\C)$ of $2 \times 2$ complex
matrices with the usual multiplication. Let
$\mathcal{U} = \set{u_1, \dots, u_n}$ be a set
of single qubit quantum gates --- unitary matrices --- in $\M_2(\C)$ such that
each $u_k$ is accessible from $\DThick$. Let the thick tangle with a bundle,
a connection and an admissible transport graph that realizes the accessibility
of the $u_k$ be $M_k$, for each $k \in \set{1, \dots, n}$.

We recall that, in the simplest terms, a quantum circuit is a sequence of
composable complex linear unitary maps
$U_i : (\C^2)^{\tensor N} \to (\C^2)^{\tensor N}$, $i = 1, \dots, p$.
For simplicity, we assume that
each $U_i$ arises as a tensor product $\bigotimes_{j = 1}^{N} g_{i, j}$ for
gates $g_{i, j} \in \mathcal{U}$. Since each $g_{i, j}$ is accessible, we have
a $G_{i, j} \in \set[M_k]{k = 1, \dots, n}$ such that $F(G_{i, j}) = g_{i, j}$
for a parallel transport calculus $F$. Then, we have:
\[
 U := U_p \circ \cdots \circ U_1
  = \bigotimes_{j = 1}^{N} g_{p, j} \circ \cdots
    \circ \bigotimes_{j = 1}^{N} g_{1, j}
  = \bigotimes_{j = 1}^{N} (g_{p, j} \circ \cdots \circ g_{1, j})
\]
where $\circ$ is matrix multiplication (composition of linear maps). Consider
the following transport graph in the pair-of-pants:

\[\begin{tikzpicture}[scale=0.25]
\pants{1, 0}
\colvert{blue}{-1, 6}{s1}
\colvert{blue}{-1, 4}{s2}
\colvert{blue}{3, 5}{t}
\colvert{blue}{1, 0}{a}
\midarrow{s1}{t}
\midarrow{s2}{t}
\end{tikzpicture}\]
where each edge is geometrically realized as a constant path on a single point
--- shown as a blue dot --- in the pair-of-pants. This graph is admissible and
provides a binary operation on thick tangles with target $I$ in the obvious way.
We denote this operation as $\wedge$. It is then easy to see that
\[
  U = F\br{\coprod_{j = 1}^{N} G_{p, j} \wedge \cdots \wedge G_{1, j}}
\]
We should stress that $\wedge$ is not associative, even up to isomorphism, in
$\DThick$ but its image under $F$ is. The reason for failure of associativity is
that graphs do not have the smooth structure needed for associator isomorphisms
for $\wedge$.

This setup provides a very basic formalism for expressing quantum
circuits in the language of thick tangles equipped with bundles, connections and
transport graphs.
We note, however, that it is not clear what structure plays the role of
qubits or registers in this picture. One easy way to get around this is to
consider the following embedding of $\C^2$ into $\M_2(\C)$:
\[
  \bmat{a \\ b} \mapsto \bmat{a & 0 \\ b & 0}
\]
If we then require that the following matrices are accessible:
\[
  \bmat{1 & 0 \\ 0 & 0} \text{ and } \bmat{0 & 0 \\ 1 & 0}
\]
we can model classical inputs with thick tangles just like gates. Multiplying
inputs with circuits using the pair-of-pants then models the application of
the circuit to the input.

One issue with this approach is that there might be gates acting on more
than one qubit that are not elementary tensor products of single qubit gates.
For instance, the controlled not gate is one such example. In order to obtain
non-elementary tensors, we will require addition of vectors. To capture this
in the language of parallel transport calculi, we need a notion of addition for
cobordisms. We will return to this idea in the next subsection. For now, we
observe another approach to quantum computing using parallel transport calculi.

We now take the fibres of our bundles to be $\C^2$ --- the space where a single
qubit lives (recall \ref{rmk:any_vect_space}).
We view our previous collection of quantum gates
$g_{i, j} \in \M_2(\C)$ as a collection of linear maps
$g_{i, j} : \C^2 \to \C^2$ and assume that the $g_{i, j}$ are accessible from a
collection of $2$--dimensional thick
tangles $G'_{i, j} : I \to I$ (not $\varnothing \to I$, this time) such that
$F(G'_{i, j}) = g_{i, j}$. Then, the quantum circuit $U$ can be expressed as:
\[
  U = F\br{\coprod_{j = 1}^{N} G'_{p, j} * \cdots
           * \coprod_{j = 1}^{N} G'_{1, j}}
\]
recalling that $*$ is gluing of thick tangles. Let $G$ be the input to $F$ in
the above equation.

Notice that single qubit inputs are linear maps $\C \to \C^2$. Hence, we now
assume that linear maps
\[
  \ket{0} = z \mapsto z\bmat{1 \\ 0} \text{ and }
  \ket{1} = z \mapsto z \bmat{0 \\ 1}
\]
are accessible. That is, a quantum register expressed as thick tangles is a
disjoint union of thick tangles (of course, equipped with transport graphs)
$\mathbf{0}, \mathbf{1} : \varnothing \to I$ realizing the accessibility of
$\ket{0}$ and $\ket{1}$ respectively. In this case, we will have
$F(\mathbf{0}) = \ket{0}$ and $F(\mathbf{1}) = \ket{1}$.

It is then easy to see that an application of a quantum circuit to a quantum
register is given by composition of thick tangles. That is, let $R$ be a thick
tangle formed from the disjoint union of a sequence of $\mathbf{0}$ and
$\mathbf{1}$. This represents the input register. We can then glue $R$ on the
source end of $G$ to obtain a thick tangle $H$. Then,
$F(H) = F(G * R) = F(G) \circ F(R)$ is the result of giving the circuit
$F(G)$ the input from the register $F(R)$.

\subsection{Addition of Cobordisms}

We will now to make precise an addition operation for $2$--dimensional thick tangles
mentioned in the previous subsection, so that non-elementary tensors become
accessible with parallel transport calculi.

Consider $2$--dimensional thick tangles $M : I^{\amalg n} \to I^{\amalg m}$ and
$N : I^{\amalg n'} \to I^{\amalg m'}$. Suppose $S_1 \in \set{S(M), S(N)}$ is the
source of either $M$ or $N$ with the most copies of $I$ and $S_0$ is that with
the least copies of $I$. $T_0, T_1 \in \set{T(M), T(N)}$ are defined
analogously. That is, $S_0 = I^{\amalg \min\set{m, m'}}$,
$T_0 = I^{\amalg \min\set{n, n'}}$, $S_1 = I^{\amalg \max\set{m, m'}}$ and
$T_1 = I^{\amalg \max\set{n, n'}}$. Then, $M + N$ is defined to be the shape
obtained by gluing $S_0$ to the first $\min\set{m, m'}$ copies of $I$ in
$S_1$ and $T_0$ to the first $\min\set{n, n'}$ copies of $I$ in $T_1$.

For instance, let $M$ be the pair-of-pants and $N$ the cylinder $I \times I$.
Then, pictorially, $M + N$ would look like:

\begin{figure}[H]\label{fig:cobaddexm}
\begin{tikzpicture}[x=0.75pt,y=0.75pt,yscale=-1,xscale=1]
\draw [color={rgb, 255:red, 208; green, 2; blue, 27 }  ,draw opacity=1 ]
(10,62.25) .. controls (12.6,66.54) and (25,68.8) .. (30,70) ;
\draw [color={rgb, 255:red, 208; green, 2; blue, 27 }  ,draw opacity=1 ]
(10,80) .. controls (16.2,87.6) and (23,89.6) .. (30,90) ;
\draw [color={rgb, 255:red, 208; green, 2; blue, 27 }  ,draw opacity=1 ]
(30,70) .. controls (41.8,72.8) and (40.6,90) .. (30,90) ;
\draw (10,40.25) -- (10,62.25) ;
\draw (10,80) -- (10,100) ;
\draw [color={rgb, 255:red, 208; green, 2; blue, 27 }  ,draw opacity=1 ]
(10,100) .. controls (30.6,116.7) and (120.2,96) .. (120,80) ;
\draw [color={rgb, 255:red, 65; green, 117; blue, 5 }  ,draw opacity=1 ]
(10,40.25) .. controls (11,23.2) and (119.8,46.8) .. (120,60) ;
\draw [color={rgb, 255:red, 65; green, 117; blue, 5 }  ,draw opacity=1 ]
[dash pattern={on 4.5pt off 4.5pt}]  (10,62.25) .. controls (9.8,50) and
(119.8,74.4) .. (120,80) ;
\draw    (120,60) -- (120,80) ;
\draw [color={rgb, 255:red, 208; green, 2; blue, 27 }  ,draw opacity=1 ]
(10,40.25) .. controls (10.2,57.6) and (120.6,71.6) .. (120,60) ;
\end{tikzpicture}
\end{figure}
We note that this operation is distinct from both the disjoint union and the
gluing of thick tangles end-to-end. In particular, the results of this operation
can not, in general, be embeded into the infinite strip $\R \times I$. However,
we observe that we can unambiguously define the sources and targets of these
shapes as sources $S_1$ and $T_1$ respectively. We also observe that when
the domains and codomains of the summands are the same, the picture of a sum is
simply a diagram in $\DThick$ involving parallel cobordisms. This motivates us
to name these shapes as follows.

\begin{defn}
For thick tangles $M, N, S_0, T_0, S_1, T_1$ as above, $M + N$ is called a
multitangle $S_1 \to T_1$. In particular, we take the empty manifold to be
the empty sum --- a multitangle between any pair of objects in $\DThick$.
\end{defn}

\begin{rmk}
We observe that this definition carries over as is to transport graphs, bundles
and connections, even though the results of this kind of addition need not be
transport graphs, bundles or connections. In fact, if we take sums of three
cobordisms, then the result is not a manifold, in the usual sense because not
point in the source or target ends has a neighbourhood hormeomorphic to a
Euclidean open set or half-plane. Nevertheless, we will carry on with this
construction, being aware that we might start to lose some of the double
categorical structures.
\end{rmk}

However, we will observe that many of the useful structures in $\DThick$ are not
disturbed by this operation. We first define the starting data of yet another
double category:

\begin{enmrt}
\li Object category: same as $\TG\br{\CConn^V_{\DThick}}$
\li Morphism category: objects are sums of objects in
$\TG\br{\CConn^V_{\DThick}}$, including the empty sum, and morphisms are
piecewise isomorphisms --- that is, they are tuples of $2$--morphisms in
$\TG\br{\CConn^V_{\DThick}}$, one for each summand
\end{enmrt}

We would like to make our structure resemble an additive category. For this, we
will modify the definition of end-to-end gluing or horizontal composition so as
to make it $\Z$--bilinear as follows. First, we will reduce diagrams involving
multibordisms into line diagrams of the following form, for simplicity:
\[\begin{tikzpicture}
\colvert{black}{1, 0}{s}
\colvert{black}{3, 0}{t}
\path
  (s) edge[bend left] (t)
  (s) edge (t)
  (s) edge[bend right] (t)
  ;
\end{tikzpicture}\]
where we shrink sources and targets to single vertices. Now, suppose that we
have thick tangles $M_i : X \to Y, i \in \set{1, \dots, m}$ and
$N_j : Y \to Z, j \in \set{1, \dots, n}$. Then, we can define
\[
  \sum_{j = 1}^n N_j * \sum_{i = 1}^m M_i
  := \sum_{i = 1}^n \sum_{j = 1}^m N_j * M_i
\]
since the relevant composites all exist. For $m = 3, n =2$, in line diagrams,
this equation is:
\[\begin{tikzpicture}
\colvert{black}{1, 0}{X}
\colvert{black}{3, 0}{Y}
\lblvert{3.5, 0}{glue}{$\cdots$}
\colvert{black}{4, 0}{YY}
\colvert{black}{6, 0}{Z}
\path
  (X)  edge[bend left=70]   node[above]{$M_1$}   (Y)
  (X)  edge                 node[above]{$M_2$}   (Y)
  (X)  edge[bend right=70]  node[below]{$M_3$}   (Y)
  (YY) edge[bend left]      node[above]{$N_1$}   (Z)
  (YY) edge[bend right]     node[below]{$N_2$}   (Z)
  ;
\lblvert{6.5, 0}{equals}{$:=$}
\colvert{black}{7, 0}{XX}
\colvert{black}{11, 0}{ZZ}
\draw
  (XX) to[bend left=90]
  node[midway, draw=black, fill=black, shape=circle, inner sep=\vertinnersep]{}
  node[midway, above]{{\footnotesize $N_1 * M_1$}}
  (ZZ);
\draw
  (XX) to[bend left=35]
  node[midway, draw=black, fill=black, shape=circle, inner sep=\vertinnersep]{}
  node[midway, above]{{\footnotesize $N_1 * M_2$}}
  (ZZ);
\draw
  (XX) to[bend left=10]
  node[midway, draw=black, fill=black, shape=circle, inner sep=\vertinnersep]{}
  node[midway, above]{{\footnotesize $N_1 * M_3$}}
  (ZZ);
\draw
  (XX) to[bend right=10]
  node[midway, draw=black, fill=black, shape=circle, inner sep=\vertinnersep]{}
  node[midway, below]{{\footnotesize $N_2 * M_1$}}
  (ZZ);
\draw
  (XX) to[bend right=35]
  node[midway, draw=black, fill=black, shape=circle, inner sep=\vertinnersep]{}
  node[midway, below]{{\footnotesize $N_2 * M_2$}}
  (ZZ);
\draw
  (XX) to[bend right=90]
  node[midway, draw=black, fill=black, shape=circle, inner sep=\vertinnersep]{}
  node[midway, below]{{\footnotesize $N_2 * M_3$}}
  (ZZ);
\end{tikzpicture}\]
In other words, during horizontal composition of multitangles, we first undo
the gluing resulting from addition at the composition site, duplicate the
branches as needed and then perform the gluing for horizontal composition. This
new composition operation is easily seen to be associative up to piecewise
isomorphisms. Furthermore, we have
\[
  (N_1 + N_2) * M_1 = (N_1 * M_1) + (N_2 * M_1)
\]
In pictures:
\[\begin{tikzpicture}
\colvert{black}{1, 0}{X}
\colvert{black}{3, 0}{Y}
\lblvert{3.5, 0}{glue}{$\cdots$}
\colvert{black}{4, 0}{YY}
\colvert{black}{6, 0}{Z}
\path
  (X)  edge   node[above]{$M_1$}   (Y)
  (YY) edge[bend left]   node[above]{$N_1$}  (Z)
  (YY) edge[bend right]  node[below]{$N_2$} (Z)
  ;
\lblvert{6.5, 0}{equals}{$:=$}
\colvert{black}{7, 0}{XX}
\colvert{black}{9, 0.30}{YYY}
\colvert{black}{9, -0.30}{YYYY}
\colvert{black}{11, 0}{ZZ}
\draw (XX) .. controls (7.5, 0.25) and (8.5, 0.30) .. (YYY);
\draw (YYY) .. controls (9.5, 0.30) and (10.5, 0.25) .. (ZZ);
\draw (XX) .. controls (7.5, -0.25) and (8.5, -0.30) .. (YYYY);
\draw (YYYY) .. controls (9.5, -0.30) and (10.5, -0.25) .. (ZZ);

\lblvert{9, 0.55}{NM}{$N_1 * M_1$}
\lblvert{9, -0.55}{NNMM}{$N_2 * M_1$}
\end{tikzpicture}\]
If, in addition, $M_1 = Y \times I$, then we easily see that there is a
piecewise isomorphism $(N_1 + N_2) * M_1 \to N_1 + N_2$. It also easy to see for
arbitrarily many $N_1, \dots, N_n$ --- that is, horizontal
composition is right unital up to isomorphism. Left unitality is similar ---
consider $N_1 * (M_1 + M_2)$ and take $N_1 = Y \times I$. We should note that
the addition of (multi-)cobordisms does not have immediate inverses but it is
commutative up to piecewise isomorphism. Hence, the morphism category of
multitangles is an up-to-isomorphism commutative monoid.

So far, we have defined addition on horizontal $1$--morphisms. We will define
addition on objects to fit this picture. Given $I^{\amalg m}$ and
$I^{\amalg m'}$, we define
\[
  I^{\amalg m} + I^{\amalg m'} := I^{\amalg \max\set{m, m'}}
\]
Without strictly verifying (or even defining) axioms further, we propose that
multitangles form a structure akin to an additive double category with
monoidal structure (given by disjoint union) --- in some loose sense, at the very
least. We invite the reader to formulate this notion of double category with
fully defined axioms to make our structure fit the definition.

We then move on to show that with this structure we have a more robust notion of
parallel transport calculus capable of handling non-elementary tensors in the
context of quantum computing. To this end, we define:

\begin{defn}
We denote the ``additive monoidal double category'' of multitangles constructed
so far as $\TG^+(\CConn^V_{\DThick})$.
\end{defn}

\subsection{Quantum Computing Revisited}

Given a parallel transport calculus as defined so far, we can extend the domain of
$F$ to the structure $\TG^+\br{\CConn^V_{\DThick}}$ in an obvious way to obtain
a ``functor'' $F^+$ as follows. $F^+$ is identical to $F$ on the object category
--- there are no issues here since the object category was not modified in
constructing $\TG^+\br{\CConn^V_{\DThick}}$. A horizontal $1$--morphism in this
``double category'', however, is of the form
\[
  M = M_1 + \cdots + M_k
\]
for horizontal $1$--morphisms $M_i, i \in \set{1, \dots, k}$ in
$\TG\br{\CConn^V_{\DThick}}$. $F^+(M)$ is defined to be
\[
  F^+(M) := F(M_1) + \cdots + F(M_k)
\]
where the addition on the right is not well-defined as is. To define this we
observe that each $F^+(M_i)$ is a linear map
$F(I)^{\tensor n_i} \to F(I)^{\tensor m_i}$. Then, we choose a sensible
embedding of each $F(I)^{\tensor n_i}$ in $F(I)^{\tensor \max_i n_i}$ and of
each $F(I)^{\tensor m_i}$ in $F(I)^{\tensor \max_i m_i}$. After this, the
addition is taken within the space of linear maps
$F(I)^{\tensor \max_i n_i} \to F(I)^{\tensor \max_i m_i}$. We also note that if
$M = \varnothing$, then we define:
\[
  F(M)(x) := 0, \forall x
\]

\begin{defn}
An additive parallel transport calculus is an ``additive monoidal double functor''
\[
  F : \TG^+(\CConn^V_{\DThick}) \to \FFVect_{\C}
\]
defined using the above construction.
Accessibility is defined similarly for additive parallel transport calculi.
\end{defn}

Given a collection of $1$--qubit gates, we can take sums of tensor products of
these gates to obtain multi-qubit gates that are not elementary tensors.  Thus,
if we can solve the accessibility problem for $1$--qubit gates in the first
sense of parallel transport calculi, we can express sums of tensor products of
these gates using multitangles. Thus, we have a concrete way to express both
quantum registers and circuits using structures in $\TG^+(\CConn^V_{\DThick})$
which finally yield usual linear algebraic quantum registers and circuits under
additive parallel transport calculi. In fact, both approaches to quantum
computing discussed before can be adapted to this framework. It is then also
easy to adapt this notion back to the single manifold case of
\ref{subsec:sing_man_tqft}.

\subsection{Connections to Operads and PROPs}

Our goal has been to present a framework for quantum
computing in the language of TQFTs and of geometric structures supported on them.
Hence, our constructions are mostly based on combinatorial and geometric data
that can be associated with cobordisms. Nevertheless, one cannot help but notice
the similarity of transport graphs and, more visibly, cobordisms themselves with
operads. As a start, consider the operad of rooted trees whose non-root vertices
are all leaves and where composition is given by gluing the root of a tree with
one of the leaves of another tree \cite{WhatOp}. Of course, for each leaf, we
get a different composite which is different from gluing the operation for
cobordisms, at first sight. An example is shown below, where the subscript of
the composition sign indicates the leaf chosen for the gluing.
\[\begin{tikzpicture}

\colvert{black}{-2, 0}{a}
\colvert{black}{-3, 0.5}{a1};
\colvert{black}{-3, 0}{a2};
\colvert{black}{-3, -0.5}{a3};
\draw (a1) -- (a);
\draw (a2) -- (a);
\draw (a3) -- (a);

\lblvert{-1.5, 0}{comp}{$\circ_2$}

\colvert{black}{0, 0}{a}
\colvert{black}{-1, 0.25}{a1};
\colvert{black}{-1, -0.25}{a2};
\draw (a1) -- (a);
\draw (a2) -- (a);

\lblvert{0.5, 0}{eq}{$=$}

\colvert{black}{3, 0}{a}
\colvert{black}{2, 0.5}{a1};
\colvert{black}{2, 0}{a2};
\colvert{black}{2, -0.5}{a3};
\draw (a1) -- (a);
\draw (a2) -- (a);
\draw (a3) -- (a);

\colvert{black}{1, 0.25}{a4};
\colvert{black}{1, -0.25}{a5};
\draw (a4) -- (a2);
\draw (a5) -- (a2);

\end{tikzpicture}
\]

However, we notice that the difference of this situation with transport graphs
or cobordisms is artificial for we could define composition operations for
transport graphs and cobordisms parametrized by their inputs and outputs similar
to the case of operads. Note, however, that we need to handle the gluing of
multiple outputs to inputs in various combinations. For this, a relevant
modification of operads is as follows:

\begin{defn}[PROP]
Let $\s{C}$ be any (not necessarily symmetric) monoidal category and
$P = \set{P(n, m)}_{n, m \in \N}$ be a
collection of objects of $\s{C}$. For any $n \in \N$ and $1 \leq i \leq n$,
let $I = \set{i, i + 1, \dots, i + m} \subset \set{1, \dots, n}$. For each
such $n$ and $I$ as well as some $n' \in \N$, suppose there is a morphism in
$\s{C}$ as follows, called composition at $I$:
\[
  \circ_I : P(n, m) \tensor P(n', n) \to P(n', m)
\]
For each $n, m, p, q$, suppose there is a morphism in $\s{C}$ as follows,
called tensor product:
\[
  \square : P(n, m) \otimes P(q, p) \to P(n + q, m + p)
\]
Then, with some coherence conditions we will not describe, we will call
$P$ a PROP in $\s{C}$.
\end{defn}

\begin{rmk}
This is an instance of a much more intricate structure called a pasting scheme
\cite{JY15} along the lines of the first formulations of a PACT given in
\cite{MacLane65}. We also note that PROP is an abbreviation of ``Products and
Permuations Category''.
\end{rmk}

\begin{exm}
Consider the same diagram above giving an example of operads of trees. This
diagram is an example of a (pre)transport graph if we direct the edges and
colour the vertices! Take $\s{C}$ to be the monoidal category of transport
graphs and transport homomorphisms and composition to be gluing of targets of
transport graphs to a contiguous subset of the source of another
transport graph.
We note that the only difference with the definition given
above with an operad is that the composition operation
here is parametrized by a ``segment'' of the gluing site as opposed to a single
input, as is the case with operads. We give another example below to clarify the
difference:
\[\begin{tikzpicture}

\colvert{green!55!black}{1, 2.5}{c}
\colvert{blue!55!black}{1, 1.75}{cc}
\colvert{blue!55!black}{0, 3.25}{c1}
\colvert{green!55!black}{0, 2.75}{c2}
\colvert{blue!55!black}{0, 2.25}{c3}
\colvert{blue!55!black}{0, 1.75}{c4}
\midarrow{c1}{c}
\midarrow{c2}{c}
\midarrow{c3}{c}
\midarrow{c4}{c}
\midarrow{c4}{cc}

\lblvert{2, 2.5}{comp}{$\circ_{\set{2, 3}}$}

\colvert{green!55!black}{4, 3}{a}
\colvert{blue!55!black}{3, 3.5}{a1}
\colvert{blue!55!black}{3, 3}{a2}
\colvert{blue!55!black}{3, 2.5}{a3}
\midarrow{a1}{a}
\midarrow{a2}{a}
\midarrow{a3}{a}

\colvert{blue!55!black}{4, 2}{b}
\colvert{green!55!black}{3, 2}{b1}
\colvert{blue!55!black}{3, 1.5}{b2}
\midarrow{b1}{b}
\midarrow{b2}{b}
\midarrow{a3}{b}

\lblvert{5, 2.5}{comp}{$=$}

\colvert{green!55!black}{8, 2.5}{c}
\colvert{blue!55!black}{8, 1.75}{cc}
\colvert{blue!55!black}{7, 3.25}{c1}
\colvert{green!55!black}{7, 2.75}{c2}
\colvert{blue!55!black}{7, 2.25}{c3}
\colvert{blue!55!black}{7, 1.75}{c4}
\midarrow{c1}{c}
\midarrow{c2}{c}
\midarrow{c3}{c}
\midarrow{c4}{c}
\midarrow{c4}{cc}

\colvert{blue!55!black}{6, 3.5}{a1}
\colvert{blue!55!black}{6, 3}{a2}
\colvert{blue!55!black}{6, 2.5}{a3}
\colvert{green!55!black}{6, 2}{b1}
\colvert{blue!55!black}{6, 1.5}{b2}
\midarrow{a1}{c2}
\midarrow{a2}{c2}
\midarrow{a3}{c2}
\midarrow{b1}{c3}
\midarrow{b2}{c3}
\midarrow{a3}{c3}

\end{tikzpicture}\]
Notice that the composition glues the output vertices of the right operand
with the second and third vertices of the left operand, as indicated by
$\circ_{\set{2, 3}}$.
In this example, we also considered coloured vertices and hence what we really
require is a coloured variant of PROPs, in the same spirit of introducing
colours to ordinary operads. Gluing different coloured
vertices requires some intermediary --- say, an arrow from green to blue --- but
the choice of this intermediary is not unique. Hence, it is easier to
require gluing of vertices of the same colour only.
\end{exm}

\begin{exm}
Of course, thick tangles admit the same structure. Given the output boundary
components of one thick tangle, we can choose a matching collection of input
boundary components of another and glue accordingly.
\end{exm}

Our theory of transport graphs is neatly captured in the formalism of PROPs.
That is, transport graphs and transport homomorphisms loosely form a coloured
PROP arising from a the pasting scheme of wheel-free graphs
\cite[xxiii]{JY15}.

We end this section by speculating on some possible connections with other
operad--like structures that feature prominently in geometry, topology and
physics.
Note, in particular, the similarity with cyclic operads
\cite{ModOp}. Recall that a cyclic operad $P$ is roughly an operad with the
action of the symmetric group $S_n$ on $P(n)$ replaced by an action of
$S_{n + 1}$. This effectively conflates the ``inputs'' and ``outputs'' of an
operad ``element''. Here, however, we take a different approach ---
simply allow multiple inputs and multiple outputs. The other difference is that
a PROP should not require symmetry in a monoidal category for our
categories of thick tangles and transport graphs are not equipped with symmetry.

Next, we comment on a possible connection to modular operads \cite{ModOp}.
A modular operad is roughly a cyclic operad which allows for the gluing of
``inputs'' and ``outputs'' of the same object \cite{Giansiracusa}.
This formalism is not present in our setting, but we can try to introduce
something similar, although in a non-unique way. Consider the simple example
below. Here, we wish to glue the first and only output to the first input, say.
\[\begin{tikzpicture}
\colvert{green!55!black}{1, 3}{a}
\colvert{green!55!black}{0, 3.5}{a1}
\colvert{blue!55!black}{0, 3}{a2}
\colvert{blue!55!black}{0, 2.5}{a3}
\midarrow{a1}{a}
\midarrow{a2}{a}
\midarrow{a3}{a}

\draw[->] (2, 3) to (3, 3);

\colvert{green!55!black}{5, 3}{a}
\colvert{blue!55!black}{4, 3}{a2}
\colvert{blue!55!black}{4, 2.5}{a3}
\midarrow{a2}{a}
\midarrow{a3}{a}

\draw[->] (a) edge[loop] node {} (a);

\end{tikzpicture}\]
If we were to do this directly, it would introduce a loop in the graph which
contradicts our condition on transport graphs. In more general situations, we
would introduce cycles. Introducing cycles in a graph upsets the level ordering
of our graphs and hence the algorithm for extracting linear maps by parallel
transport. To remedy this, we must modify our definition of transport
graph and our algorithm to be able to interpret and process cycles in a
satisfactory way.
Furthermore, it is not only interesting, but also necessary to fit the addition
of cobordisms to the operadic formalism for a full translation of our framework
to operad theory.

We observe that the gluing constructions that have worked so far for transport
graphs, also work for bundles with connection as well so that our framework for
quantum computing could be treated in this operadic formalism. However, making
all the details precise and handling the issues discussed in the previous
paragraph is beyond the scope of the current work. Nevertheless,
these preliminary observations hint at various possibilities.
Modular operads as developed by Getzler and Kapranov generalize
Kontsevich's graph complexes \cite{ModOp}. Perhaps, we can find an
interpretation of these graph complexes or of the graphs defined in
\cite{ModOp} as parallel transport machinery, leading
to a solution of the self-gluing problem for transport graphs.
It if of interest to fully flesh out such connections to
modular operads for this could lead to the
application of Chern-Simons theory to the quantum computing framework
developed in this paper. On the other hand, an approach to cobordism
theory using coloured operads \cite{CobOp} has been used to categorify $sl_n$
quantum invariants. The operadic approach to transport graphs on manifolds
with connections might be one way to relate categorification in
representation theory with quantum computing.

\section{Further Directions: Graphs, Categorification, and Hyperbolic Matter}

We note that it is also possible to reformulate $2$-dimensional TQFTs in terms of graph-theoretic data via (dual) ribbon graphs and edge-contraction operations on these graphs, as formulated in \cite{DM}.  While \cite{DM} is restricted to ordinary TQFTs and does not treat thick ones, the notion of transport graph in our work and those of ribbon and cell graphs in \cite{DM} bear a resemblance that is worth exploring in further detail.  In particular, the edge-contraction operations may carry meaning in terms of quantum information.

The point of view in our article can be regarded as a particular instance in a much larger program of \emph{categorification} in mathematics.  While we treat this as a general theme and mantra in our setting, we may speculate on concrete ways in which our program interacts with categorification in, say, geometric representation theory.  We observe that our transport graphs are, in fact, quivers and thus give rise to well-defined Nakajima quiver varieties in the sense of \cite{HN1}.  These quiver varieties carry categorical actions of Lie algebras, both at the level of $K$-theory \cite{HN2} and at a geometric level \cite{CKL}.  These actions, especially the fully geometric version, are in turn a potential source of gates.  Again, we leave such speculations for future work.

Finally, one may ask how the constructions in this paper might be realized in a physical system of qubits, or whether there is a particular type of qubit that is better suited to these constructions than others.  One candidate is provided by the synthetic hyperbolic lattices studied experimentally by Koll\'ar, Fitzpatrick, and Houck in \cite{KFH} and theoretically by Maciejko and the second-named author in \cite{MR1}.  The effective quantum electrodynamics of these materials takes place on a Riemann surface of genus $g\geq2$.  Mathematically, the curve arises from the quotient of the hyperbolic plane $\mathbb H$ by the discrete Fuchsian group $\Gamma\subset\mbox{PSL}(2)$ of translations of the lattice.  Physically, one engineers a finite lattice but the Riemann surface still arises by considering a normal subgroup of $\Gamma$.  The experiments of \cite{KFH} realize these lattices artificially by tuning resonators in successive rings of the device so as to mimic the Poincar\'e metric. These negatively-curved materials have a well-defined electronic band theory \cite{MR1} and a complete Bloch wave decomposition \cite{MR2} in terms of the representations (of all ranks) of the fundamental group of the surface, which makes it possible to replace the eigenvalue problem for the Laplace-Beltrami operator and a periodic potential with purely topological data coming from the surface.  One can therefore imagine simulating thick, tangled TQFTs --- and thereby manipulations of quantum information, via our correspondence --- through a physical circuit of hyperbolic lattices, with genus $0$ and $1$ sites given by ordinary topological lattices.  Another recent article \cite{KR} in the theme of hyperbolic lattices presents a web of speculations and correspondences involving hyperbolic band theory and quantum field theories.  Our proposal expands this web, in effect, to include quantum information.

\pagebreak


\bibliography{TQFT_QC}{}
\bibliographystyle{acm}

\end{document}